\newtheorem{theorem}{Theorem}
\newtheorem{definition}[theorem]{Definition}
\newtheorem{lemma}[theorem]{Lemma}
\newtheorem{proposition}[theorem]{Proposition}
\newtheorem{porism}[theorem]{Porism}
\newtheorem{fact}[theorem]{Fact}
\newcommand{\bigket}[1]{ {\big|{#1}\big\rangle} }
\newcommand{\CFT}{{\textsl{\tiny CFT}}}
\newcommand{\calH}{\mathcal{H}}
\newcommand{\calHconj}{\mathord{\,\overline{\!\mathcal{H}}}}
\DeclareMathOperator{\rank}{rank}
\DeclareMathOperator{\Tr}{Tr}
\definecolor{dkgreen}{rgb}{0,0.6,0}
\definecolor{purple}{rgb}{0.5,0,0.5}
\begin{document}

\title{Universal tripartite entanglement in one-dimensional many-body systems}% Force line breaks with \\
\date{\today}% It is always \today, today,
             %  but any date may be explicitly specified
\author{Yijian Zou}
\affiliation{Perimeter Institute for Theoretical Physics, Waterloo ON, N2L 2Y5, Canada}
\affiliation{University of Waterloo, Waterloo ON, N2L 3G1, Canada}
\affiliation{Sandbox@Alphabet, Mountain View, CA 94043, USA}
\author{Karthik Siva}
\affiliation{Department of Physics, University of California, Berkeley, CA 94720, USA}
\author{Tomohiro Soejima}%
\affiliation{Department of Physics, University of California, Berkeley, CA 94720, USA}
\author{Roger S. K. Mong}
\affiliation{Department of Physics and Astronomy, University of Pittsburgh, Pittsburgh, PA 15260, USA}
\author{Michael P. Zaletel}
\affiliation{Department of Physics, University of California, Berkeley, CA 94720, USA}
	\affiliation{Materials Sciences Division, Lawrence Berkeley National Laboratory, Berkeley, California 94720, USA
}

\begin{abstract}
Motivated by conjectures in holography relating the entanglement of purification and  reflected entropy to the entanglement wedge cross-section, we introduce two related non-negative measures of tripartite entanglement $g$ and $h$. We prove structure theorems which show that states with nonzero $g$ or $h$ have nontrivial tripartite entanglement. We then establish that in 1D these tripartite entanglement measures are universal quantities that depend only on the emergent low-energy theory. For a gapped system, we argue that either $g\neq 0$ and $h=0$ or $g=h=0$, depending on whether the ground state has long-range order. For a critical system, we develop a numerical algorithm for computing $g$ and $h$ from a lattice model.
We compute $g$ and $h$ for various CFTs and show that $h$ depends only on the central charge whereas $g$ depends on the whole operator content. 
\end{abstract}

\maketitle
Quantum entanglement has come to play a key role in our understanding of emergent phenomena in quantum many-body physics and modern numerical methods. 
Most attention has focused  on bipartite entanglement, e.g.\ properties of a pure state  on two parties $\ket{\psi}_{AB}$.
The entanglement entropy $S(A)$ is  the unique measure of bipartite entanglement because, up to reversible local operations and classical communication, the EPR pair is the unique form of bipartite entanglement. 
In contrast, a pure tripartite state $\ket{\psi}_{ABC}$ admits a large (presumably infinite) number of distinct forms of  entanglement, and consequently a variety of tripartite entanglement measures have been proposed \cite{walter2016multipartite}. But it remains relatively unexplored what universal features such measures might reveal about a  many-body system \cite{Audenaert2002,Marcovitch2009,Bayat2010,Bhattacharyya2019,Kudler-Flam2019,Kusuki2019,Bayat2017,Gray2019}.

Recently two tripartite entanglement measures, the entanglement of purification $E_P(A:B)$ \cite{Terhal2002} and the ``reflected entropy’’ $S_R(A:B)$  \cite{Dutta2019} have been applied to many-body physics within the context of holographic duality. 
As motivation, recall that the Ryu-Takayanagi formula equates the bipartite entanglement entropy of a boundary theory to the area of a minimal surface in its holographic dual \cite{Ryu2006}, a central result in the effort to relate the emergence of spacetime geometry to quantum entanglement.
It is then natural ask whether there are multi-partite entanglement measures which might also have a dual geometric interpretation. 
In Refs.~\cite{Takayanagi2017,Nguyen2018} it was conjectured  that the minimal cross section of the bulk ``entanglement wedge’’ joining two parties, $E_W(A:B)$, is dual to the entanglement of purification in the boundary, $E_P = E_W$. 
More recently, however, by developing a field-theoretic method for calculating $S_R$ in generic conformal field theories (CFTs), it was shown that $S_R = 2 E_W$ \cite{Dutta2019}.
In general $S_R \neq 2 E_P$, so one possible resolution is that their equality is a special property of holographic CFTs which is violated at subleading order in large-$N$ expansion~\footnote{It has also been argued that the logarithmic negativity is dual to $E_W$~\cite{Kusuki2019}, but $\mathcal{E}_N$ is not lower bounded by $I$, so we do not consider it here.}.
The gap between them, $2 E_P - S_R$, would then constitute an interesting entanglement measure of this violation. But investigating this discrepancy requires a method for computing these quantities in generic many-body systems.

\begin{figure}
    \centering
    \includegraphics[width=0.28\linewidth]{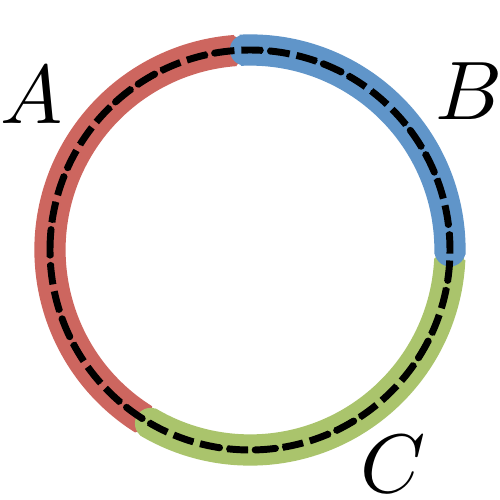}
    \hspace{0.6cm}
    \includegraphics[width=0.28\linewidth]{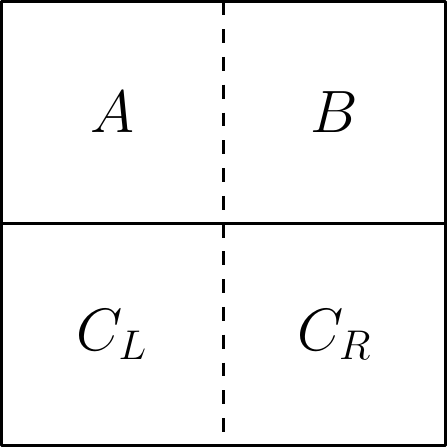}
    \caption{Left: A spin chain on a circle that is divided into three parties $A$, $B$, and $C$.
    Right: Geometry in the computation of $E_P(A:B)$. Region $C$ is divided into $C_L$ and $C_R$. The dashed line represents the entanglement cut between $AC_L$ and $BC_R$.}
    \label{fig:ep_setup}
\end{figure}

In this work we derive a method for computing $E_P$ and $S_R$ in 1D lattice models. To summarize our findings it is convenient to define UV-regularized version of these quantities \footnote{$2E_P(A:B),S_R(A:B)$ and $I(A:B)$ all scale logarithmically with the UV cutoff with the same coefficient $c/3$ in front, see Ref.~\onlinecite{Takayanagi2017,Nguyen2018,Dutta2019}.},  $g(A:B) \equiv 2 E_P(A:B) - I(A:B) \geq 0 $ and $h(A:B) \equiv S_R(A:B) - I(A:B) \geq 0$, where $I$ is the mutual information \footnote{Note that while the constituents are, $g, h$ are not monotonic under quantum operations on $A, B$.}.
For the tripartition of a ring shown in Fig.~\ref{fig:ep_setup}, holographic duality predicts that they take on the universal value $g = h = \frac{c}{3} \log(2)$, where $c$ is the central charge of the CFT%
   ~\footnote{In this computation, the regions $A, B$ are taken to touch and the UV divergences in $E_W, I$ are regulated by  a radial cutoff which is taken to zero after the subtraction $2 E_W - I$~\cite{Nguyen2018}. This prescription corresponds to the lattice regularization employed in our numerical results. An alternative procedure, in which the quantities are regularized by a small spacing between $A, B$, yields a different result \cite{Takayanagi2017}.}.
But what about in a generic lattice model?
As a limiting case, we start by proving structure theorems for states with $g, h = 0$ which imply that $h = 0$ if an only if a state is gapped ($c=0$), while $g = 0$ if and only if the system is gapped \emph{and} does not spontaneously break a symmetry.
We then develop a method for numerically computing $g, h$ from a lattice Hamiltonian on systems up to $N \sim 100$ sites. 
As expected, we find that $h = \frac{c}{3} \log 2$ is universal. However we find that $g \geq h$ and depends on the operator content of the CFT in addition $c$, yet is nevertheless completely universal. 
Thus $2 E_P - S_R = g - h$ constitutes a new and universal tripartite entanglement invariant of CFTs. 

\textit{$E_P$ and $S_R$} --- We first review the definitions of the entanglement of purification $E_P(A:B)$ and reflected entropy $S_R(A:B)$. Unlike the bipartite entanglement entropy, which is a function of a reduced density matrix on one party, these mixed state entanglement measures are functions of the reduced density matrix on two parties, $\rho_{AB}$, or equivalently its purification  $\ket{\psi}_{ABC}$, where $\rho_{AB}=\Tr_{C}\ket{\psi}\bra{\psi}$. 

The entanglement of purification $E_P(A:B)$ \cite{Terhal2002} is the minimum of the entanglement entropy $S_{AC_L}$ over all purifications $\ket{\phi}_{ABC_LC_R}$ of $\rho_{AB}$ to another pair of systems $C = C_L C_R$:
\begin{equation}
\label{eq:Ep_def}
    E_P(A:B) \equiv \min_{\ket{\phi}} S_{AC_L}\big( \ket{\phi}_{ABC_LC_R} \big).
\end{equation}
The partitions of the subsystems are depicted schematically in Fig.~\ref{fig:ep_setup}.
In principle the auxiliary space $C_LC_R$  can be arbitrary, but the minimal $S_{AC_L}$ can always be achieved with $\dim(\calH_{C_L}) ,\, \dim(\calH_{C_R}) \leq \rank(\rho_{AB})$.\cite{Ibinson2008} We may alternatively rephrase Eq.~\eqref{eq:Ep_def}  as a minimization over unitary operations $U_C$ restricted to $C_L C_R$ starting from an \emph{arbitrary} purification $\ket{\phi_0}_{ABC_LC_R}$ of sufficiently large dimension,
\begin{equation}
    E_P(A:B) = \min_{U_{C_L C_R}} S_{AC_L}\big( U_C\ket{\phi_0}_{ABC_LC_R} \big),
\end{equation}
which is the viewpoint taken in our numerical approach.

 $E_P$ is lower bounded by the mutual information~\cite{Terhal2002}, $E_P(A:B)\geq I(A:B) /2$, so we define a non-negative quantity 
\begin{equation}
\label{eq:g_def}
    g(A:B) \equiv 2 E_P(A:B)- I(A:B)\geq 0.
\end{equation}
The physical intuition behind this new quantity is that the subtraction of the mutual information removes correlations which are purely bipartite, as will be made more precise by the structure theorems below.

To define the reflected entropy $S_R(A:B)$, we instead pick a particular purification of $\rho_{AB}$ known as the canonical purification $\ket{\sqrt{\rho_{AB}}}$. It is defined as follows: we first take the unique non-negative square root of the reduced density matrix $\rho_{AB}$, and then regard the operator $\sqrt{\rho_{AB}}$ as a state $\ket{\sqrt{\rho_{AB}}} \in \calH_A\otimes \calH_B \otimes \calH^{*}_A \otimes \calH^{*}_{B}$. The reflected entropy $S_R(A:B)$ is defined as
\begin{equation}
\label{eq:SRdef}
    S_R(A:B)\equiv S_{AA^{*}}\big(\ket{\sqrt{\rho_{AB}}}\big).
\end{equation}
It is shown in Ref.~\onlinecite{Dutta2019} that $S_R(A:B) \geq I(A:B)$, so we define the nonnegative quantity
\begin{equation}
\label{eq:hdef}
    h(A:B) \equiv S_R(A:B)-I(A:B) \geq 0.
\end{equation}

In order to interpret the nature of the tripartite entanglement captured by these quantities, we derive ``structure theorems'' for states which saturate these lower bounds, i.e., states with $g=0$ or $h=0$.

\textit{States with $g(A:B)=0$} --- We first define a class of pure tripartite wavefunctions known as \emph{triangle states}.
\begin{definition}[Triangle State]
A state $\ket{\psi}_{ABC}$ is a triangle state if for each local Hilbert space there exists a bipartition $\calH_\alpha = \calH_{\alpha_L}\otimes \calH_{\alpha_R}$ ($\alpha=A,B,C$) such that 
\begin{align}
\ket{\psi}_{ABC}= \ket{\psi}_{A_RB_L}\ket{\psi}_{B_RC_L}\ket{\psi}_{C_RA_L} ,
    \label{eq:triangle}
\end{align}
where $\ket{\psi}_{\alpha_R\beta_L}$ are pure states in $\calH_{\alpha_R}\otimes\calH_{\beta_L}$.
\end{definition}
In other words, a triangle state can be obtained by pair-wise distributing bipartite-entangled states followed by local unitaries. In this sense, a triangle state lacks nontrivial tripartite entanglement. We prove the following theorem in the Supplemental Material (SM)~\cite{Supp, Haah2016}.
 
\begin{theorem}\label{thm:zero_g}
A state $\ket{\psi}_{ABC}$ is a triangle state up to local isometries if and only if $g(A:B)=0$.
\end{theorem}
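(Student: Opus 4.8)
The plan is to prove both directions separately, handling the ``if'' direction ($g=0$) by an explicit optimal purification and the ``only if'' direction by a saturation analysis of the bound $E_P(A:B)\ge I(A:B)/2$.

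For the direction ``triangle state $\Rightarrow g=0$'' I would first compute the reduced state. Tracing out $C=C_LC_R$ from Eq.~\eqref{eq:triangle} leaves the pure factor $\ket{\psi}_{A_RB_L}$ untouched while reducing the other two bonds to mixed states on a single party, giving $\rho_{AB}=\rho_{A_L}\otimes\ket{\psi}_{A_RB_L}\bra{\psi}_{A_RB_L}\otimes\rho_{B_R}$. A short evaluation of the three entropies then yields $I(A:B)=2S(\rho_{A_R})$. To bound $E_P$ from above I would exhibit a single good purification: purify $\rho_{A_L}$ with an ancilla placed in $C_L$ and $\rho_{B_R}$ with an ancilla placed in $C_R$. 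For this choice the factor $\ket{\psi}_{A_RB_L}$ is the only object straddling the $AC_L\,|\,BC_R$ cut, so $S_{AC_L}=S(\rho_{A_R})=I(A:B)/2$. Combined with the general lower bound this forces $E_P=I/2$, i.e.\ $g=0$.

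The substantive direction is ``$g=0\Rightarrow$ triangle state''. Here I would revisit the proof of $E_P\ge I/2$ itself: writing $2S_{AC_L}=S_{AC_L}+S_{BC_R}$ for the optimal purification and using $S_{AB}=S_{C_LC_R}$, the bound $2S_{AC_L}\ge I(A:B)$ follows from one application of strong subadditivity ($S_{AC_L}+S_{C_LC_R}\ge S_{C_L}+S_B$) together with one Araki--Lieb inequality ($S_{AC_L}+S_{C_L}\ge S_A$, using $S_{BC_R}=S_{AC_L}$). Since $g=0$ means this chain is saturated, \emph{both} steps must hold with equality for the optimal purification $\ket{\phi}_{ABC_LC_R}$. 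The key leverage is the Araki--Lieb equality: its structure theorem states that $S_{AC_L}=S_A-S_{C_L}$ forces, up to a local isometry on $A$, a factorization $\calH_A=\calH_{A_1}\otimes\calH_{A_2}$ with $\rho_{AC_L}=\ket{\omega}_{A_1C_L}\bra{\omega}\otimes\rho_{A_2}$ and $\ket{\omega}$ purifying $\rho_{C_L}$. Because $A_1C_L$ is then in a pure state it decouples, so $\ket{\phi}_{ABC_LC_R}=\ket{\omega}_{A_1C_L}\otimes\ket{\chi}_{A_2BC_R}$ and hence $\rho_{AB}=\rho_{A_1}\otimes\sigma_{A_2B}$. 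Feeding this back into $g=0$ turns the residual equality $2S(\rho_{A_2})=I(A_2:B)_\sigma$ into a \emph{second} Araki--Lieb saturation, now for $\sigma_{A_2B}$, which splits $\calH_B=\calH_{B_1}\otimes\calH_{B_2}$ and gives $\sigma_{A_2B}=\ket{\nu}_{A_2B_1}\bra{\nu}\otimes\kappa_{B_2}$. Relabelling $A_1,A_2,B_1,B_2\to A_L,A_R,B_L,B_R$ produces exactly the triangle form $\rho_{AB}=\rho_{A_L}\otimes\ket{\nu}_{A_RB_L}\bra{\nu}\otimes\kappa_{B_R}$.

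The final step is to pass from this form of $\rho_{AB}$ back to the statement about $\ket{\psi}_{ABC}$. Any purification of $\rho_{A_L}\otimes\ket{\nu}_{A_RB_L}\bra{\nu}\otimes\kappa_{B_R}$ must purify the two mixed blocks $\rho_{A_L}$ and $\kappa_{B_R}$ while leaving $\ket{\nu}_{A_RB_L}$ intact; choosing canonical purifications into ancillas $C_R,C_L$ reproduces $\ket{\psi}_{C_RA_L}\ket{\psi}_{A_RB_L}\ket{\psi}_{B_RC_L}$, and since all purifications agree up to an isometry on $C$, every $\ket{\psi}_{ABC}$ is a triangle state up to local isometries. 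I expect the main obstacle to be the rigorous invocation and bookkeeping of the Araki--Lieb equality conditions---in particular controlling the ``up to isometry'' freedom and the eigenvalue degeneracies that appear when factorizing $\calH_A$ and $\calH_B$---rather than the entropy inequalities themselves, which are routine once the saturated chain is identified.
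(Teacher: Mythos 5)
Your proof is correct, and it shares the paper's skeleton: your two deficits are, on the pure four-party state, identical to the paper's conditional mutual informations --- $S_{AC_L}+S_{C_L}-S_A = I(C_L:BC_R|A)$ and $S_{AC_L}+S_{AB}-S_{C_L}-S_B = I(C_R:A|B)$ of Eq.~\eqref{eq:g_decomposition1} --- and your ``only if'' direction (exhibiting the purification with $C_L$ purifying $\rho_{A_L}$ and $C_R$ purifying $\rho_{B_R}$) is the same computation as the paper's. The genuine difference is the structural tool used to turn saturation into factorization. The paper applies the quantum Markov chain theorem of Hayden et al.\ (Theorem~\ref{thm:quantum_markov_chain}) to the two chains $C_L|A|BC_R$ and $AC_L|B|C_R$ in parallel, using purity of the global state to reduce the direct-sum decomposition to a single term. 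You instead apply the Araki--Lieb equality structure theorem sequentially: split $\calH_A$ to factor off $\ket{\omega}_{A_1C_L}$, observe that $g=0$ then collapses to a second Araki--Lieb equality for $\sigma_{A_2B}$, split $\calH_B$, and finish with uniqueness of purifications to carry the triangle form back to $\ket{\psi}_{ABC}$. Your route is more elementary: Araki--Lieb saturation becomes, after purifying, saturation of subadditivity ($I(C_L:BC_R)=0$, a product marginal), and the purifier of a product state factorizes, so no Markov-chain machinery is needed. What the paper's choice buys is economy: Theorem~\ref{thm:quantum_markov_chain} in its full direct-sum form is indispensable anyway for Theorem~\ref{thm:zero_h}, so reusing it here is free, whereas your Araki--Lieb lemma --- whose ``up to isometry'' and degeneracy bookkeeping you rightly flag, and which is precisely the $\oplus\,\calH_\alpha^0$ summand allowed in Definition~\ref{def:triangle_state} --- must be established separately.
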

The ``only if" direction can be shown by noting that $2E_P(A:B) = I(A:B)$ in the purification $\ket{\psi}_{ABC}$ of $\rho_{AB}$.
The proof of the ``if" direction is more complicated, and is presented in SM~\cite{Supp}. 

Conversely, $g(A:B) > 0$ implies that $\ket{\psi}_{ABC}$ contains tripartite entanglement that cannot be factorized pairwise. 
For example, for a GHZ state $\ket{\psi}_{ABC}=\sqrt{d^{-1}}\sum_{j=1}^d \ket{j_Aj_Bj_C}$ the optimal purification of $\rho_{AB}$ is $\ket{\psi}_{ABC}$ itself \cite{Nguyen2018}, resulting in $g(A:B)=\log d$. It can also be shown that the $W$~state has nonzero $g(A:B)$. This is in accordance with the fact the GHZ state and $W$~state are not triangle states \cite{Dur2000}.

\textit{States with $h(A:B)=0$} ---  
It can be verified that a triangle state has $h(A:B)=0$, so  $h(A:B)\neq 0$ also implies irreducible tripartite entanglement.  But for the GHZ state,  $g(A:B)\neq 0$ while $h(A:B)=0$, which suggests that that some forms of tripartite entanglement are ``invisible'' to $h$.

To make this precise we introduce the notion of sum of triangle states.
\begin{definition}[sum of triangle states (SOTS)]
A pure state $\ket{\psi}_{ABC}$ is a SOTS if for each local Hilbert space $\calH_\alpha$ there exists a decomposition $\calH_\alpha = \bigoplus_j \calH_{\alpha^{j}_L}\otimes \calH_{\alpha^{j}_R}$ such that
\begin{equation}
\label{eq:SOTS}
    \ket{\psi}_{ABC}=\sum_{j} \sqrt{p_j} \ket{\psi_{j}}_{A^{j}_RB^{j}_L}\ket{\psi_{j}}_{B^{j}_RC^{j}_L}\ket{\psi_{j}}_{C^{j}_RA^{j}_L},
\end{equation}
where $\ket{\psi_{j}}_{\alpha^{j}_R \beta^{j}_L}$ represents a pure state in $\calH_{\alpha^{j}_R}\otimes \calH_{\beta^{j}_L}$, etc, and $\sum_{j} p_j =1$.
\end{definition}
For example, the GHZ state is a SOTS with $p_j = \frac{1}{d}$ and the triangle state is a SOTS for which $p_j=1$  for exactly one $j$. 
By using the structure theorem for states satisfying strong subadditivity \cite{Hayden2004}, we prove~\cite{Supp} the following:
\begin{theorem} A state $\ket{\psi}_{ABC}$ is a SOTS if and only if  $h(A:B)=0$.
\label{thm:zero_h}
\end{theorem}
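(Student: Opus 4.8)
The plan is to reduce the statement to the saturation of strong subadditivity (SSA) in the canonical purification and then invoke the Hayden--Jozsa--Petz--Winter (HJPW) structure theorem cited above. For the direction SOTS $\Rightarrow h=0$ I would argue by direct computation. A SOTS has a block-diagonal reduced density matrix $\rho_{AB}=\bigoplus_j p_j\,\rho_{A^j_L}\otimes\ket{\psi_j}\bra{\psi_j}_{A^j_R B^j_L}\otimes\rho_{B^j_R}$, with the label $j$ shared by $A$ and $B$. Both $I$ and $S_R$ are additive over such classical mixtures up to the \emph{same} Shannon term: $I(A:B)=H(\{p_j\})+\sum_j p_j I^{(j)}$, and likewise $S_R(A:B)=H(\{p_j\})+\sum_j p_j S_R^{(j)}$, the latter because the canonical purification of a block-diagonal operator is a superposition over mutually orthogonal sectors. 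Within a single block the local factors $\rho_{A^j_L}$ and $\rho_{B^j_R}$ decouple and contribute nothing to either quantity, so $I^{(j)}=S_R^{(j)}=2S(\rho_{A^j_R})$ and therefore $h=\sum_j p_j(S_R^{(j)}-I^{(j)})=0$.

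The converse is the substance. The first step is the identity $h(A:B)=I(A^*:B\,|\,A)$, evaluated on the reduced state $\rho_{AA^*B}$ of the canonical purification $\ket{\sqrt{\rho_{AB}}}\in\calH_A\otimes\calH_B\otimes\calH_A^*\otimes\calH_B^*$. Indeed, purity of the total state together with the reflection symmetry $S_{B^*}=S_B$ gives $S_R(A:B)=S_{AA^*}=I(AA^*:B)$, and the chain rule $I(AA^*:B)=I(A:B)+I(A^*:B|A)$ then yields the claim. Hence $h=0$ is exactly the statement that $\rho_{AA^*B}$ saturates SSA.

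By HJPW, saturation produces a decomposition $\calH_A=\bigoplus_j\calH_{A^j_L}\otimes\calH_{A^j_R}$ with
\begin{equation}
\rho_{A^*AB}=\bigoplus_j q_j\,\omega^{(j)}_{A^* A^j_L}\otimes\tau^{(j)}_{A^j_R B}.
\end{equation}
Tracing out $A^*$ exhibits $A^j_L$ as a factor decoupled from $B$, namely the eventual $\rho_{A^j_L}$. Taking the finest such decomposition, I would argue that within each block $A^j_R$ is \emph{purely} entangled with a tensor factor of $B$: irreducibility forces the mirror $(A^j_R)^*$ to decouple in the block's own canonical purification, and $I\big((A^j_R)^*:A^j_R B\big)=S_{A^j_R}+S_{A^j_R B}-S_B=0$ is precisely saturation of the Araki--Lieb inequality, which is equivalent to $\tau^{(j)}_{A^j_R B}=\ket{\psi_j}\bra{\psi_j}_{A^j_R B^j_L}\otimes\rho_{B^j_R}$. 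This is the correct within-block triangle form.

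The main obstacle is \emph{alignment}: HJPW applied to $A$ yields a genuine direct sum over $\calH_A$, but the factorizations $\calH_B=\calH_{B^j_L}\otimes\calH_{B^j_R}$ obtained block by block are, a priori, distinct factorizations of one and the same $\calH_B$, whereas the definition of SOTS demands a single direct-sum decomposition of $\calH_B$ labeled by the same $j$. I would resolve this using the reflection symmetry $J$ of the canonical purification (which fixes $\ket{\sqrt{\rho_{AB}}}$ and exchanges $B\leftrightarrow B^*$): purifying the block-diagonal $\rho_{A^*AB}$ through $B^*$ induces a matching direct sum of $\calH_{B^*}$, and transporting it by $J$ produces a genuine direct-sum decomposition of $\calH_B$ with the same label $j$, aligned with the $A$-blocks; showing this $J$-induced decomposition is consistent with the Araki--Lieb factorizations is the delicate point. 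With the aligned block-product form of $\rho_{AB}$ in hand, I would purify each block to build an explicit SOTS $\ket{\tilde\psi}_{ABC'}$, and finish by noting that $\ket{\psi}_{ABC}$, being another purification of the same $\rho_{AB}$, equals $(\mathbb{1}_{AB}\otimes W)\ket{\tilde\psi}$ for a partial isometry $W$ on the purifying system, under which a direct-sum-of-tensor decomposition of the purifying space is preserved; hence $\ket{\psi}_{ABC}$ is a SOTS.
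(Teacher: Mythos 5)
Your forward direction (SOTS $\Rightarrow h=0$) is correct, and it is in fact more elementary than the paper's route, which establishes that the canonical purification of a SOTS is a four-party SOPS (Lemma~\ref{lemma:SOPS_canonpurif}) and then invokes the cyclic Markov property (Porism~\ref{lemma:SOPS_cyclicMarkov}); your block-by-block computation of $S_R$ and $I$ gets the same conclusion directly. Your opening of the converse also matches the paper exactly: the identity $h(A:B)=I(A^*{:}B\,|\,A)$ on the canonical purification, followed by one application of the Hayden et al.\ theorem (Theorem~\ref{thm:quantum_markov_chain}) conditioning on $A$.

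The rest of the converse, however, has a genuine gap, and it sits precisely where the paper does its real work. Two claims are asserted rather than proven. First, the within-block claim --- that in the ``finest'' decomposition each $\tau^{(j)}_{A^j_R B}$ saturates Araki--Lieb --- is supported only by the phrase ``irreducibility forces the mirror to decouple''; no mechanism is supplied. Note that Araki--Lieb saturation of a block is equivalent to that block being a reduced triangle state, i.e.\ it is the very statement being proven, so ``taking the finest decomposition'' is a placeholder for an induction (with base case) that is never set up; at that stage the blocks are known to be orthogonal only on the $A$ side, so $I(A:B)$ does not decompose over blocks (the defect is the subadditivity gap of $S_B$, which is exactly the unknown) and $h=0$ cannot simply be localized to single blocks. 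Second, the alignment step: the part of your claim that the purification induces orthogonal $B^*$-supports is actually fine (block-diagonality of $\rho_{A^*AB}$ over orthogonal $A$-sectors does force orthogonal supports on the purifier), but transporting that decomposition by $J$ yields subspaces of $\calH_B$ labeled by $A$-sectors, and nothing you establish shows $\rho_{AB}$ is block-diagonal with respect to them \emph{with matching labels}: $J$ preserves the state but maps the $A$-grading to an $A^*$-grading, and whether that is compatible with the original blocks (whose $A^*$-marginals $\omega^{(j)}$ may overlap across $j$) is exactly the question. You concede this is ``the delicate point'' and leave it unresolved --- but this point \emph{is} the theorem. The paper fills it with Proposition~\ref{prop:4body_cyclicMarkov}: it uses the second vanishing conditional mutual information $I(A{:}\bar B|B)=0$ (which your proposal never invokes) to obtain an independent HJPW decomposition grading $\calH_B$, and then proves a bijection between the two families of sectors via projector bookkeeping, Fact~\ref{fact:prob_permutation} (a Cauchy--Schwarz statement about pairs of probability distributions), and Lemma~\ref{lemma:product_state} on states that factorize across two different bipartitions. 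Without an argument of comparable substance, your converse is an outline, not a proof.
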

As a corollary, while in general $h(A:B)\neq h(B:C)\neq h(C:A)$, if one vanishes then all of them vanish (and likewise for $g$).

\textit{$g$ and $h$ for 1D gapped systems} ---
We now give a physical interpretation of these structure theorems in the context of 1D Hamiltonians: we argue that on a ring with the tripartition shown in Fig.~\ref{fig:ep_setup}, a system is gapped if and only if $h = 0$, and gapped without long-range order if and only if $g = 0$.
As motivation, consider the two limiting gapped phases of the 1D Ising model: the symmetric paramagnet, $\ket{\mathrm{PM}} = \ket{\rightarrow \rightarrow \cdots}$, and the ferromagnet $\ket{\mathrm{FM}} = \frac{1}{\sqrt{2}}\big( \ket{\mathord{\uparrow\uparrow}\cdots} + \ket{\mathord{\downarrow\downarrow}\cdots} \big)$.
When partitioned into 3 subsystems, the $\ket{\mathrm{PM}}$ ($\ket{\mathrm{FM}}$) state corresponds to
a product state  (GHZ state), so it will have $g = 0$ ($g=\log2$) and for both, $h=0$. Indeed, we see that $g$ is sensitive to the ``cat state'' structure of the exact ground state in a symmetry-broken phase, so will generically detect the multiplicity of super-selection sectors. 
Away from these extremal points, the ground state develops additional short-range entanglement. 
However, so long as sizes of the regions $N_A, N_B, N_C$ are larger than the correlation length $\xi$, this additional entanglement simply dresses the product state within each superselection sector into a triangle state, and so with exponential accuracy in $N / \xi$, $g$ and $h$ are unchanged.

The argument can be phrased most precisely in the language of matrix product states.
%as shown in Refs.~\onlinecite{Verstraete2006, Arad2013}, area-law ground states can be faithfully represented by finite-dimensional MPS.
We first take a finite-dimensional MPS as an approximation to the ground state of a 1D system \footnote{There is a caveat to use the finite-dimensional MPS as an approximation. It is only rigorously proven that the state can be faithfully represented if bond dimension grows with the system size~\cite{Verstraete2006, Arad2013}. For a finite bond dimension, it has only been shown that local properties can be well approximated \cite{Dalzell2019}. In order to make our argument, we have to assume that the finite bond dimension does not result in a substantial error in $g$ or $h$, which are nonlocal properties of the ground state. Despite not rigorous proven, the assumption is highly plausible because of empirical success of infinite MPS algorithms, where correlation functions and entanglement properties at long distances are extracted from a finite-dimensional MPS. Therefore the argument could be regarded as heuristic for a general gapped theory. However, it is rigorous if the ground state can be exactly represented by a finite-dimensional MPS, for example that of a MPS parent Hamiltonian \cite{Perez-Garcia2007}.}.
The thermodynamic limit is taken by fixing $N_A/N,N_B/N$ and taking $N\rightarrow\infty$, where $N$ is the total system size. In the thermodynamic limit we can then apply the standard MPS coarse-graining procedure \cite{Verstraete200502} to obtain a fixed-point MPS.
If the initial correlation length is finite \cite{Hastings2006}, the state flows to an MPS with $\xi = 0$. It is straightforward to show that a $\xi = 0$ MPS is precisely the $N$-party generalization \footnote{The generalization of a triangle state to many parties is a polygon state, which is discussed in detail in \cite{Supp}. A polygon state is a triangle state with respect to any tripartition into contiguous regions.} of a triangle state \cite{Verstraete200502, Chen2011}, so by the structure theorems we obtain $g = h = 0$.
On the other hand, if the MPS has an infinite correlation length (e.g., it is a cat state as occurs for spontaneous symmetry breaking or phase coexistence), then it flows to a \emph{sum} of $\xi = 0$ MPS which are locally orthogonal \cite{Perez-Garcia2007, Supp}. Thus in the long-range ordered phase we have $g\neq 0$ and $h=0$. These cases are analyzed in greater detail in \cite{Supp}. Note  that the precise statement of our claim is thus as follows:  A fixed-point MPS has $h(A:B) = 0$ for all contiguous tripartitions.  Since all MPS flow towards fixed-point MPS under coarse graining,  $h(A:B) \to 0$ as $N_A, N_B \to \infty$ \footnote{Technically, taking the limit assumes the continuity of $g(A:B)$ and $h(A:B)$ with respect to the reduced density matrix $\rho_{AB}$ \cite{Supp}. The continuity property of $E_P$ and $S_R$ has already been proven in \cite{Terhal2002, Akers2020}.}.

\textit{Gapless systems} ---
At a critical point $g$ and $h$ need not vanish. In fact, they are universal constants which depend only on the emergent CFT in the thermodynamic limit.

We now briefly describe the algorithm to compute $g$ and $h$ of the ground state of a critical quantum spin chain with $N$ sites and Hamiltonian $H$.
First the ground state $\ket{\psi}_{ABC}$ is obtained in the form of a periodic uniform MPS (puMPS) \cite{Zou2018,Zou202001,Zou202002}. A puMPS consists of $N$ copies of the same rank-3 tensor $M$ with dimensions $D\times D\times d$, where $d$ is the dimension of the Hilbert space on each site, and $D$ is the bond dimension which grows polynomially with the system size $N$ (Fig.~\ref{fig:puMPS}).
The tensor $M$ is obtained variationally by minimizing the expectation value of $H$. We then apply the standard MPS coarse-graining procedure \cite{Verstraete200502,Supp} to ``compress'' the Hilbert space of each region down to a smaller one via a sequence of isometries,
$\calH_{\alpha} \to \calH_{\tilde{\alpha}}$.
Because the entropy of each region is sub-extensive, $S_{\alpha} \ll N_\alpha \log(d)$ -- even at a critical point -- we can reduce the dimension of the Hilbert space $\tilde{d}_{\alpha} \ll d_{\alpha}$
while preserving all the bipartite and tripartite entanglement properties among the three parties $A$, $B$ and $C$ to high-accuracy.
The coarse-grained state $\ket{\tilde{\psi}}_{\tilde{A}\tilde{B}\tilde{C}}$ can be represented by a MPS with three tensors $M_{\alpha}$ with dimensions $D\times D\times \tilde{d}_{\alpha}$ (Fig. 2), where $\tilde{d}_{\alpha}\leq D^2$. 
$D, \tilde{d}_{\alpha}$ grow polynomial with system size; as an example, for the Ising CFT we use $D = 12$, $\tilde{d}_{\alpha} = 36$ for $N = 24$ and $D = 26$, $\tilde{d}_{\alpha} = 100$ for $N = 84$.

\begin{figure}
    \centering
    \includegraphics[width=0.8\linewidth]{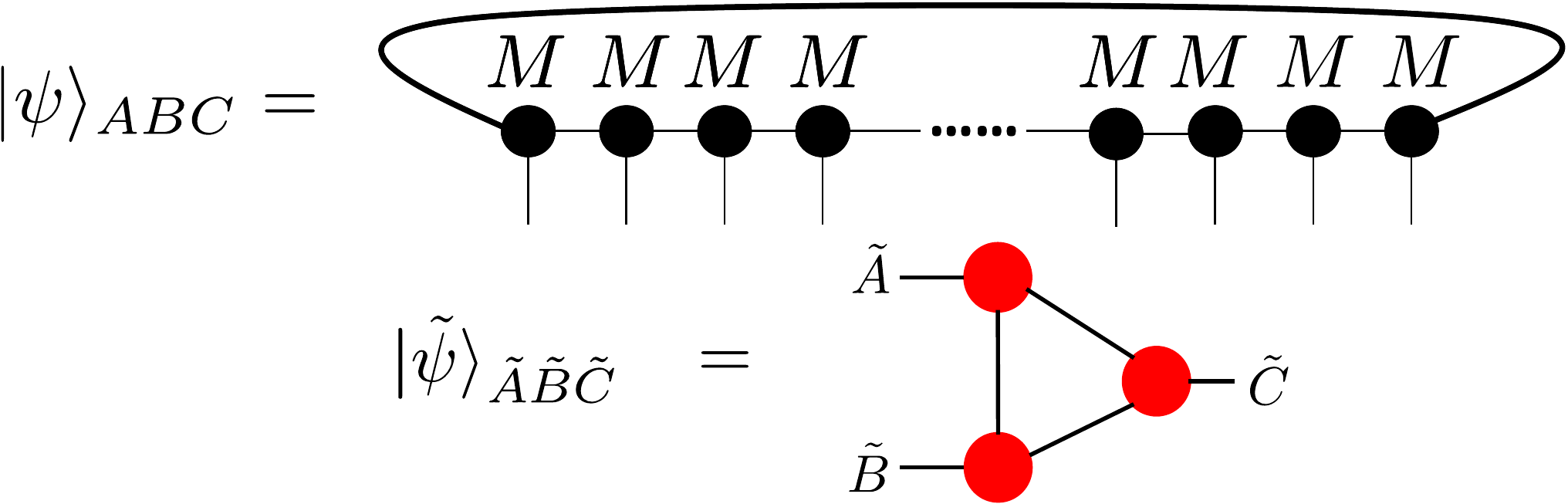}
    \caption{The state before and after coarse-graining. Top: The periodic uniform matrix product state (puMPS) represents the ground state of a translation-invariant critical quantum spin chain before coarse-graining. Bottom: The puMPS is coarse-grained into a MPS with 3 tensors corresponding to the coarse-grained Hilbert spaces $\calH_{\tilde{A}},\calH_{\tilde{B}},\calH_{\tilde{C}}$.}
    \label{fig:puMPS}
\end{figure}
We compute $S_R(A:B)$ according to Eq.~\eqref{eq:SRdef} in the dense representation.
Assuming that $\tilde{d}_{A}\leq\tilde{d}_{B}$, the total time cost scales as $\mathcal{O}\big( \tilde{d}^4_A \tilde{d}^2_B \big)$. To compute $E_P(A:B)$, we first make a random split of $\calH_{\tilde{C}}$ into $\calH_{\tilde{C}_L}\otimes \calH_{\tilde{C}_R}$ with dimensions $\tilde{d}_{C_L}\times\tilde{d}_{C_R}$. We then numerically minimize the entanglement entropy of $\tilde{A}\tilde{C}_L$ with respect to a unitary $U_{\tilde{C}}$ on $\tilde{C}$,
\begin{align}
    E_P(A:B) = \min_{U_{\tilde{C}}} S_{\tilde{A}\tilde{C}_L} \Big(U_{\tilde{C}}\ket{\tilde{\psi}}_{\tilde{A}\tilde{B}\tilde{C}}\Big).
\end{align}
We verified numerically that the $\tilde{d}_\alpha$ are large enough to achieve the (near) optimal purification. The numerical optimization can be performed with, e.g.,  the non-linear conjugate gradient algorithm, since the gradient can be constructed explicitly (see \cite{Supp}). The time cost of each gradient calculation scales as $\mathcal{O}(\tilde{d}^{2}_A \tilde{d}_B \tilde{d}^{2}_C)$, assuming that $\tilde{d}_{A}\leq\tilde{d}_{B}$. The mutual information $I(A:B)$ can also be computed using the coarse-grained state, with time cost $O(\tilde{d}^3_{\max})$, where $\tilde{d}_{\max}\equiv \max_\alpha\{\tilde{d}_{\alpha}\}$. 

\textit{$g$ and $h$ for various CFTs} --- In order to show that $g$ and $h$ are universal, we study the Ising model with an irrelevant near-to-nearest neighbor interaction~\cite{Obrien_2018},
\begin{align}
    \label{eq:Ising_TCI}
    H=\sum_{j=1}^N \left[\begin{array}{l}
        -X_j X_{j+1}-Z_j \\ {\;}\mathop{+}\lambda\big(X_j X_{j+1} Z_{j+2} + Z_j X_{j+1} X_{j+2}\big) \end{array}\right],
\end{align}
where $X_j$($Z_j$) are Pauli $X$($Z$) matrices on sites $j$ and periodic boundary conditions are assumed. The model is critical described by the Ising CFT for $\lambda<\lambda^*$, 
    gapped for $\lambda>\lambda^*$, where the transition at $\lambda^{*}\approx 0.428$ is described by the tricritical Ising CFT \cite{Obrien_2018}.
We study four parameter values, $\lambda=0, 0.3, 0.4, \lambda^{*}$, where the first three correspond to the Ising CFT and the last correspond to the tricritical Ising CFT. 

We fix $N_A=N_B=N_C=N/3$ and compute $g(A:B)$ and $h(A:B)$ as a function of $N$, shown in Fig.~\ref{fig:Ising_g_h}. We see that both $g$ and $h$ converge to a constant as $N\rightarrow\infty$ \footnote{The extrapolation is based on the numerical observation that $g$ and $h$ converges as a power of $1/N$, where the power is determined empirically.}. Furthermore, the constant is the same for $\lambda=0$ and $\lambda=0.3$, indicating that $g$ and $h$ are universal. We denote the universal quantities as $g^{\CFT}$ and $h^{\CFT}$. At $\lambda=\lambda^{*}\approx 0.428$, we obtain a different value that corresponds to the tricritical Ising CFT. At $\lambda=0.4$, both $g$ and $h$ go through a renormalization group flow from the tricritical Ising CFT to the Ising CFT, analogous to the spectral flow observed in Ref.~\onlinecite{Zou2018}. The values of $g^{\CFT}$ and $h^{\CFT}$ for various CFTs are summarized in Table 1.

\begin{figure}
    \includegraphics[width=0.9\linewidth]{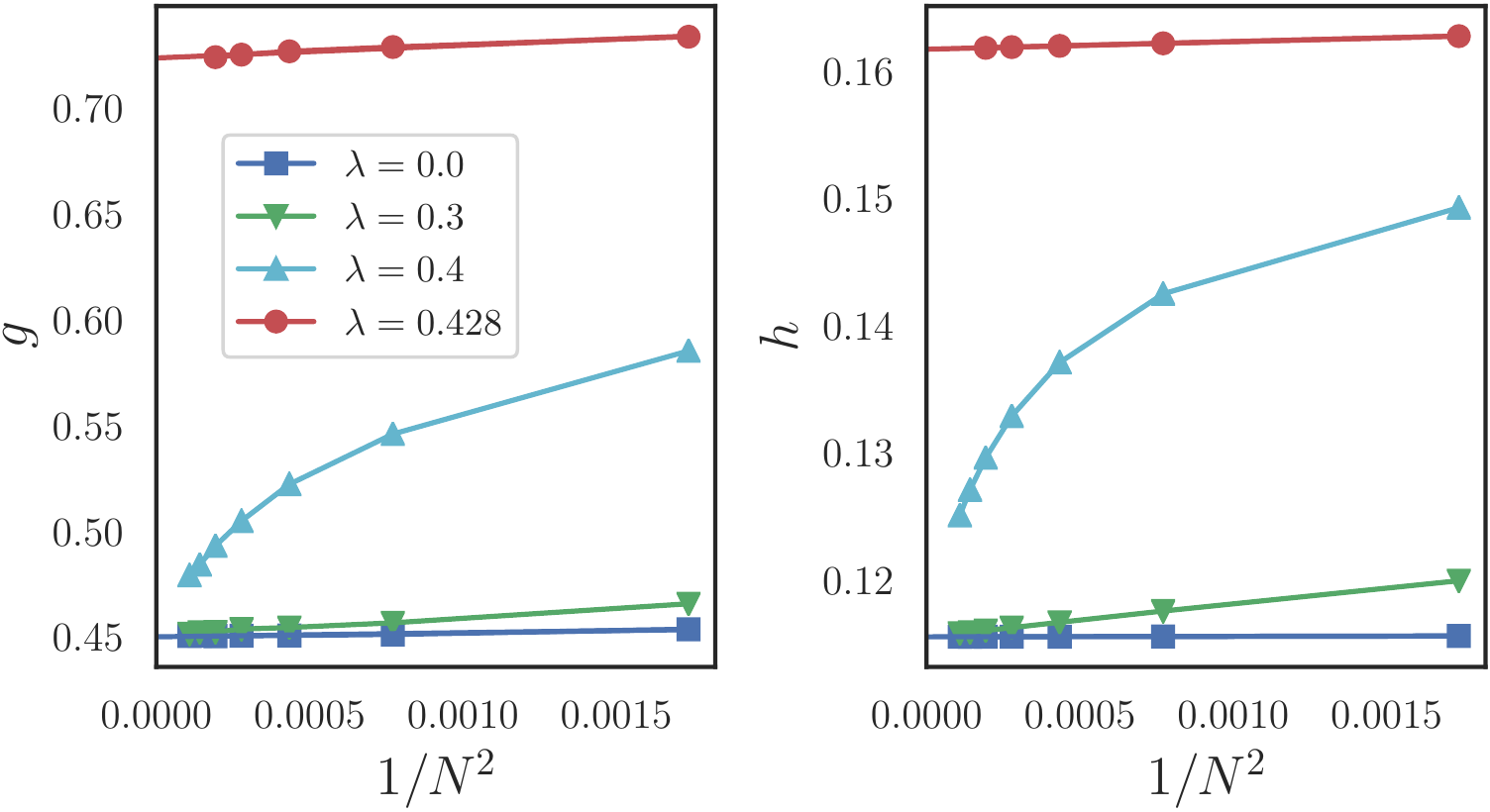}
    \caption{ 
     $g(A:B)$ and $h(A:B)$ from the model Eq.~\eqref{eq:Ising_TCI} with different $\lambda$'s. At $\lambda=0$ and $\lambda=0.3$, the quantities converge to $g^{\CFT}$ and $h^{\CFT}$ of the Ising CFT. At $\lambda=\lambda^{*}\approx 0.428$, both quantities converge to a different value that corresponds to the tricritical Ising CFT. At $\lambda=0.4$ we observe a renormalization group flow from the tricritical Ising CFT to the Ising CFT. 
     }
    \label{fig:Ising_g_h}
\end{figure}

\begin{table}[th]
    \centering
    \begin{tabular}{ |c| *{4}{@{\;\,}c@{\,\;}|} } \hline
    Theory & $c$ & $g^\CFT$ & $h^\CFT$ & $\tfrac{c}{3}\log2$   \\[0.2ex]\hline
    gapped symmetric & 0 & 0 & 0 & 0  \\\hline
    long-range ordered & 0 & $> 0$ & 0 &  0 \\\hline
    Ising CFT             &  $1/2$ & 0.450 & 0.1155 & 0.11553   \\\hline
    Tricritical Ising CFT & $7/10$ & 0.719 & 0.1617 & 0.16173   \\\hline
    Free boson $R=\sqrt{3}$  & $1$ & 0.920 & 0.2310 & 0.23105   \\\hline
    Free boson $R=2$         & $1$ & 0.899 & 0.2310 & 0.23105   \\\hline
    Free boson $R=\sqrt{6}$  & $1$ & 0.906 & 0.2310 & 0.23105   \\\hline
    \end{tabular}
    \caption{$g^\CFT$ and $h^\CFT$ extracted numerically through finite size scaling. For the gapped spin chains, the universal values of $g$ and $h$ are shown. For the gapless spin chains, we show the central charge  $c$ as well as $g^\CFT$ and $h^\CFT$ of the CFTs.}
    \label{table1}
\end{table}
We also verified that the values of $g^{\CFT}$ and $h^{\CFT}$ do not depend on the relative sizes of $A,B,C$~\cite{Supp}.
For any ratio $N_A/N$ and $N_B/N$, once we take the thermodynamic limit $N\rightarrow\infty$, both $g(A:B)$ and $h(A:B)$ converge to the universal constants $g^{\CFT}$ and $h^{\CFT}$.

We proceed to examine how $g^\CFT$ and $h^\CFT$ depend on the conformal data of the CFT. To do so we compute $g^\CFT$ and $h^\CFT$ for the free compactified boson CFT for differing compactification radius $R$. All have the same central charge $c=1$, but the operator content depends on  $R$. A concrete lattice realization of the CFT is the XXZ model,
\begin{align}
\label{eq:XXZ}
    H=\sum_{j} \big(X_j X_{j+1}+Y_j Y_{j+1} + \Delta Z_j Z_{j+1}\big),
\end{align}
where $-1 \leq \Delta < 1$ is a parameter that determines the compactification radius $R=\sqrt{2\pi/\cos^{-1}(-\Delta)}$.
We compute $g^{\CFT}$ and $h^{\CFT}$ for different $R$'s by extrapolating $g(A:B)$ and $h(A:B)$ for different $\Delta$'s to the thermodynamic limit. The result is shown in Fig.~\ref{fig:XXZgs_hs} and Tab.~\ref{table1}, where $R=\sqrt{3},2,\sqrt{6}$ correspond to $\Delta=0.5,0,-0.5$, respectively \footnote{Note that at $\Delta=0$ both $g^{\CFT}$ and $h^{\CFT}$ equal twice of that for the Ising CFT, in accordance with the duality between the CFTs.}. 

We see that $h^{\CFT}$ does not depend on $\Delta$ and is compatible with $h^\CFT=\frac{c}{3}\log 2$.
On the other hand, $g^{\CFT}$ depends on $\Delta$ and thus on $R$. For example, as shown in Table I, $g^\CFT$ takes on three different values at $\Delta=0,0.5,-0.5$, which correspond to $R=2,\sqrt{3},\sqrt{6}$, respectively.  We conclude that $h^{\CFT}$ only depends on the central charge but $g^{\CFT}$ depends on the whole operator content. This feature of $h^{\CFT}$ can be understood as follows. The canonical purification of $\rho_{AB}$ can be regarded as the ground state of a CFT living on a circle, divided into four contiguous segments $A,B,\bar{B},\bar{A}$.  The measure $h(A:B) = S_{A\bar{A}} - S_A - S_B + S_{AB}$ involves only contiguous pieces and is hence proportional to the central charge.

\begin{figure}
    \centering
    \includegraphics[width=0.9\linewidth]{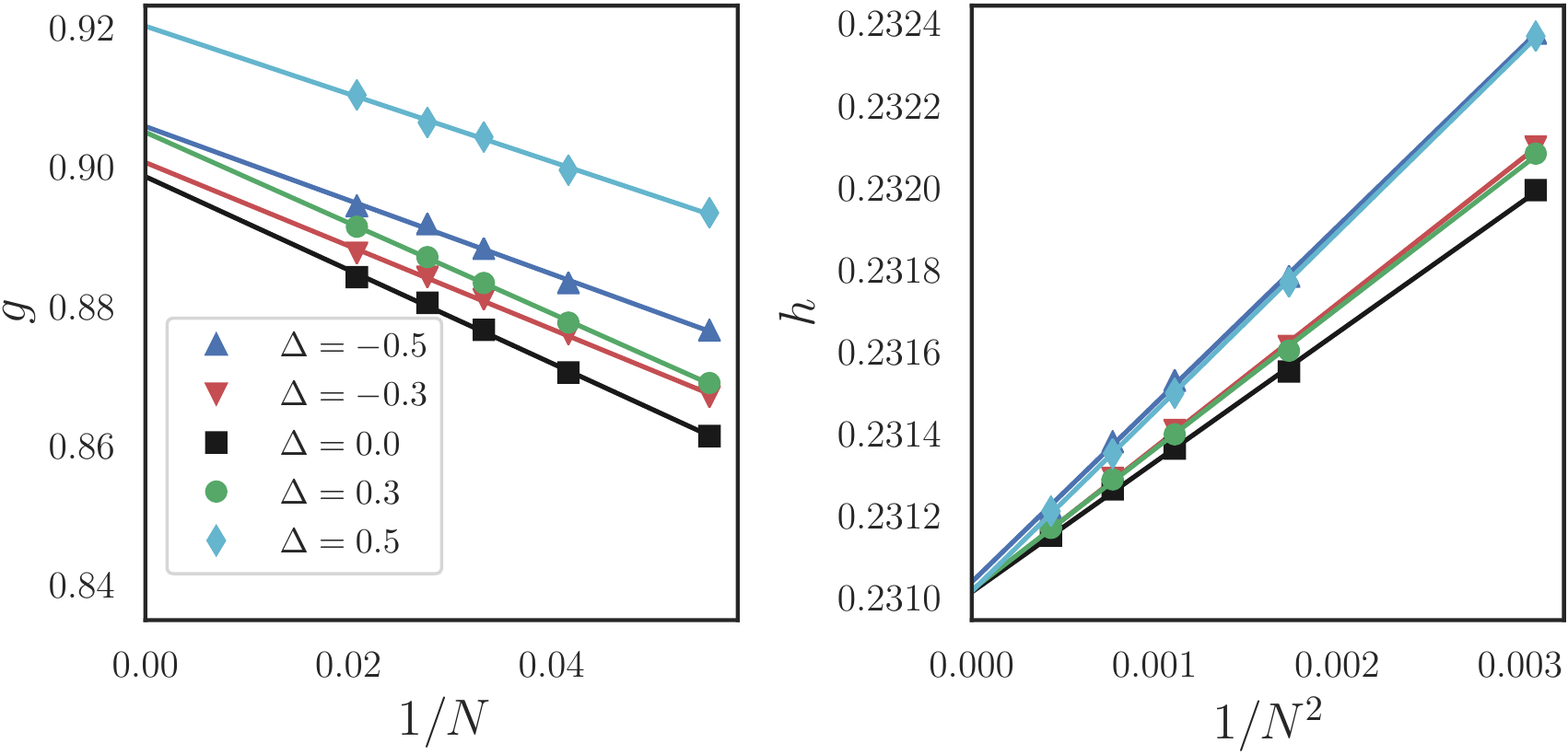}
    \caption{$g(A:B)$ and $h(A:B)$ from the XXZ model with different $\Delta$'s at sizes $18\leq N \leq 48$. We see that $g^\CFT$ depends on $\Delta$ while $h^\CFT$ is independent of $\Delta$.}
    \label{fig:XXZgs_hs}
\end{figure}

\textit{Discussion} --- 
In this work we have introduced two positive quantities $g$ and $h$ which quantify the obstruction to factorizing a tripartite state into pairwise correlations. 
While the entanglement wedge cross section duality   $E_W = E_P = S_R/2$  predicts $h = g = \frac{c}{3}\log(2)$, for low-$c$ CFTs like the Ising model we find $g > h = \frac{c}{3}\log(2)$. The gap $g - h$ is universal, but it remains an open question how to compute it from the underlying data of the CFT.
It is natural to conjecture a general bound $g \geq h$, which would follow from the monotonicity of $S_R$ under a partial trace.

\begin{acknowledgments}
The authors are grateful to Ning Bao, Jeongwan Haah, and Guifre Vidal for enlightening conversations.
We are particularly indebted to Brian Swingle both for our earlier collaborations and the suggestion to compute $S_R$. 
YZ acknowledges Compute Canada. Research at Perimeter Institute is supported by the Government of Canada through the Department of Innovation, Science and Economic Development Canada and by the Province of Ontario through the Ministry of Research, Innovation and Science. Sandbox is a team within the Alphabet family of companies, which includes Google, Verily, Waymo, X, and others.
KS acknowledges support from the NSF Graduate Research Fellowship Program (Grant No.\ DGE 1752814).
RM is supported by the National Science Foundation under DMR-1848336.
MPZ was supported the DOE, Office of Science, Office of High Energy Physics under QuantISED Award DE-SC0019380 and under contract DE-AC02-05CH11231.
\end{acknowledgments}

\bibliography{main}

%apsrev4-2.bst 2019-01-14 (MD) hand-edited version of apsrev4-1.bst
%Control: key (0)
%Control: author (8) initials jnrlst
%Control: editor formatted (1) identically to author
%Control: production of article title (0) allowed
%Control: page (0) single
%Control: year (1) truncated
%Control: production of eprint (0) enabled
\begin{thebibliography}{41}%
\makeatletter
\providecommand \@ifxundefined [1]{%
 \@ifx{#1\undefined}
}%
\providecommand \@ifnum [1]{%
 \ifnum #1\expandafter \@firstoftwo
 \else \expandafter \@secondoftwo
 \fi
}%
\providecommand \@ifx [1]{%
 \ifx #1\expandafter \@firstoftwo
 \else \expandafter \@secondoftwo
 \fi
}%
\providecommand \natexlab [1]{#1}%
\providecommand \enquote  [1]{``#1''}%
\providecommand \bibnamefont  [1]{#1}%
\providecommand \bibfnamefont [1]{#1}%
\providecommand \citenamefont [1]{#1}%
\providecommand \href@noop [0]{\@secondoftwo}%
\providecommand \href [0]{\begingroup \@sanitize@url \@href}%
\providecommand \@href[1]{\@@startlink{#1}\@@href}%
\providecommand \@@href[1]{\endgroup#1\@@endlink}%
\providecommand \@sanitize@url [0]{\catcode `\\12\catcode `\$12\catcode
  `\&12\catcode `\#12\catcode `\^12\catcode `\_12\catcode `\%12\relax}%
\providecommand \@@startlink[1]{}%
\providecommand \@@endlink[0]{}%
\providecommand \url  [0]{\begingroup\@sanitize@url \@url }%
\providecommand \@url [1]{\endgroup\@href {#1}{\urlprefix }}%
\providecommand \urlprefix  [0]{URL }%
\providecommand \Eprint [0]{\href }%
\providecommand \doibase [0]{https://doi.org/}%
\providecommand \selectlanguage [0]{\@gobble}%
\providecommand \bibinfo  [0]{\@secondoftwo}%
\providecommand \bibfield  [0]{\@secondoftwo}%
\providecommand \translation [1]{[#1]}%
\providecommand \BibitemOpen [0]{}%
\providecommand \bibitemStop [0]{}%
\providecommand \bibitemNoStop [0]{.\EOS\space}%
\providecommand \EOS [0]{\spacefactor3000\relax}%
\providecommand \BibitemShut  [1]{\csname bibitem#1\endcsname}%
\let\auto@bib@innerbib\@empty
%</preamble>
\bibitem [{\citenamefont {Walter}\ \emph {et~al.}(2016)\citenamefont {Walter},
  \citenamefont {Gross},\ and\ \citenamefont
  {Eisert}}]{walter2016multipartite}%
  \BibitemOpen
  \bibfield  {author} {\bibinfo {author} {\bibfnamefont {M.}~\bibnamefont
  {Walter}}, \bibinfo {author} {\bibfnamefont {D.}~\bibnamefont {Gross}},\ and\
  \bibinfo {author} {\bibfnamefont {J.}~\bibnamefont {Eisert}},\ }\bibfield
  {title} {\bibinfo {title} {Multipartite entanglement},\ }\href@noop {}
  {\bibfield  {journal} {\bibinfo  {journal} {Quantum Information: From
  Foundations to Quantum Technology Applications}\ ,\ \bibinfo {pages} {293}}
  (\bibinfo {year} {2016})}\BibitemShut {NoStop}%
\bibitem [{\citenamefont {Audenaert}\ \emph {et~al.}(2002)\citenamefont
  {Audenaert}, \citenamefont {Eisert}, \citenamefont {Plenio},\ and\
  \citenamefont {Werner}}]{Audenaert2002}%
  \BibitemOpen
  \bibfield  {author} {\bibinfo {author} {\bibfnamefont {K.}~\bibnamefont
  {Audenaert}}, \bibinfo {author} {\bibfnamefont {J.}~\bibnamefont {Eisert}},
  \bibinfo {author} {\bibfnamefont {M.~B.}\ \bibnamefont {Plenio}},\ and\
  \bibinfo {author} {\bibfnamefont {R.~F.}\ \bibnamefont {Werner}},\ }\bibfield
   {title} {\bibinfo {title} {Entanglement properties of the harmonic chain},\
  }\href {https://doi.org/10.1103/PhysRevA.66.042327} {\bibfield  {journal}
  {\bibinfo  {journal} {Phys. Rev. A}\ }\textbf {\bibinfo {volume} {66}},\
  \bibinfo {pages} {042327} (\bibinfo {year} {2002})}\BibitemShut {NoStop}%
\bibitem [{\citenamefont {Marcovitch}\ \emph {et~al.}(2009)\citenamefont
  {Marcovitch}, \citenamefont {Retzker}, \citenamefont {Plenio},\ and\
  \citenamefont {Reznik}}]{Marcovitch2009}%
  \BibitemOpen
  \bibfield  {author} {\bibinfo {author} {\bibfnamefont {S.}~\bibnamefont
  {Marcovitch}}, \bibinfo {author} {\bibfnamefont {A.}~\bibnamefont {Retzker}},
  \bibinfo {author} {\bibfnamefont {M.~B.}\ \bibnamefont {Plenio}},\ and\
  \bibinfo {author} {\bibfnamefont {B.}~\bibnamefont {Reznik}},\ }\bibfield
  {title} {\bibinfo {title} {Critical and noncritical long-range entanglement
  in klein-gordon fields},\ }\href {https://doi.org/10.1103/PhysRevA.80.012325}
  {\bibfield  {journal} {\bibinfo  {journal} {Phys. Rev. A}\ }\textbf {\bibinfo
  {volume} {80}},\ \bibinfo {pages} {012325} (\bibinfo {year}
  {2009})}\BibitemShut {NoStop}%
\bibitem [{\citenamefont {Bayat}\ \emph {et~al.}(2010)\citenamefont {Bayat},
  \citenamefont {Sodano},\ and\ \citenamefont {Bose}}]{Bayat2010}%
  \BibitemOpen
  \bibfield  {author} {\bibinfo {author} {\bibfnamefont {A.}~\bibnamefont
  {Bayat}}, \bibinfo {author} {\bibfnamefont {P.}~\bibnamefont {Sodano}},\ and\
  \bibinfo {author} {\bibfnamefont {S.}~\bibnamefont {Bose}},\ }\bibfield
  {title} {\bibinfo {title} {Negativity as the entanglement measure to probe
  the kondo regime in the spin-chain kondo model},\ }\href
  {https://doi.org/10.1103/PhysRevB.81.064429} {\bibfield  {journal} {\bibinfo
  {journal} {Phys. Rev. B}\ }\textbf {\bibinfo {volume} {81}},\ \bibinfo
  {pages} {064429} (\bibinfo {year} {2010})}\BibitemShut {NoStop}%
\bibitem [{\citenamefont {Bhattacharyya}\ \emph {et~al.}(2019)\citenamefont
  {Bhattacharyya}, \citenamefont {Jahn}, \citenamefont {Takayanagi},\ and\
  \citenamefont {Umemoto}}]{Bhattacharyya2019}%
  \BibitemOpen
  \bibfield  {author} {\bibinfo {author} {\bibfnamefont {A.}~\bibnamefont
  {Bhattacharyya}}, \bibinfo {author} {\bibfnamefont {A.}~\bibnamefont {Jahn}},
  \bibinfo {author} {\bibfnamefont {T.}~\bibnamefont {Takayanagi}},\ and\
  \bibinfo {author} {\bibfnamefont {K.}~\bibnamefont {Umemoto}},\ }\bibfield
  {title} {\bibinfo {title} {Entanglement of purification in many body systems
  and symmetry breaking},\ }\href
  {https://doi.org/10.1103/PhysRevLett.122.201601} {\bibfield  {journal}
  {\bibinfo  {journal} {Phys. Rev. Lett.}\ }\textbf {\bibinfo {volume} {122}},\
  \bibinfo {pages} {201601} (\bibinfo {year} {2019})}\BibitemShut {NoStop}%
\bibitem [{\citenamefont {Kudler-Flam}\ and\ \citenamefont
  {Ryu}(2019)}]{Kudler-Flam2019}%
  \BibitemOpen
  \bibfield  {author} {\bibinfo {author} {\bibfnamefont {J.}~\bibnamefont
  {Kudler-Flam}}\ and\ \bibinfo {author} {\bibfnamefont {S.}~\bibnamefont
  {Ryu}},\ }\bibfield  {title} {\bibinfo {title} {Entanglement negativity and
  minimal entanglement wedge cross sections in holographic theories},\ }\href
  {https://doi.org/10.1103/PhysRevD.99.106014} {\bibfield  {journal} {\bibinfo
  {journal} {Phys. Rev. D}\ }\textbf {\bibinfo {volume} {99}},\ \bibinfo
  {pages} {106014} (\bibinfo {year} {2019})}\BibitemShut {NoStop}%
\bibitem [{\citenamefont {Kusuki}\ \emph {et~al.}(2019)\citenamefont {Kusuki},
  \citenamefont {Kudler-Flam},\ and\ \citenamefont {Ryu}}]{Kusuki2019}%
  \BibitemOpen
  \bibfield  {author} {\bibinfo {author} {\bibfnamefont {Y.}~\bibnamefont
  {Kusuki}}, \bibinfo {author} {\bibfnamefont {J.}~\bibnamefont
  {Kudler-Flam}},\ and\ \bibinfo {author} {\bibfnamefont {S.}~\bibnamefont
  {Ryu}},\ }\bibfield  {title} {\bibinfo {title} {Derivation of holographic
  negativity in ${\mathrm{ads}}_{3}/{\mathrm{cft}}_{2}$},\ }\href
  {https://doi.org/10.1103/PhysRevLett.123.131603} {\bibfield  {journal}
  {\bibinfo  {journal} {Phys. Rev. Lett.}\ }\textbf {\bibinfo {volume} {123}},\
  \bibinfo {pages} {131603} (\bibinfo {year} {2019})}\BibitemShut {NoStop}%
\bibitem [{\citenamefont {Bayat}(2017)}]{Bayat2017}%
  \BibitemOpen
  \bibfield  {author} {\bibinfo {author} {\bibfnamefont {A.}~\bibnamefont
  {Bayat}},\ }\bibfield  {title} {\bibinfo {title} {Scaling of tripartite
  entanglement at impurity quantum phase transitions},\ }\href
  {https://doi.org/10.1103/PhysRevLett.118.036102} {\bibfield  {journal}
  {\bibinfo  {journal} {Phys. Rev. Lett.}\ }\textbf {\bibinfo {volume} {118}},\
  \bibinfo {pages} {036102} (\bibinfo {year} {2017})}\BibitemShut {NoStop}%
\bibitem [{\citenamefont {Gray}\ \emph {et~al.}(2019)\citenamefont {Gray},
  \citenamefont {Bayat}, \citenamefont {Pal},\ and\ \citenamefont
  {Bose}}]{Gray2019}%
  \BibitemOpen
  \bibfield  {author} {\bibinfo {author} {\bibfnamefont {J.}~\bibnamefont
  {Gray}}, \bibinfo {author} {\bibfnamefont {A.}~\bibnamefont {Bayat}},
  \bibinfo {author} {\bibfnamefont {A.}~\bibnamefont {Pal}},\ and\ \bibinfo
  {author} {\bibfnamefont {S.}~\bibnamefont {Bose}},\ }\href@noop {} {\bibinfo
  {title} {Scale invariant entanglement negativity at the many-body
  localization transition}} (\bibinfo {year} {2019}),\ \Eprint
  {https://arxiv.org/abs/1908.02761} {arXiv:1908.02761 [cond-mat.dis-nn]}
  \BibitemShut {NoStop}%
\bibitem [{\citenamefont {Terhal}\ \emph {et~al.}(2002)\citenamefont {Terhal},
  \citenamefont {Horodecki}, \citenamefont {Leung},\ and\ \citenamefont
  {DiVincenzo}}]{Terhal2002}%
  \BibitemOpen
  \bibfield  {author} {\bibinfo {author} {\bibfnamefont {B.~M.}\ \bibnamefont
  {Terhal}}, \bibinfo {author} {\bibfnamefont {M.}~\bibnamefont {Horodecki}},
  \bibinfo {author} {\bibfnamefont {D.~W.}\ \bibnamefont {Leung}},\ and\
  \bibinfo {author} {\bibfnamefont {D.~P.}\ \bibnamefont {DiVincenzo}},\
  }\bibfield  {title} {\bibinfo {title} {The entanglement of purification},\
  }\href {https://doi.org/10.1063/1.1498001} {\bibfield  {journal} {\bibinfo
  {journal} {Journal of Mathematical Physics}\ }\textbf {\bibinfo {volume}
  {43}},\ \bibinfo {pages} {4286} (\bibinfo {year} {2002})}\BibitemShut
  {NoStop}%
\bibitem [{\citenamefont {Dutta}\ and\ \citenamefont
  {Faulkner}(2019)}]{Dutta2019}%
  \BibitemOpen
  \bibfield  {author} {\bibinfo {author} {\bibfnamefont {S.}~\bibnamefont
  {Dutta}}\ and\ \bibinfo {author} {\bibfnamefont {T.}~\bibnamefont
  {Faulkner}},\ }\bibfield  {title} {\bibinfo {title} {A canonical purification
  for the entanglement wedge cross-section},\ }\Eprint
  {https://arxiv.org/abs/1905.00577} {arXiv:1905.00577 [hep-th]}  (\bibinfo
  {year} {2019}),\ \bibinfo {note} {unpublished}\BibitemShut {NoStop}%
\bibitem [{\citenamefont {Ryu}\ and\ \citenamefont
  {Takayanagi}(2006)}]{Ryu2006}%
  \BibitemOpen
  \bibfield  {author} {\bibinfo {author} {\bibfnamefont {S.}~\bibnamefont
  {Ryu}}\ and\ \bibinfo {author} {\bibfnamefont {T.}~\bibnamefont
  {Takayanagi}},\ }\bibfield  {title} {\bibinfo {title} {Holographic derivation
  of entanglement entropy from the anti--de sitter space/conformal field theory
  correspondence},\ }\href {https://doi.org/10.1103/PhysRevLett.96.181602}
  {\bibfield  {journal} {\bibinfo  {journal} {Phys. Rev. Lett.}\ }\textbf
  {\bibinfo {volume} {96}},\ \bibinfo {pages} {181602} (\bibinfo {year}
  {2006})}\BibitemShut {NoStop}%
\bibitem [{\citenamefont {Umemoto}\ and\ \citenamefont
  {Takayanagi}(2018)}]{Takayanagi2017}%
  \BibitemOpen
  \bibfield  {author} {\bibinfo {author} {\bibfnamefont {K.}~\bibnamefont
  {Umemoto}}\ and\ \bibinfo {author} {\bibfnamefont {T.}~\bibnamefont
  {Takayanagi}},\ }\bibfield  {title} {\bibinfo {title} {Entanglement of
  purification through holographic duality},\ }\href
  {https://doi.org/10.1038/s41567-018-0075-2} {\bibfield  {journal} {\bibinfo
  {journal} {Nat. Phys.}\ }\textbf {\bibinfo {volume} {14}},\ \bibinfo {pages}
  {573} (\bibinfo {year} {2018})},\ \Eprint {https://arxiv.org/abs/1708.09393}
  {arXiv:1708.09393} \BibitemShut {NoStop}%
\bibitem [{\citenamefont {Nguyen}\ \emph {et~al.}(2018)\citenamefont {Nguyen},
  \citenamefont {Devakul}, \citenamefont {Halbasch}, \citenamefont {Zaletel},\
  and\ \citenamefont {Swingle}}]{Nguyen2018}%
  \BibitemOpen
  \bibfield  {author} {\bibinfo {author} {\bibfnamefont {P.}~\bibnamefont
  {Nguyen}}, \bibinfo {author} {\bibfnamefont {T.}~\bibnamefont {Devakul}},
  \bibinfo {author} {\bibfnamefont {M.~G.}\ \bibnamefont {Halbasch}}, \bibinfo
  {author} {\bibfnamefont {M.~P.}\ \bibnamefont {Zaletel}},\ and\ \bibinfo
  {author} {\bibfnamefont {B.}~\bibnamefont {Swingle}},\ }\bibfield  {title}
  {\bibinfo {title} {{Entanglement of purification: from spin chains to
  holography}},\ }\href {https://doi.org/10.1007/JHEP01(2018)098} {\bibfield
  {journal} {\bibinfo  {journal} {Journal of High Energy Physics}\ }\textbf
  {\bibinfo {volume} {2018}},\ \bibinfo {pages} {1} (\bibinfo {year}
  {2018})}\BibitemShut {NoStop}%
\bibitem [{Note1()}]{Note1}%
  \BibitemOpen
  \bibinfo {note} {It has also been argued that the logarithmic negativity is
  dual to $E_W$~\cite {Kusuki2019}, but $\protect \mathcal {E}_N$ is not lower
  bounded by $I$, so we do not consider it here.}\BibitemShut {Stop}%
\bibitem [{Note2()}]{Note2}%
  \BibitemOpen
  \bibinfo {note} {$2E_P(A:B),S_R(A:B)$ and $I(A:B)$ all scale logarithmically
  with the UV cutoff with the same coefficient $c/3$ in front, see
  Ref.~\protect \rev@citealp
  {Takayanagi2017,Nguyen2018,Dutta2019}.}\BibitemShut {Stop}%
\bibitem [{Note3()}]{Note3}%
  \BibitemOpen
  \bibinfo {note} {Note that while the constituents are, $g, h$ are not
  monotonic under quantum operations on $A, B$.}\BibitemShut {Stop}%
\bibitem [{Note4()}]{Note4}%
  \BibitemOpen
  \bibinfo {note} {In this computation, the regions $A, B$ are taken to touch
  and the UV divergences in $E_W, I$ are regulated by a radial cutoff which is
  taken to zero after the subtraction $2 E_W - I$~\cite {Nguyen2018}. This
  prescription corresponds to the lattice regularization employed in our
  numerical results. An alternative procedure, in which the quantities are
  regularized by a small spacing between $A, B$, yields a different result
  \cite {Takayanagi2017}.}\BibitemShut {Stop}%
\bibitem [{\citenamefont {Ibinson}\ \emph {et~al.}(2008)\citenamefont
  {Ibinson}, \citenamefont {Linden},\ and\ \citenamefont
  {Winter}}]{Ibinson2008}%
  \BibitemOpen
  \bibfield  {author} {\bibinfo {author} {\bibfnamefont {B.}~\bibnamefont
  {Ibinson}}, \bibinfo {author} {\bibfnamefont {N.}~\bibnamefont {Linden}},\
  and\ \bibinfo {author} {\bibfnamefont {A.}~\bibnamefont {Winter}},\
  }\bibfield  {title} {\bibinfo {title} {{Robustness of quantum Markov
  chains}},\ }\href {https://doi.org/10.1007/s00220-007-0362-8} {\bibfield
  {journal} {\bibinfo  {journal} {Communications in Mathematical Physics}\
  }\textbf {\bibinfo {volume} {277}},\ \bibinfo {pages} {289} (\bibinfo {year}
  {2008})},\ \Eprint {https://arxiv.org/abs/0611057} {arXiv:0611057 [quant-ph]}
  \BibitemShut {NoStop}%
\bibitem [{Sup()}]{Supp}%
  \BibitemOpen
  \href@noop {} {\bibinfo {title} {Supplemental material}}\BibitemShut
  {NoStop}%
\bibitem [{Haa()}]{Haah2016}%
  \BibitemOpen
  \href@noop {} {}\bibinfo {note} {{J}eongwan {H}aah (private
  communication)}\BibitemShut {NoStop}%
\bibitem [{\citenamefont {D\"ur}\ \emph {et~al.}(2000)\citenamefont {D\"ur},
  \citenamefont {Vidal},\ and\ \citenamefont {Cirac}}]{Dur2000}%
  \BibitemOpen
  \bibfield  {author} {\bibinfo {author} {\bibfnamefont {W.}~\bibnamefont
  {D\"ur}}, \bibinfo {author} {\bibfnamefont {G.}~\bibnamefont {Vidal}},\ and\
  \bibinfo {author} {\bibfnamefont {J.~I.}\ \bibnamefont {Cirac}},\ }\bibfield
  {title} {\bibinfo {title} {Three qubits can be entangled in two inequivalent
  ways},\ }\href {https://doi.org/10.1103/PhysRevA.62.062314} {\bibfield
  {journal} {\bibinfo  {journal} {Phys. Rev. A}\ }\textbf {\bibinfo {volume}
  {62}},\ \bibinfo {pages} {062314} (\bibinfo {year} {2000})}\BibitemShut
  {NoStop}%
\bibitem [{\citenamefont {Hayden}\ \emph {et~al.}(2004)\citenamefont {Hayden},
  \citenamefont {Jozsa}, \citenamefont {Petz},\ and\ \citenamefont
  {Winter}}]{Hayden2004}%
  \BibitemOpen
  \bibfield  {author} {\bibinfo {author} {\bibfnamefont {P.}~\bibnamefont
  {Hayden}}, \bibinfo {author} {\bibfnamefont {R.}~\bibnamefont {Jozsa}},
  \bibinfo {author} {\bibfnamefont {D.}~\bibnamefont {Petz}},\ and\ \bibinfo
  {author} {\bibfnamefont {A.}~\bibnamefont {Winter}},\ }\bibfield  {title}
  {\bibinfo {title} {{Structure of States Which Satisfy Strong Subadditivity of
  Quantum Entropy with Equality}},\ }\href
  {https://doi.org/10.1007/s00220-004-1049-z} {\bibfield  {journal} {\bibinfo
  {journal} {Communications in Mathematical Physics}\ }\textbf {\bibinfo
  {volume} {246}},\ \bibinfo {pages} {359} (\bibinfo {year}
  {2004})}\BibitemShut {NoStop}%
\bibitem [{Note5()}]{Note5}%
  \BibitemOpen
  \bibinfo {note} {There is a caveat to use the finite-dimensional MPS as an
  approximation. It is only rigorously proven that the state can be faithfully
  represented if bond dimension grows with the system size~\cite
  {Verstraete2006, Arad2013}. For a finite bond dimension, it has only been
  shown that local properties can be well approximated \cite {Dalzell2019}. In
  order to make our argument, we have to assume that the finite bond dimension
  does not result in a substantial error in $g$ or $h$, which are nonlocal
  properties of the ground state. Despite not rigorous proven, the assumption
  is highly plausible because of empirical success of infinite MPS algorithms,
  where correlation functions and entanglement properties at long distances are
  extracted from a finite-dimensional MPS. Therefore the argument could be
  regarded as heuristic for a general gapped theory. However, it is rigorous if
  the ground state can be exactly represented by a finite-dimensional MPS, for
  example that of a MPS parent Hamiltonian \cite
  {Perez-Garcia2007}.}\BibitemShut {Stop}%
\bibitem [{\citenamefont {Verstraete}\ \emph
  {et~al.}(2005{\natexlab{a}})\citenamefont {Verstraete}, \citenamefont
  {Cirac}, \citenamefont {Latorre}, \citenamefont {Rico},\ and\ \citenamefont
  {Wolf}}]{Verstraete200502}%
  \BibitemOpen
  \bibfield  {author} {\bibinfo {author} {\bibfnamefont {F.}~\bibnamefont
  {Verstraete}}, \bibinfo {author} {\bibfnamefont {J.~I.}\ \bibnamefont
  {Cirac}}, \bibinfo {author} {\bibfnamefont {J.~I.}\ \bibnamefont {Latorre}},
  \bibinfo {author} {\bibfnamefont {E.}~\bibnamefont {Rico}},\ and\ \bibinfo
  {author} {\bibfnamefont {M.~M.}\ \bibnamefont {Wolf}},\ }\bibfield  {title}
  {\bibinfo {title} {Renormalization-group transformations on quantum states},\
  }\href {https://doi.org/10.1103/PhysRevLett.94.140601} {\bibfield  {journal}
  {\bibinfo  {journal} {Phys. Rev. Lett.}\ }\textbf {\bibinfo {volume} {94}},\
  \bibinfo {pages} {140601} (\bibinfo {year} {2005}{\natexlab{a}})}\BibitemShut
  {NoStop}%
\bibitem [{\citenamefont {Hastings}\ and\ \citenamefont
  {Koma}(2006)}]{Hastings2006}%
  \BibitemOpen
  \bibfield  {author} {\bibinfo {author} {\bibfnamefont {M.~B.}\ \bibnamefont
  {Hastings}}\ and\ \bibinfo {author} {\bibfnamefont {T.}~\bibnamefont
  {Koma}},\ }\bibfield  {title} {\bibinfo {title} {Spectral gap and exponential
  decay of correlations},\ }\href {https://doi.org/10.1007/s00220-006-0030-4}
  {\bibfield  {journal} {\bibinfo  {journal} {Communications in Mathematical
  Physics}\ }\textbf {\bibinfo {volume} {265}},\ \bibinfo {pages} {781}
  (\bibinfo {year} {2006})}\BibitemShut {NoStop}%
\bibitem [{Note6()}]{Note6}%
  \BibitemOpen
  \bibinfo {note} {The generalization of a triangle state to many parties is a
  polygon state, which is discussed in detail in \cite {Supp}. A polygon state
  is a triangle state with respect to any tripartition into contiguous
  regions.}\BibitemShut {Stop}%
\bibitem [{\citenamefont {Chen}\ \emph {et~al.}(2011)\citenamefont {Chen},
  \citenamefont {Gu},\ and\ \citenamefont {Wen}}]{Chen2011}%
  \BibitemOpen
  \bibfield  {author} {\bibinfo {author} {\bibfnamefont {X.}~\bibnamefont
  {Chen}}, \bibinfo {author} {\bibfnamefont {Z.-C.}\ \bibnamefont {Gu}},\ and\
  \bibinfo {author} {\bibfnamefont {X.-G.}\ \bibnamefont {Wen}},\ }\bibfield
  {title} {\bibinfo {title} {Classification of gapped symmetric phases in
  one-dimensional spin systems},\ }\href
  {https://doi.org/10.1103/PhysRevB.83.035107} {\bibfield  {journal} {\bibinfo
  {journal} {Phys. Rev. B}\ }\textbf {\bibinfo {volume} {83}},\ \bibinfo
  {pages} {035107} (\bibinfo {year} {2011})}\BibitemShut {NoStop}%
\bibitem [{\citenamefont {Perez-Garcia}\ \emph {et~al.}(2007)\citenamefont
  {Perez-Garcia}, \citenamefont {Verstraete}, \citenamefont {Wolf},\ and\
  \citenamefont {Cirac}}]{Perez-Garcia2007}%
  \BibitemOpen
  \bibfield  {author} {\bibinfo {author} {\bibfnamefont {D.}~\bibnamefont
  {Perez-Garcia}}, \bibinfo {author} {\bibfnamefont {F.}~\bibnamefont
  {Verstraete}}, \bibinfo {author} {\bibfnamefont {M.~M.}\ \bibnamefont
  {Wolf}},\ and\ \bibinfo {author} {\bibfnamefont {J.~I.}\ \bibnamefont
  {Cirac}},\ }\bibfield  {title} {\bibinfo {title} {Matrix product state
  representations},\ }\href@noop {} {\bibfield  {journal} {\bibinfo  {journal}
  {Quantum Info. Comput.}\ }\textbf {\bibinfo {volume} {7}},\ \bibinfo {pages}
  {401–430} (\bibinfo {year} {2007})}\BibitemShut {NoStop}%
\bibitem [{Note7()}]{Note7}%
  \BibitemOpen
  \bibinfo {note} {Technically, taking the limit assumes the continuity of
  $g(A:B)$ and $h(A:B)$ with respect to the reduced density matrix $\rho _{AB}$
  \cite {Supp}. The continuity property of $E_P$ and $S_R$ has already been
  proven in \cite {Terhal2002, Akers2020}.}\BibitemShut {Stop}%
\bibitem [{\citenamefont {Zou}\ \emph {et~al.}(2018)\citenamefont {Zou},
  \citenamefont {Milsted},\ and\ \citenamefont {Vidal}}]{Zou2018}%
  \BibitemOpen
  \bibfield  {author} {\bibinfo {author} {\bibfnamefont {Y.}~\bibnamefont
  {Zou}}, \bibinfo {author} {\bibfnamefont {A.}~\bibnamefont {Milsted}},\ and\
  \bibinfo {author} {\bibfnamefont {G.}~\bibnamefont {Vidal}},\ }\bibfield
  {title} {\bibinfo {title} {Conformal data and renormalization group flow in
  critical quantum spin chains using periodic uniform matrix product states},\
  }\href {https://doi.org/10.1103/PhysRevLett.121.230402} {\bibfield  {journal}
  {\bibinfo  {journal} {Phys. Rev. Lett.}\ }\textbf {\bibinfo {volume} {121}},\
  \bibinfo {pages} {230402} (\bibinfo {year} {2018})}\BibitemShut {NoStop}%
\bibitem [{\citenamefont {Zou}\ \emph {et~al.}(2020)\citenamefont {Zou},
  \citenamefont {Milsted},\ and\ \citenamefont {Vidal}}]{Zou202001}%
  \BibitemOpen
  \bibfield  {author} {\bibinfo {author} {\bibfnamefont {Y.}~\bibnamefont
  {Zou}}, \bibinfo {author} {\bibfnamefont {A.}~\bibnamefont {Milsted}},\ and\
  \bibinfo {author} {\bibfnamefont {G.}~\bibnamefont {Vidal}},\ }\bibfield
  {title} {\bibinfo {title} {Conformal fields and operator product expansion in
  critical quantum spin chains},\ }\href
  {https://doi.org/10.1103/PhysRevLett.124.040604} {\bibfield  {journal}
  {\bibinfo  {journal} {Phys. Rev. Lett.}\ }\textbf {\bibinfo {volume} {124}},\
  \bibinfo {pages} {040604} (\bibinfo {year} {2020})}\BibitemShut {NoStop}%
\bibitem [{\citenamefont {Zou}\ and\ \citenamefont {Vidal}(2020)}]{Zou202002}%
  \BibitemOpen
  \bibfield  {author} {\bibinfo {author} {\bibfnamefont {Y.}~\bibnamefont
  {Zou}}\ and\ \bibinfo {author} {\bibfnamefont {G.}~\bibnamefont {Vidal}},\
  }\bibfield  {title} {\bibinfo {title} {Emergence of conformal symmetry in
  quantum spin chains: Antiperiodic boundary conditions and supersymmetry},\
  }\href {https://doi.org/10.1103/PhysRevB.101.045132} {\bibfield  {journal}
  {\bibinfo  {journal} {Phys. Rev. B}\ }\textbf {\bibinfo {volume} {101}},\
  \bibinfo {pages} {045132} (\bibinfo {year} {2020})}\BibitemShut {NoStop}%
\bibitem [{\citenamefont {O'Brien}\ and\ \citenamefont
  {Fendley}(2018)}]{Obrien_2018}%
  \BibitemOpen
  \bibfield  {author} {\bibinfo {author} {\bibfnamefont {E.}~\bibnamefont
  {O'Brien}}\ and\ \bibinfo {author} {\bibfnamefont {P.}~\bibnamefont
  {Fendley}},\ }\bibfield  {title} {\bibinfo {title} {Lattice supersymmetry and
  order-disorder coexistence in the tricritical ising model},\ }\href
  {https://doi.org/10.1103/PhysRevLett.120.206403} {\bibfield  {journal}
  {\bibinfo  {journal} {Phys. Rev. Lett.}\ }\textbf {\bibinfo {volume} {120}},\
  \bibinfo {pages} {206403} (\bibinfo {year} {2018})}\BibitemShut {NoStop}%
\bibitem [{Note8()}]{Note8}%
  \BibitemOpen
  \bibinfo {note} {The extrapolation is based on the numerical observation that
  $g$ and $h$ converges as a power of $1/N$, where the power is determined
  empirically.}\BibitemShut {Stop}%
\bibitem [{Note9()}]{Note9}%
  \BibitemOpen
  \bibinfo {note} {Note that at $\Delta =0$ both $g^{{\protect \textsl {\relax
  \protect \fontsize {5}{6}\protect \selectfont CFT}}}$ and $h^{{\protect
  \textsl {\relax \protect \fontsize {5}{6}\protect \selectfont CFT}}}$ equal
  twice of that for the Ising CFT, in accordance with the duality between the
  CFTs.}\BibitemShut {Stop}%
\bibitem [{\citenamefont {Verstraete}\ and\ \citenamefont
  {Cirac}(2006)}]{Verstraete2006}%
  \BibitemOpen
  \bibfield  {author} {\bibinfo {author} {\bibfnamefont {F.}~\bibnamefont
  {Verstraete}}\ and\ \bibinfo {author} {\bibfnamefont {J.~I.}\ \bibnamefont
  {Cirac}},\ }\bibfield  {title} {\bibinfo {title} {{Matrix product states
  represent ground states faithfully}},\ }\href
  {https://doi.org/10.1103/PhysRevB.73.094423} {\bibfield  {journal} {\bibinfo
  {journal} {Physical Review B}\ }\textbf {\bibinfo {volume} {73}},\ \bibinfo
  {pages} {094423} (\bibinfo {year} {2006})}\BibitemShut {NoStop}%
\bibitem [{\citenamefont {Arad}\ \emph {et~al.}(2013)\citenamefont {Arad},
  \citenamefont {Kitaev}, \citenamefont {Landau},\ and\ \citenamefont
  {Vazirani}}]{Arad2013}%
  \BibitemOpen
  \bibfield  {author} {\bibinfo {author} {\bibfnamefont {I.}~\bibnamefont
  {Arad}}, \bibinfo {author} {\bibfnamefont {A.}~\bibnamefont {Kitaev}},
  \bibinfo {author} {\bibfnamefont {Z.}~\bibnamefont {Landau}},\ and\ \bibinfo
  {author} {\bibfnamefont {U.}~\bibnamefont {Vazirani}},\ }\bibfield  {title}
  {\bibinfo {title} {An area law and sub-exponential algorithm for 1d
  systems},\ }\Eprint {https://arxiv.org/abs/1301.1162} {arXiv:1301.1162
  [quant-ph]}  (\bibinfo {year} {2013}),\ \bibinfo {note}
  {unpublished}\BibitemShut {NoStop}%
\bibitem [{\citenamefont {Dalzell}\ and\ \citenamefont
  {Brand{\~{a}}o}(2019)}]{Dalzell2019}%
  \BibitemOpen
  \bibfield  {author} {\bibinfo {author} {\bibfnamefont {A.~M.}\ \bibnamefont
  {Dalzell}}\ and\ \bibinfo {author} {\bibfnamefont {F.~G. S.~L.}\ \bibnamefont
  {Brand{\~{a}}o}},\ }\bibfield  {title} {\bibinfo {title} {Locally accurate
  {MPS} approximations for ground states of one-dimensional gapped local
  {H}amiltonians},\ }\href {https://doi.org/10.22331/q-2019-09-23-187}
  {\bibfield  {journal} {\bibinfo  {journal} {{Quantum}}\ }\textbf {\bibinfo
  {volume} {3}},\ \bibinfo {pages} {187} (\bibinfo {year} {2019})}\BibitemShut
  {NoStop}%
\bibitem [{\citenamefont {Akers}\ and\ \citenamefont {Rath}(2020)}]{Akers2020}%
  \BibitemOpen
  \bibfield  {author} {\bibinfo {author} {\bibfnamefont {C.}~\bibnamefont
  {Akers}}\ and\ \bibinfo {author} {\bibfnamefont {P.}~\bibnamefont {Rath}},\
  }\bibfield  {title} {\bibinfo {title} {{Entanglement wedge cross sections
  require tripartite entanglement}},\ }\bibfield  {journal} {\bibinfo
  {journal} {Journal of High Energy Physics}\ }\textbf {\bibinfo {volume}
  {2020}},\ \href {https://doi.org/10.1007/JHEP04(2020)208}
  {10.1007/JHEP04(2020)208} (\bibinfo {year} {2020})\BibitemShut {NoStop}%
\bibitem [{\citenamefont {Verstraete}\ \emph
  {et~al.}(2005{\natexlab{b}})\citenamefont {Verstraete}, \citenamefont
  {Cirac}, \citenamefont {Latorre}, \citenamefont {Rico},\ and\ \citenamefont
  {Wolf}}]{Verstraete2005}%
  \BibitemOpen
  \bibfield  {author} {\bibinfo {author} {\bibfnamefont {F.}~\bibnamefont
  {Verstraete}}, \bibinfo {author} {\bibfnamefont {J.~I.}\ \bibnamefont
  {Cirac}}, \bibinfo {author} {\bibfnamefont {J.~I.}\ \bibnamefont {Latorre}},
  \bibinfo {author} {\bibfnamefont {E.}~\bibnamefont {Rico}},\ and\ \bibinfo
  {author} {\bibfnamefont {M.~M.}\ \bibnamefont {Wolf}},\ }\bibfield  {title}
  {\bibinfo {title} {Renormalization-group transformations on quantum states},\
  }\href {https://doi.org/10.1103/PhysRevLett.94.140601} {\bibfield  {journal}
  {\bibinfo  {journal} {Phys. Rev. Lett.}\ }\textbf {\bibinfo {volume} {94}},\
  \bibinfo {pages} {140601} (\bibinfo {year} {2005}{\natexlab{b}})}\BibitemShut
  {NoStop}%
\end{thebibliography}%

\appendix
\onecolumngrid
\section{Vanishing \texorpdfstring{$g$}{g} and triangle states}\label{app:proof_g}
In this section we prove a structure theorem for tripartite quantum states $\ket{\psi}_{ABC} \in \calH_A\otimes\calH_B\otimes\calH_C$ with vanishing $g(A:B)$. We first remind the reader of the definitions of conditional mutual information.
%\begin{definition}[Mutual information]
%Given a density matrix $\rho_{AB}$ on $\calH_A\otimes\calH_B$ and %the reduced density matrices $\rho_A=\Tr_B \rho_{AB}$, $\rho_B=\Tr_A %\rho_{AB}$, the mutual information $I(A:B)=S_A+S_B-S_{AB}$, where %%$S_{\alpha}=-\Tr(\rho_{\alpha}\log \rho_{\alpha})$ is the %von~Neumann entropy.
%\end{definition}
%\noindent Observe that if $\rho_{AB}$ is a pure state, we have $I(A:B)=2S_A=2S_B$.
\begin{definition}[Conditional mutual information]
\label{def:cmi}
Given three parties $A$, $B$ and $C$, the conditional mutual information is defined by $I(A:C|B) \equiv I(A:BC)-I(A:B) = S(AB)+S(BC)-S(ABC)-S(B)$.
\end{definition}

\noindent We will use the following properties of the conditional mutual information:
\begin{enumerate}
    \item $I(A:C|B)=I(C:A|B)$ ~(commutativity);
    \item $I(A:C|B)\geq 0$ ~(strong subadditivity).
\end{enumerate}

\noindent We first re-state the definition of a ``triangle state".
\begin{definition}[Triangle state]\label{def:triangle_state} A pure tripartite state $\ket{\psi}_{ABC} \in \calH_A\otimes\calH_B\otimes\calH_C $ is a triangle state if for each local Hilbert space there exists a bipartition $\calH_\alpha =(\calH_{\alpha_L}\otimes \calH_{\alpha_R}) \oplus \calH_{\alpha}^0$ ($\alpha =A,B,C$) such that
\begin{equation}
	\ket{\psi}_{ABC}= \ket{\psi}_{A_RB_L}\ket{\psi}_{B_RC_L}\ket{\psi}_{C_RA_L} .
	\label{eq:triangle_app}
\end{equation}
\end{definition}
\noindent Observe that $\ket{\psi}_{ABC}$ has no support in any $\calH_{\alpha}^0$.
Note that for notational clarity, the main text version of this definition defines the bipartition as $\calH_\alpha =\calH_{\alpha_L}\otimes \calH_{\alpha_R}$, and as a result that equivalence is only up to local isometry. 
The triangle state is a tensor product of pure states which may be entangled between at most two out of three parties.

\vspace{3ex}
We now state and prove the first structure theorem~\cite{Haah2016}:
\begin{theorem}[States with vanishing $g$]\label{thm:zero_g_app}
A pure tripartite quantum state $\ket{\psi}_{ABC}$ is a triangle state if and only if for any two-party reduced density matrix ($\rho_{AB}, \rho_{BC},\rho_{CA}$), $g(A:B) = g(B:C) = g(C:A) = 0$, respectively.
\end{theorem}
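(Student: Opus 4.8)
The plan is to prove the two directions separately, establishing the forward direction by an explicit optimal purification and the converse by turning the saturation $2E_P(A:B)=I(A:B)$ into rigid entropic equality conditions.

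For the \emph{only if} direction, suppose $\ket{\psi}_{ABC}$ is a triangle state. Tracing out $C$ gives $\rho_{AB}=\ket{\psi}\bra{\psi}_{A_RB_L}\otimes\rho_{A_L}\otimes\rho_{B_R}$, where $\rho_{A_L},\rho_{B_R}$ are the reduced states of the two pairs touching $C$; a short computation yields $I(A:B)=2S(A_R)$. I would then exhibit a purification of $\rho_{AB}$ saturating $E_P\ge I/2$: keep the pure factor $\ket{\psi}_{A_RB_L}$, purify $\rho_{A_L}$ by an ancilla assigned to $C_L$ and $\rho_{B_R}$ by an ancilla assigned to $C_R$. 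With this assignment the factor $A_L$ is purified \emph{inside} $AC_L$ while the $B_R$ ancilla pair lies entirely \emph{outside} $AC_L$, so $\rho_{AC_L}=\rho_{A_R}\otimes(\text{pure})$ and $S_{AC_L}=S(A_R)=I(A:B)/2$. Hence $g(A:B)=0$, and $g(B:C)=g(C:A)=0$ follow by relabeling.

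For the \emph{if} direction it suffices to assume a single equality, say $g(A:B)=0$; the other two then hold automatically once the triangle structure is established. Since the optimum in $E_P$ is attained on a finite-dimensional purification \cite{Ibinson2008}, there is a pure $\ket{\phi}_{ABC_LC_R}$ with $S_{AC_L}=I(A:B)/2$. Using purity, $2S_{AC_L}=I(AC_L:BC_R)$, and the chain rule gives $I(AC_L:BC_R)=I(A:B)+I(A:C_R|B)+I(C_L:BC_R|A)$ with both conditional terms non-negative by strong subadditivity. Saturation therefore forces $I(A:C_R|B)=0$ \emph{and} $I(C_L:BC_R|A)=0$ in this optimal purification.

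The final step is to read off the wavefunction structure from these two vanishing conditional mutual informations. Because $\ket{\phi}$ is globally pure, each condition collapses to a sharper equality: $I(C_L:BC_R|A)=0$ becomes the Araki--Lieb equality $S_{AC_L}=S_A-S_{C_L}$, whose saturation splits $\calH_A=\calH_{A''}\otimes\calH_{A'}$ and peels off a pure pair $\ket{\omega}_{A''C_L}$, leaving a pure $\ket{\phi'}_{A'BC_R}$; within $\ket{\phi'}$ the condition $I(A:C_R|B)=0$ reduces to the subadditivity equality $S_{A'C_R}=S_{A'}+S_{C_R}$, forcing $\rho_{A'C_R}=\rho_{A'}\otimes\rho_{C_R}$ and hence $\calH_B=\calH_{B'}\otimes\calH_{B''}$ with $\ket{\phi'}=\ket{\zeta}_{A'B'}\ket{\theta}_{B''C_R}$. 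Thus $\ket{\phi}=\ket{\zeta}_{A'B'}\ket{\omega}_{A''C_L}\ket{\theta}_{B''C_R}$ is itself a triangle state. Finally, since $\ket{\psi}_{ABC}$ and $\ket{\phi}_{ABC_LC_R}$ purify the same $\rho_{AB}$, they differ by an isometry $V:\calH_{C_L}\otimes\calH_{C_R}\to\calH_C$ on the purifying system; identifying $\calH_C=(\calH_{C_L}\otimes\calH_{C_R})\oplus\calH_C^0$ with the image of $V$ transports the factorization to $\ket{\psi}_{ABC}$, giving the triangle form of Eq.~\eqref{eq:triangle_app}. I expect the main obstacle to be this last stage: correctly invoking the equality (saturation) conditions for the Araki--Lieb and subadditivity inequalities as structural statements, and tracking the purification isometry so that the product structure of $\ket{\phi}$ descends to the originally given $C$ rather than to the auxiliary $C_LC_R$.
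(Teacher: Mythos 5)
Your proposal is correct, and its skeleton matches the paper's: the forward direction exhibits the triangle state's natural purification to saturate $E_P\geq I/2$, and the converse writes $g(A:B)=I(C_L:BC_R|A)+I(A:C_R|B)$ for an optimal purification (the paper's Eq.~\eqref{eq:g_decomposition1}) and forces both conditional mutual informations to vanish by strong subadditivity. Where you genuinely differ is the structural step that turns the two vanishing conditional mutual informations into a factorized wavefunction. The paper invokes the quantum Markov chain structure theorem of Hayden et al.\ (Theorem~\ref{thm:quantum_markov_chain}) and then notes that, since the global state is pure, the direct-sum decomposition in Eq.~\eqref{eq:hayden_decomp} collapses to a single term. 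You instead exploit global purity from the outset: $I(C_L:BC_R|A)=0$ is equivalent (your Araki--Lieb phrasing, or directly) to $S(C_L)+S(BC_R)=S(C_LBC_R)$, i.e.\ exact saturation of subadditivity, hence $\rho_{C_L(BC_R)}=\rho_{C_L}\otimes\rho_{BC_R}$; uniqueness of purifications up to isometry on the purifying system $A$ then splits $\calH_A$ and peels off the pure pair $\ket{\omega}_{A''C_L}$, and the second condition is handled identically inside the remaining pure factor. This route needs only two elementary facts---$I(X{:}Y)=0$ iff $\rho_{XY}=\rho_X\otimes\rho_Y$, and that purifications of a product state factorize up to isometry---so it bypasses the full mixed-state Markov-chain theorem entirely; the paper needs that heavier tool anyway for the companion theorem on $h$, where genuinely mixed Markov chains appear, which is presumably why it reuses it here. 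Your final bookkeeping---transporting the triangle structure of $\ket{\phi}_{ABC_LC_R}$ back to $\calH_C$ via the purification isometry, whose image supplies the $(\calH_{C_L}\otimes\calH_{C_R})\oplus\calH_C^0$ decomposition of Definition~\ref{def:triangle_state}---matches the paper's closing remark, as does your observation that a single vanishing $g(A:B)$ suffices to conclude, with the other two vanishing a posteriori by the forward direction.
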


\noindent To prove Theorem~\ref{thm:zero_g_app}, we will need the following theorem from Hayden et al.~\cite{Hayden2004}:
\begin{theorem}[Quantum Markov property \cite{Hayden2004}]\label{thm:quantum_markov_chain}
Let $\rho_{ABC}$ be a quantum state on $\calH=\calH_A \otimes \calH_B \otimes \calH_C$. Then $\rho_{ABC}$ satisfies strong subadditivity with equality---i.e., $I(A:C|B) = 0$---if and only if there exists a decomposition of $\calH_B$
\begin{equation}
    \calH_B = \bigoplus_{i} \calH_{B_L}^i \otimes \calH_{B_R}^i
\end{equation}
such that
\begin{align}
    \rho_{ABC} = \sum_i q_i \, \rho_{AB_L}^i \otimes \rho_{B_RC}^i \,,
    \label{eq:hayden_decomp}
\end{align} 
where $\rho_{AB_L}^i , \rho_{B_RC}^i$ are density matrices in $\calH_A \otimes \calH_{B_L}^i , \calH_{B_R}^i \otimes \calH_C$ respectively,
	and $\{q_i\}$ forms a probability distribution $[$i.e., $q_i\geq 0$ and $\sum_{i}q_i=1$$]$.
\end{theorem}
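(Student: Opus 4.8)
The plan is to prove the two implications separately, treating the stated decomposition $\Rightarrow$ $I(A{:}C|B)=0$ as routine and concentrating the work on the converse. For the easy direction I would assume $\rho_{ABC}=\sum_i q_i\,\rho^i_{AB_L}\otimes\rho^i_{B_RC}$, where the label $i$ indexes the mutually orthogonal sectors $\calH^i_{B_L}\otimes\calH^i_{B_R}$ of $\calH_B$. Tracing out $C$, $A$, and $AC$ in turn yields $\rho_{AB}=\sum_i q_i\rho^i_{AB_L}\otimes\rho^i_{B_R}$, $\rho_{BC}=\sum_i q_i\rho^i_{B_L}\otimes\rho^i_{B_RC}$, and $\rho_B=\sum_i q_i\rho^i_{B_L}\otimes\rho^i_{B_R}$, all block-diagonal in $i$. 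Because the blocks are orthogonal, every von Neumann entropy splits into the Shannon entropy $H(\{q_i\})$ plus a $q_i$-weighted sum of within-block entropies, and within each block the tensor structure makes each factor's entropy additive. Substituting into $I(A{:}C|B)=S(AB)+S(BC)-S(ABC)-S(B)$, the $H(\{q_i\})$ terms cancel $1+1-1-1$ and every within-block entropy $S(\rho^i_{AB_L}),S(\rho^i_{B_L}),S(\rho^i_{B_R}),S(\rho^i_{B_RC})$ cancels pairwise, giving $I(A{:}C|B)=0$.

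For the forward direction the first step is to upgrade the scalar equality $I(A{:}C|B)=0$ to an operator identity. I would use the fact that, writing relative entropy $S(\rho\|\sigma)\equiv\Tr[\rho(\log\rho-\log\sigma)]$, one has the exact rewriting
\begin{equation}
I(A{:}C|B)=S\!\left(\rho_{ABC}\,\big\|\,\rho_A\otimes\rho_{BC}\right)-S\!\left(\rho_{AB}\,\big\|\,\rho_A\otimes\rho_{B}\right),
\end{equation}
so that $I(A{:}C|B)\ge 0$ is precisely the monotonicity of relative entropy under the partial-trace channel $\Tr_C$, and $I(A{:}C|B)=0$ is equality in that data-processing inequality. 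Invoking Petz's equality theorem for the data-processing inequality, equality forces the Petz recovery map $\mathcal R$ associated with $\Tr_C$ and reference $\rho_A\otimes\rho_{BC}$ to reconstruct $\rho_{ABC}$ from $\rho_{AB}$. Computing $\mathcal R$ (the adjoint of $\Tr_C$ being $(\cdot)\otimes\mathbb 1_C$), the $\rho_A$ factors cancel on the support and one obtains, with identities on complementary factors understood,
\begin{equation}
\rho_{ABC}=\rho_{BC}^{1/2}\,\rho_B^{-1/2}\,\rho_{AB}\,\rho_B^{-1/2}\,\rho_{BC}^{1/2},
\end{equation}
equivalently the logarithmic identity $\log\rho_{ABC}+\log\rho_B=\log\rho_{AB}+\log\rho_{BC}$ of Hayden et al. (As an alternative first step one can bypass Petz and derive this identity directly from the equality case of Lieb's concavity theorem, which is the route taken in the cited reference.)

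The second step is to extract the block decomposition of $\calH_B$ from this identity. The recovery expression shows that $\rho_{ABC}$ is built from $\rho_{AB}$ by an operation acting only on $BC$ that commutes with everything on $A$; concretely, the operator $\rho_{AB}\rho_B^{-1}$ (supported on $AB$, trivial on $C$) and the operator $\rho_{BC}\rho_B^{-1}$ (supported on $BC$, trivial on $A$) must be jointly compatible on their shared factor $\calH_B$. Restricting to $\mathrm{supp}\,\rho_B$ and considering the finite-dimensional $*$-algebra on $\calH_B$ generated by the $B$-marginals of these two families of operators, the structure theorem for finite-dimensional C*-algebras (Wedderburn / double commutant) decomposes $\mathrm{supp}\,\rho_B=\bigoplus_i\calH^i_{B_L}\otimes\calH^i_{B_R}$, with the $A$-correlating operators acting within the $\calH^i_{B_L}$ factors and the $C$-correlating operators within the $\calH^i_{B_R}$ factors. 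Reinserting this into the identity for $\rho_{ABC}$ and reading off the block weights $q_i$ produces $\rho_{ABC}=\sum_i q_i\,\rho^i_{AB_L}\otimes\rho^i_{B_RC}$, completing the proof.

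The main obstacle is the forward direction, and within it the passage from the single number $\Tr[\rho_{ABC}K]=0$ (with $K=\log\rho_{ABC}-\log\rho_{AB}-\log\rho_{BC}+\log\rho_B$) to an operator-valued conclusion: since $\Tr[\rho K]=0$ does not by itself force $K=0$, one genuinely needs the equality case of a nontrivial operator inequality, supplied here by Petz's theorem (or equivalently by tracking equality in Lieb concavity). The subsequent algebraic extraction of the $\bigoplus_i\calH^i_{B_L}\otimes\calH^i_{B_R}$ structure is then conceptually standard but must be handled with care on the supports of the marginals -- discarding the null complement $\calH^0_B$ and managing the non-commutativity of $\rho_{AB}$ and $\rho_{BC}$ across the shared factor $\calH_B$ -- which is where most of the bookkeeping in a complete proof would reside.
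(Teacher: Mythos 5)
The paper does not actually prove this statement: it is imported verbatim from Hayden--Jozsa--Petz--Winter \cite{Hayden2004} and used as a black box, so there is no internal proof to compare against. Judged against the cited source, your proposal reconstructs essentially the HJPW argument. The easy direction---block-orthogonal decomposition implies $I(A{:}C|B)=0$ via $S\big(\sum_i q_i \rho^i\big)=H(\{q_i\})+\sum_i q_i S(\rho^i)$ on orthogonal supports plus within-block additivity---is the standard computation and is correct. For the converse, your chain (rewriting $I(A{:}C|B)$ as a difference of relative entropies, recognizing $I(A{:}C|B)=0$ as equality in data processing under $\Tr_C$, invoking Petz's equality theorem to get the recovery identity $\rho_{ABC}=\rho_{BC}^{1/2}\rho_B^{-1/2}\rho_{AB}\rho_B^{-1/2}\rho_{BC}^{1/2}$, then extracting the splitting of $\calH_B$ from the structure theory of finite-dimensional $*$-algebras) is the same route taken in \cite{Hayden2004}.

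One step you should tighten: the assertion that $\rho_{AB}\rho_B^{-1}$ and $\rho_{BC}\rho_B^{-1}$ ``must be jointly compatible on their shared factor $\calH_B$'' is the crux of the whole theorem, and it does not follow readily from the single ($t=-i/2$) recovery identity, nor from the logarithmic identity $\log\rho_{ABC}+\log\rho_B=\log\rho_{AB}+\log\rho_{BC}$ alone (exponentials of sums do not factor without commutativity, which is exactly what is at stake). What Petz's sufficiency theorem actually delivers is the full one-parameter family $\rho_{ABC}^{it}=\rho_{AB}^{it}\,\rho_B^{-it}\,\rho_{BC}^{it}$ for all real $t$ (identities on complementary factors understood), from which one derives the operator commutation $[\rho_{AB}\otimes\mathbf{1}_C,\;\mathbf{1}_A\otimes\rho_{BC}]=0$. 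Only then do the two relevant algebras on $\calH_B$---generated by the $B$-marginals $\Tr_A[(X_A\otimes\mathbf{1}_B)\rho_{AB}]$ and $\Tr_C[(\mathbf{1}_B\otimes Y_C)\rho_{BC}]$---commute, so that the Wedderburn/double-commutant decomposition applies with the $A$-correlations confined to the $\calH^i_{B_L}$ factors and the $C$-correlations to the $\calH^i_{B_R}$ factors, and the product form of each block can be read off. With that repair (which is where the real work in \cite{Hayden2004} resides), your plan is complete and faithful to the original proof.
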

\noindent For a pure tripartite state $\ket{\psi}_{ABC}$ satisfying $I(A:C|B)=0$, since the density matrix of a pure state is not convex combinations of mixed states, the sum contains only one term and $\ket{\psi}_{ABC} = \ket{\psi}_{AB_L}\ket{\psi}_{B_RC}$.

\begin{proof}[Proof of Theorem~\ref{thm:zero_g_app}]
One direction of the theorem is straightforward to prove. That is, if a pure quantum state $\rho_{ABC}$ is a triangle state, then for any of the three bipartite reduced density matrices, $g(A:B) = g(B:C) = g(C:A) = 0$. This can be seen by computing, for example, $S(AC_L)$ and $I(A:B)$ for Eq.~\eqref{eq:triangle_app}, and observing that $S(AC_L) = I(A:B)/2$. Since any purification provides an upper bound on $E_P(A:B)$, this must be the optimal one, as the lower bound is saturated. Similar computation may be done for $g(B:C)$ and $g(C:A)$ for reduced density matrices of the triangle state to show that these, too, must equal zero.

To prove the converse, consider an optimal purification of $\rho_{AB}$ to $C=C_L\otimes C_R$. Then we can write $g(A:B)$ as a sum of two non-negative quantities, in two different ways
\begin{subequations} \label{eq:g_decompositions} \begin{align}
    g(A:B) &= I(C_L:BC_R|A) + I(C_R:A|B) \label{eq:g_decomposition1} \\ &= I(C_R:AC_L|B) + I(C_L:B|A) . \label{eq:g_decomposition2}
\end{align} \end{subequations}
The decompositions follow from rewriting $g$ as $g(A:B) = 2S(AC_L) - I(A:B) = I(AC_L:BC_R) - I(A:B)$ and repeated use of Definition~\ref{def:cmi}.
Thus, if $g(A:B)=0$, then all four of the condition mutual informations vanishes:
\begin{align}
    I(C_L:B|A) = I(C_R:A|B) = I(C_L:BC_R|A) = I(C_R:AC_L|B) = 0 .
    \label{eq:zero_cmi}
\end{align}
%We show this by rewriting Eq.~\eqref{eq:g0} in terms of conditional mutual informations:
%where we have used monogamy equality for pure states.
%where we have repeatedly use Definition~\ref{def:cmi} and the commutativity property of conditional mutual information. By strong subadditivity, each term above must individually be equal to zero and we thus prove Eqs.~\eqref{eq:zero_cmi}.
In quantum information language Eqs.~\eqref{eq:zero_cmi} imply that the state $\rho_{ABC_LC_R}$ forms a quantum Markov chain across the tripartitions $C_L|A|BC_R$ and $AC_L|B|C_R$. 

Applying Theorem~\ref{thm:quantum_markov_chain} to Eq.~\eqref{eq:g_decompositions}, we obtain a decomposition along the conditioned Hilbert space of the optimally purified state. However, for either tripartition $C_L:A:BC_R$ or $AC_L:B:C_R$, the state in question is pure, so as observed earlier, the sum in Eq.~\eqref{eq:hayden_decomp} can contain only one term. The pure state must therefore be a tensor product of two pure states. Decomposing $\calH_A=\calH_{A_R}\otimes \calH_{A_L}$ for the quantum Markov chain across $C_L:A:BC_R$, we obtain
\begin{equation}
    \ket{\psi}_{ABC_LC_R} = \ket{\psi}_{C_LA_R} \ket{\psi}_{A_LBC_R} .
\end{equation}
Next, using the other Markov chain $AC_L:B:C_R$ in Eq.~\eqref{eq:zero_cmi} we may write $ \ket{\psi}_{A_LBC_R}$ as yet another tensor product of pure states: 
\begin{equation}
    \ket{\psi}_{A_LBC_R} = \ket{\psi}_{A_LB_R} \ket{\psi}_{B_LC_R} .
\end{equation}
The optimal purification is therefore decomposed as
\begin{equation}
    \ket{\psi}_{ABC_L C_R} = \ket{\psi}_{C_LA_R} \ket{\psi}_{A_LB_R} \ket{\psi}_{B_LC_R}
\end{equation}
and is therefore a triangle state. The original state is a purification which is equivalent to this triangle state up to a local isometry on $C$. We note that by looking at $g(A:B)$, we deduce that $\ket{\psi}_{ABC}$ is locally isometric to a triangle state. However, the same argument may be applied using any pair of parties.
\end{proof}

\section{Vanishing \texorpdfstring{$h$}{h} and sums of triangle states (SOTS)}\label{app:proof_h}
In this section we define a classes of states called \emph{Sum of polygon states} (SOPS) and discuss their properties.
The main result of this section is the structure theorem for quantum states with vanishing $h(A:B)$ (Theorem~\ref{thm:zero_h_app}).

%First, we introduce some notations and definitions. In what follows, we abuse notation slightly to simplify the manipulations. \RM{Please never use this notation again} We label states by the Hilbert spaces in which they live. For example, $\ket{\calH_1 \calH_2}$ is an \emph{arbitrary} state in $\calH_1\otimes \calH_2$, not necessarily a product state. To indicate quantum states which \emph{are} product states, we explicitly divide the ket denoting the state into two kets, such as $\ket{\calH_1} \ket{\calH_2}$.

\begin{definition}[Splitting]\label{def:splitting}
A splitting of a Hilbert space $\calH_i$ is an orthogonal decomposition of the Hilbert space into a direct sum of tensor product spaces 
\begin{align}
    \calH_i = \calH_i^0 \;\oplus\; \bigoplus_j \calH_{iL}^j \otimes \calH_{iR}^j \,.
\end{align}
\end{definition}
\noindent The space $\calH_i^0$ may be 0-dimensional.

% \begin{definition}[Polygon State]
% A pure $N$-party quantum state $\ket{\psi}\in \calH_{1}\otimes \calH_{2}\dots \otimes \calH_{N}$ if for each $i$, $\calH_{i} = \calH_{i}^L \otimes \calH_{i}^R\oplus \calH_i^{(0)}$.
% \begin{equation}
%     \ket{\psi} = \otimes_{i=1}^N \ket{\phi}_{i,R}\ket{\phi}_{f(i),L}
% \end{equation}
% where $f(i) = (i+1)\mod N +1$ and $\ket{\phi}_{i,R(L)} \in \calH_i^{R(L)}$.
% \end{definition}

\begin{definition}[Sum of polygon states---SOPS]\label{def:SOPS}
An $N$-party pure quantum state $\ket{\psi} \in \calH_1 \otimes \calH_2 \otimes \cdots \otimes \calH_N$ is a SOPS with respect to the decomposition $(\calH_1,\calH_2,\dots,\calH_N)$ if for each party $i$, $\calH_i$ admits a splitting and
\begin{align}
	\ket{\psi} &= \sum_j a_j \bigotimes_i \ket{j}_{(iR)(i^+L)}
	\qquad \mathrm{such\ that\ } \ket{j}_{(iR)(i^+L)} \in \calH_{iR}^j \otimes \calH_{i^+L}^j \,.
	\label{eq:SOPS_definition}
\end{align}
where $i^+ \equiv (i\mod N) +1$ denote the party after $i$, the coefficients are normalized to $\sum_j |a_j|^2 = 1$.
%and $\bigket{\calH_{i,j}^{R}\calH_{i^+,j}^{L}}$ is a pure state in $\calH_{i,j}^{R}\otimes \calH_{i^+,j}^{L}$.
\end{definition}
We may, without loss of generality, take $a_j \in \mathbb{R}$ and $a_j \geq 0$ by absorbing its phase into one of $\ket{j}_{(iR)(i^+L)}$.
A SOTS is a special case of SOPS with $N=3$;
the triangle state defined in Eq.~\eqref{eq:triangle_app} may be seen as a special case of Eq.~\eqref{eq:SOPS_definition} in which $a_j=1$ for exactly one $j$.
The decomposition is invariant under a cyclic permutation, e.g.\ if a state is a SOPS with respect to $(A,B,C)$, it is also a SOPS with respect to $(B,C,A)$.

We now show that a SOPS is still a SOPS when the local Hilbert spaces are re-defined to include their nearest neighbors, a procedure we term ``coarse-graining".
For example, if we define $\calH_{1:3} = \calH_1 \otimes \calH_2 \otimes \calH_3$, then a state which is an SOPS for $\calH_{1}\otimes \calH_{2}\dots \otimes \calH_{N}$ decomposition is also a SOPS for $\calH_{1:3}\otimes \calH_{4} \dots \otimes \calH_{N}$.
Note that an arbitrary combination (such as one which is not nearest-neighbor) does not necessarily preserve the SOPS structure.

\begin{lemma}[SOPS structure preserved under coarse-graining]\label{lemma:SOP_CG}
Let $N\geq3$.  If $\ket{\Psi} \in \calH_1\otimes\cdots\otimes\calH_N$ is an $N$-party SOPS, then $\ket{\Psi}$ is an $(N{-}1)$-party SOPS with respect to the decomposition $(\calH_{1},\dots,\calH_{i:i+1},\calH_{i+2},\dots,\calH_N)$, where $\calH_{i:i+1} = \calH_i \otimes \calH_{i+1}$.
\begin{proof}
Without loss of generality, we show this for $i=1$; we coarse-grain $\calH_1\otimes \calH_2 \to \calH_{1:2}$.
Because 1-2 always appear in Eq.~\eqref{eq:SOPS_definition} through the combination $\ket{j}_{(NR)(1L)} \ket{j}_{(1R)(2L)} \ket{j}_{(2R)(3L)}$,
    we can identify the splitting $\calH_{1:2,L}^j = \calH_{1L}^j \otimes \calH_{1R}^j \otimes \calH_{2L}^j$ and $\calH_{1:2,R}^j = \calH_{2R}^j$.
Identifying $\ket{j}_{(NR)(1:2,L)} = \ket{j}_{(NR)(1L)} \ket{j}_{(1R)(2L)}$, the claim follows. 
\end{proof}
\end{lemma}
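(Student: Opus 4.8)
The plan is to prove the lemma by an explicit re-bracketing of the SOPS decomposition in Eq.~\eqref{eq:SOPS_definition}: I would construct a splitting of the merged Hilbert space $\calH_{i:i+1}$ that recasts the \emph{same} state, with the \emph{same} coefficients, into the $(N{-}1)$-party SOPS form. Using the cyclic invariance of the SOPS decomposition noted after Definition~\ref{def:SOPS}, it suffices to treat $i=1$, i.e.\ to coarse-grain $\calH_1 \otimes \calH_2 \to \calH_{1:2}$.

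First I would isolate how parties $1$ and $2$ enter each summand of Eq.~\eqref{eq:SOPS_definition}. Party $1$ contributes through $\ket{j}_{(NR)(1L)}$ and $\ket{j}_{(1R)(2L)}$, while party $2$ contributes through $\ket{j}_{(1R)(2L)}$ and $\ket{j}_{(2R)(3L)}$. The key observation is that the bond $\ket{j}_{(1R)(2L)} \in \calH_{1R}^j \otimes \calH_{2L}^j$ is shared \emph{only} between the two merged parties, so it lives entirely inside $\calH_{1:2}$ and carries no entanglement across the new boundaries; the only factors crossing those boundaries are $\ket{j}_{(NR)(1L)}$ (toward party $N$) and $\ket{j}_{(2R)(3L)}$ (toward party $3$).

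This motivates the splitting $\calH_{1:2,L}^j \defineas \calH_{1L}^j \otimes \calH_{1R}^j \otimes \calH_{2L}^j$ and $\calH_{1:2,R}^j \defineas \calH_{2R}^j$, collecting the off-diagonal ($j \neq k$) sectors of $\calH_1 \otimes \calH_2$ and any contribution from $\calH_1^0$ or $\calH_2^0$ into a new $\calH_{1:2}^0$. I would then verify this is a legitimate splitting in the sense of Definition~\ref{def:splitting}: for distinct $j$ the subspaces $\calH_{1:2,L}^j \otimes \calH_{1:2,R}^j = \calH_{1L}^j \otimes \calH_{1R}^j \otimes \calH_{2L}^j \otimes \calH_{2R}^j$ are mutually orthogonal, since already their $\calH_1$-factors $\calH_{1L}^j \otimes \calH_{1R}^j$ are orthogonal across the direct sum defining the splitting of $\calH_1$. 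Setting $\ket{j}_{(NR)(1:2,L)} \defineas \ket{j}_{(NR)(1L)} \ket{j}_{(1R)(2L)} \in \calH_{NR}^j \otimes \calH_{1:2,L}^j$ and relabelling $\ket{j}_{(2R)(3L)}$ as $\ket{j}_{(1:2,R)(3L)}$, the state becomes $\ket{\Psi} = \sum_j a_j\, \ket{j}_{(NR)(1:2,L)} \ket{j}_{(1:2,R)(3L)} \bigotimes_{i\geq 3} \ket{j}_{(iR)(i^+L)}$, which is precisely the SOPS form for the coarse-grained decomposition.

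I expect the only genuine subtlety, beyond the bookkeeping, to be checking that the proposed decomposition of $\calH_{1:2}$ really is an orthogonal direct sum of tensor-product spaces, and this is exactly where the adjacency hypothesis is indispensable. Because parties $1$ and $2$ are neighbours, their shared bond is internal and the left/right labels of the merged party align consistently with those of parties $N$ and $3$, so the polygon closes up into one with $N{-}1$ vertices. Merging non-adjacent parties would instead leave two independent bonds crossing the new boundary, which cannot be absorbed into a single tensor-product splitting; this explains why the lemma is restricted to nearest neighbours.
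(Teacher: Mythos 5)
Your proposal is correct and follows essentially the same route as the paper's proof: the identical splitting $\calH_{1:2,L}^j = \calH_{1L}^j \otimes \calH_{1R}^j \otimes \calH_{2L}^j$, $\calH_{1:2,R}^j = \calH_{2R}^j$, and the identification $\ket{j}_{(NR)(1:2,L)} = \ket{j}_{(NR)(1L)}\ket{j}_{(1R)(2L)}$. Your added checks---that the merged subspaces are mutually orthogonal (inherited from the splitting of $\calH_1$) and that adjacency is what makes the shared bond internal---are correct elaborations of details the paper leaves implicit.
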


SOPS satisfy a number of interesting properties which we now state.
\begin{porism} \label{lemma:SOPS_cyclicMarkov}
Let $N\geq4$ and $\ket{\psi}$ be a SOPS with respect to $(\calH_1,\dots,\calH_N)$.  Then $I(\calH_{i-1},\calH_{i+1}|\calH_i) = 0$ for all $i$.
\end{porism}
To simplify notation, here we identify the parties (and their respective Hilbert spaces) $0 \leftrightarrow N$ and $1 \leftrightarrow N+1$.

\begin{proof}
Without loss of generality we take $i=2$.  Under coarse-graining, it suffices to prove the statement for $N=4$.
For a state $\ket{\psi}_{1234} \in \mathrm{SOPS}(N=4)$ written in the form Eq.~\eqref{eq:SOPS_definition}, the density matrix $\rho_{123}$ is
\begin{align}
	\rho_{123} = \sum_j |a_j|^2 \Big( \rho_{1L}^j \otimes \ket{j}_{(1R)(2L)}\bra{j}_{(1R)(2L)} \Big)
		\otimes \Big( \ket{j}_{(2R)(3L)}\bra{j}_{(2R)(3L)} \otimes \rho_{3R}^j \Big) ,
\end{align}
where $\rho_{1L}^j = \Tr_{4R} \ket{j}_{(4R)(1L)}\bra{j}_{(4R)(1L)}$ is the density matrix within $\calH_{1R}^j$, and $\rho_{3R}^j = \Tr_{4L} \ket{j}_{(3R)(4L)}\bra{j}_{(3R)(4L)}$ the density matrix within $\calH_{3L}^j$.
This is precisely in the form of Eq.~\eqref{eq:hayden_decomp} with $(1,2,3) = (A,B,C)$ of Theorem~\ref{thm:quantum_markov_chain}, and hence $I(\calH_1:\calH_3|\calH_2) = 0$.
\end{proof}

\begin{lemma} \label{lemma:SOPS_canonpurif}
Let $\ket{\psi}$ be a SOPS with respect to $(\calH_1,\dots,\calH_N)$, and $\rho_{a:b}$ be its density matrix for any contiguous subgroup of parties $\calH_a \otimes \calH_{a+1} \otimes \cdots \otimes \calH_b$.
Then the canonical purification of $\rho_{a:b}$ is a SOPS with respect to $(\calH_a,\calH_{a+1},\dots,\calH_{b-1},\calH_b,\calHconj_b,\calHconj_{b-1},\dots,\calHconj_a)$.
\begin{proof}
Again, we may combine all the parties outside of the $a,\dots,b$ range into a single party, which we label as $1$.  Hence, without loss of generality we take $(a,b) = (2,n)$.
The density matrix on $\calH_2\otimes\cdots\otimes\calH_n$ is
\begin{align}
	\rho_{2:n} &= \sum_j |a_j|^2 \, \rho_{2L}^j \otimes \left[ \bigotimes_{i=2}^{n-1} \ket{j}_{(iR)(i^+L)} \bra{j}_{(iR)(i^+L)} \right] \otimes \rho_{nR}^j \,,
	\label{eq:SOPS_part_rho}
\end{align}
where $\rho_{2L}^j = \Tr_{1R} \ket{j}_{(1R)(2L)}\bra{j}_{(1R)(2L)}$ and $\rho_{nR}^j = \Tr_{1L} \ket{j}_{(nR)(1L)}\bra{j}_{(nR)(1L)}$ are the density matrices in their respective spaces after tracing out $\calH_1$.
As each term in Eq.~\eqref{eq:SOPS_part_rho} are orthogonal, its canonical purification is
\begin{align} \begin{aligned}
	\operatorname{CanonPur}\!\big[\rho_{2:n}\big] &= \sum_j |a_j|
		\operatorname{CanonPur}\!\big[\rho_{2L}^j\big] \otimes \left[ \bigotimes_{i=2}^{n-1} \ket{j}_{(iR)(i^+L)} \bra{j}_{(iR)(i^+L)} \right]
	\\&\mkern100mu \otimes \operatorname{CanonPur}\!\big[\rho_{nR}^j\big] \otimes \left[ \bigotimes_{i=2}^{n-1} \ket{j}_{(\overline{iR})(\overline{i^+L})} \bra{j}_{(\overline{iR})(\overline{i^+L})} \right] .
	\label{eq:SOPS_rho_canonpurif}
\end{aligned} \end{align}
The canonical purifications $\operatorname{CanonPur}\!\big[\rho_{2L}^j\big]$ and $\operatorname{CanonPur}\!\big[\rho_{nR}^j\big]$ live in the Hilbert spaces $\calHconj_{2L}^j \otimes \calH_{2L}^j$ and $\calH_{nR}^j \otimes \calHconj_{nR}^j$ respectively.
After we swap the left/right labels for the splittings (Def.~\ref{def:splitting}) of $\calHconj_i$, Eq.~\eqref{eq:SOPS_rho_canonpurif} takes on the form a SOPS with respect to the decomposition $\big(\calH_2, \calH_3, \dots, \calH_n, \calHconj_n, \calHconj_{n-1}, \dots, \calHconj_2\big)$.
\end{proof}
\end{lemma}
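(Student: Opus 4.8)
The plan is to reduce to a canonical labeling, compute the reduced density matrix explicitly, and then check that canonical purification preserves the block-and-bond structure defining a SOPS. First I would invoke the coarse-graining lemma (Lemma~\ref{lemma:SOP_CG}): the complement of the contiguous block $\{a,\dots,b\}$ on the ring is itself a contiguous arc, so repeatedly merging nearest-neighbor parties collapses it into a single party, which I relabel as party $1$; after relabeling the block as $(2,\dots,n)$ we have a SOPS on $(\calH_1,\calH_2,\dots,\calH_n)$ and must purify $\rho_{2:n}=\Tr_1\ket{\psi}\bra{\psi}$. Writing $\ket{\psi}$ as in Eq.~\eqref{eq:SOPS_definition}, party $1$ enters only through the two boundary bonds $\ket{j}_{(nR)(1L)}$ and $\ket{j}_{(1R)(2L)}$. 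The key point is that the splittings (Def.~\ref{def:splitting}) make the summands for distinct $j$ mutually orthogonal on $\calH_1=\bigoplus_k\calH_{1L}^k\otimes\calH_{1R}^k$: the party-$1$ factor of a cross term $j\neq j'$ maps the summand $j'$ to the summand $j$ and is therefore traceless. Tracing out party $1$ thus kills all cross terms and yields Eq.~\eqref{eq:SOPS_part_rho}, in which the interior bonds $\ket{j}_{(iR)(i^+L)}$ ($2\le i\le n-1$) survive as rank-one projectors while the two boundary pieces become the mixed states $\rho_{2L}^j=\Tr_{1R}\ket{j}_{(1R)(2L)}\bra{j}_{(1R)(2L)}$ and $\rho_{nR}^j=\Tr_{1L}\ket{j}_{(nR)(1L)}\bra{j}_{(nR)(1L)}$.

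Next I would pass to the canonical purification, relying on two algebraic facts about the operator square root followed by vectorization: for a block-diagonal $\rho=\bigoplus_j|a_j|^2\sigma_j$ with normalized orthogonal blocks one has $\ket{\sqrt{\rho}}=\sum_j|a_j|\,\ket{\sqrt{\sigma_j}}$, the $\ket{\sqrt{\sigma_j}}$ living in the mutually orthogonal doubled blocks $\calH^j\otimes\calHconj^j$; and for a product $\sigma\otimes\tau$ one has $\ket{\sqrt{\sigma\otimes\tau}}=\ket{\sqrt{\sigma}}\otimes\ket{\sqrt{\tau}}$ after reordering the conjugate factors. Applying these to Eq.~\eqref{eq:SOPS_part_rho}, the orthogonality across $j$ (inherited from the splittings of $\calH_2,\dots,\calH_n$) lets the purification act block by block; within each block the interior rank-one projectors double into $\ket{j}_{(iR)(i^+L)}\otimes\ket{j}_{(\overline{iR})(\overline{i^+L})}$, while the two mixed boundary pieces are replaced by their canonical purifications living in $\calHconj_{2L}^j\otimes\calH_{2L}^j$ and $\calH_{nR}^j\otimes\calHconj_{nR}^j$. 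This produces exactly Eq.~\eqref{eq:SOPS_rho_canonpurif}, with new real coefficients $|a_j|$ satisfying $\sum_j|a_j|^2=1$.

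Finally I would read off the SOPS structure on the doubled ring $(\calH_2,\dots,\calH_n,\calHconj_n,\dots,\calHconj_2)$. Traversing it cyclically, the bonds are the original interior bonds along $2\to3\to\cdots\to n$; the canonical purification of $\rho_{nR}^j$ as the ``end cap'' between $\calH_n$ and $\calHconj_n$; the conjugated interior bonds along $\bar n\to\overline{n-1}\to\cdots\to\bar2$; and the canonical purification of $\rho_{2L}^j$ as the closing bond between $\calHconj_2$ and $\calH_2$. To match Eq.~\eqref{eq:SOPS_definition} I assign each original party its splitting $\calH_i=\bigoplus_j\calH_{iL}^j\otimes\calH_{iR}^j$, and each conjugate party the same splitting with the left/right labels swapped so that orientation around the doubled ring stays consistent, and set the coefficients to $|a_j|$. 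Verifying that every bond then sits in the prescribed factor $\calH_{iR}^j\otimes\calH_{i^+L}^j$ of the new splittings completes the proof. I expect the main obstacle to be the first algebraic fact of the middle step, namely rigorously justifying that the operator square root and the operator-to-vector map commute with the orthogonal direct sum, so that the canonical purification decomposes block by block with weights $|a_j|$; the remainder is bookkeeping, of which the only delicate piece is the left/right relabeling on the conjugate parties.
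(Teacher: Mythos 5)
Your proposal is correct and takes essentially the same route as the paper's own proof: reduce to $(a,b)=(2,n)$ by coarse-graining the complement into a single party, trace it out to get the block-orthogonal form of $\rho_{2:n}$ in Eq.~\eqref{eq:SOPS_part_rho}, canonically purify block by block (doubling the interior rank-one bonds and purifying the two boundary pieces $\rho_{2L}^j$, $\rho_{nR}^j$), and swap the left/right labels on the conjugate parties to exhibit the SOPS structure on $(\calH_2,\dots,\calH_n,\calHconj_n,\dots,\calHconj_2)$. The only difference is that you spell out the algebraic facts the paper leaves implicit --- that cross terms $j\neq j'$ are killed by the partial trace over party $1$, and that the nonnegative square root and vectorization respect orthogonal direct sums and tensor products --- and the step you flag as the main obstacle is immediate from uniqueness of the nonnegative square root, so there is no gap.
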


%%%%%%%%%%%%%%%%%%%%
\vspace{3ex}
We now state the main result of the section, the second of the structure theorems.
\begin{theorem}[States with vanishing $h$]\label{thm:zero_h_app}
A pure tripartite quantum state $\ket{\psi}_{ABC}$ is a sum of triangle states (a SOPS with $N=3$) if and only if $h(A:B)=0$ for $\rho_{AB}$.
\end{theorem}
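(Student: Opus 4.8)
Prove Theorem~\ref{thm:zero_h_app} through the equivalent chain of statements
\[
\ket{\psi}_{ABC}\text{ is a SOTS}\ \Longleftrightarrow\ \text{the canonical purification of }\rho_{AB}\text{ is a four-party SOPS}\ \Longleftrightarrow\ h(A:B)=0,
\]
for which Lemma~\ref{lemma:SOPS_canonpurif} and Porism~\ref{lemma:SOPS_cyclicMarkov} are tailor-made. The linchpin is to rewrite $h$ as a conditional mutual information of the canonical purification $\ket{\sqrt{\rho_{AB}}}$, viewed as a pure state on the cyclically ordered parties $(A,B,\bar B,\bar A)$, where $\bar A,\bar B$ are the reflected copies. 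Using that the canonical purification obeys $S_{\bar A}=S_A$, $S_{\bar B}=S_B$, $S_{A\bar A}=S_{B\bar B}=S_R$, together with purity identities such as $S_{AB\bar B}=S_{\bar A}$, a short calculation yields
\begin{equation}
h(A:B)=I(A:\bar B\,|\,B)=I(B:\bar A\,|\,A),
\end{equation}
i.e. $h$ is the neighbor-centered conditional mutual information of the canonical purification centered on $B$ (equivalently on $A$).

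For the forward direction I would argue as follows. A SOTS is in particular a SOPS with respect to the contiguous ordering $(A,B,C)$, so $\rho_{AB}$ is the reduced density matrix of the contiguous block $A,B$. Lemma~\ref{lemma:SOPS_canonpurif} then shows that $\ket{\sqrt{\rho_{AB}}}$ is a four-party SOPS with respect to $(\calH_A,\calH_B,\calHconj_B,\calHconj_A)$. Applying Porism~\ref{lemma:SOPS_cyclicMarkov} (valid since $N=4$) with middle party $B$ gives $I(A:\bar B\,|\,B)=0$, and by the identity above $h(A:B)=0$.

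The reverse direction is where the real work lies. From $h=0$ and the identity, both neighbor-centered conditional mutual informations vanish, and by reflection symmetry so do the two centered on $\bar A,\bar B$. I would feed each vanishing CMI into the Hayden et al.\ structure theorem (Theorem~\ref{thm:quantum_markov_chain}): $I(A:\bar B\,|\,B)=0$ splits $\calH_B$ and makes $\rho_{AB\bar B}$ block-diagonal, and because the global state is pure with purifier $\bar A$, the blocks are orthogonal in $\bar A$ and each factorizes; $I(B:\bar A\,|\,A)=0$ does the same for $\calH_A$. The task is to weave these two decompositions, using reflection symmetry to control the reflected parties, into the four-party SOPS form of Eq.~\eqref{eq:SOPS_definition}. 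Once the canonical purification is a four-party SOPS, tracing out $\bar A,\bar B$ puts $\rho_{AB}$ into the block form $\rho_{AB}=\sum_j |a_j|^2\,\rho^j_{A_L}\otimes \ket{j}_{A_RB_L}\!\bra{j}_{A_RB_L}\otimes \rho^j_{B_R}$, and I would finish by showing that any purification of such a $\rho_{AB}$ — in particular $\ket{\psi}_{ABC}$ — is a SOTS up to a local isometry on $C$, using that purifications of a block-diagonal density matrix are orthogonal across blocks and that the purification of a product state factorizes.

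I expect the main obstacle to be precisely this assembly step. The Hayden theorem applied to the two vanishing CMIs only guarantees that $\rho_{AB}$ is simultaneously block-diagonal across $A$ and across $B$, with each joint block a product $\rho_{A_L}\otimes\rho_{A_RB_L}\otimes\rho_{B_R}$ in which the shared factor $\rho_{A_RB_L}$ could a priori be mixed; a genuine SOTS instead demands that this shared factor be a pure state $\ket{j}_{A_RB_L}$. The essential extra input that must force this purity is the reflection symmetry of the canonical purification (equivalently, the saturation $S_{A\bar A}=I(A:B)$ itself), which distinguishes the canonical purification from a generic state obeying the same two Markov conditions. Making this rigorous — upgrading \emph{doubly Markov} to \emph{SOPS} for the canonical purification — is the crux of the proof.
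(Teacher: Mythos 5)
Your overall architecture coincides with the paper's: you rewrite $h$ as a conditional mutual information of the canonical purification (your identities $h = I(A{:}\bar B\,|\,B) = I(B{:}\bar A\,|\,A)$ are correct and match the paper's Eq.~\eqref{eq:h_equals_cmi4}), you prove the forward direction exactly as the paper does via Lemma~\ref{lemma:SOPS_canonpurif} and Porism~\ref{lemma:SOPS_cyclicMarkov}, and you correctly reduce the converse to showing that the canonical purification $\ket{\Psi}_{AB\bar B\bar A}$ is a four-party SOPS, after which coarse-graining $\bar B\bar A \to C$ (Lemma~\ref{lemma:SOP_CG}) finishes the job. The forward direction of your proposal is complete.

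The converse, however, contains a genuine gap, and it sits exactly where you locate the ``crux.'' The implication you need --- a pure four-party state with vanishing cyclic conditional mutual informations is a SOPS --- is precisely the paper's Proposition~\ref{prop:4body_cyclicMarkov}, and you do not prove it; you only isolate it and guess at the missing ingredient. Moreover, the ingredient you propose (reflection symmetry of the canonical purification, or the saturation $S_{A\bar A}=I(A{:}B)$, which is just the hypothesis $h=0$ restated) is not what does the work: the paper's proposition holds for an \emph{arbitrary} pure four-party state satisfying the two Markov conditions, with no reflection symmetry anywhere in its proof. What actually upgrades ``doubly Markov'' to ``SOPS'' is the interplay of the two lifted decompositions under global purity. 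Concretely: Theorem~\ref{thm:quantum_markov_chain} applied to $\rho_{123}$ conditioned on party $2$, lifted to the pure state, induces a splitting of party $4$ (this step you have); the same conditioned on party $1$ induces a splitting of party $3$; one then inserts the commuting block projectors $P^{(i)}_j$ into both resulting expressions for $\ket{\psi}_{1234}$. The normalization identity $\sum_j \big|M^{(1)}_{k\to j}M^{(3)}_{k\to j}\big|^2 = 1$, fed into the elementary Cauchy--Schwarz statement Fact~\ref{fact:prob_permutation}, forces each block label $k$ of one decomposition to correspond to a \emph{unique} label $j(k)$ of the other with unit-modulus coefficients (a bijection, with matched weights $q^{(1)}_{j}=q^{(2)}_{k(j)}$), and Lemma~\ref{lemma:product_state} (a state that is a product across both diagonal bipartitions is a full product) then factorizes each surviving projected term $P_{jkjk}\ket{\psi}_{1234}$ into the polygon form --- this is where the purity of the shared factors comes from. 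Without this projector/normalization argument, or a substitute for it, the term-by-term matching and purity you need are not established, and invoking reflection symmetry supplies no mechanism for them; as written, your proof of the ``if'' direction is incomplete.
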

%%%%%%%%%%%%%%%%%%%%

%Next, we show the ``if'' direction of Theorem~\ref{thm:zero_h_app}.  For this, we will need the following lemmas (\ref{lemma:product_state}, \ref{fact:prob_permutation}) and proposition (\ref{prop:4body_cyclicMarkov}).
\noindent To prove the ``only if" direction of Theorem~\ref{thm:zero_h_app}, we use Porism~\ref{lemma:SOPS_cyclicMarkov} and Lemma~\ref{lemma:SOPS_canonpurif}. To prove the ``if" direction, we will additionally need Lemma~\ref{lemma:product_state}, Fact~\ref{fact:prob_permutation}, and Proposition~\ref{prop:4body_cyclicMarkov} below.

\begin{lemma}\label{lemma:product_state}
Given a pure state on four parties $A,B,C,D$, if it is both a product state under the bipartition $(AB,CD)$ and $(AD,BC)$:
\begin{align}
    \ket{ABCD} = \ket{AB}\ket{CD} = \ket{AD}\ket{BC} ,
\end{align}
then it is a product state across all four parties, i.e.,
\begin{align}
    \ket{ABCD} = \ket{A}\ket{B}\ket{C}\ket{D} .
\end{align}
\begin{proof}
Since the state is a product state under the bipartition $(AB,CD)$, then $\rho_{AD} = \rho_A\otimes\rho_D$. The state is also a product state under the bipartition $(AD,BC)$, hence $\rho_{AD}$ is a pure-state density matrix with rank $1$, so $\rho_{A}$ and $\rho_{D}$ must be pure. The same argument applies to $\rho_{B}$ and $\rho_{C}$. Therefore the state is a product state across the four parties.
\end{proof}
\end{lemma}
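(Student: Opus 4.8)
The plan is to exploit the fact that the subsystem $AD$ sits ``across'' the two given bipartitions: in $(AB,CD)$ the parties $A$ and $D$ lie in opposite blocks, while in $(AD,BC)$ they lie in the same block. Computing the reduced density matrix $\rho_{AD}$ in these two ways will pin down the single-party marginals $\rho_A$ and $\rho_D$ simultaneously, and an analogous computation for $\rho_{BC}$ will handle $\rho_B$ and $\rho_C$.

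First I would use the factorization $\ket{ABCD}=\ket{AB}\ket{CD}$. Tracing out $B$ and $C$ leaves $A$ coming only from $\ket{AB}$ and $D$ only from $\ket{CD}$, so the two factors decouple and $\rho_{AD}=\rho_A\otimes\rho_D$, where $\rho_A=\Tr_B\ket{AB}\bra{AB}$ and $\rho_D=\Tr_C\ket{CD}\bra{CD}$. Next I would use the other factorization $\ket{ABCD}=\ket{AD}\ket{BC}$: here $\ket{BC}$ is normalized and decoupled from $AD$, so $\rho_{AD}=\ket{AD}\bra{AD}$ is a rank-one (pure) operator. Equating the two expressions gives that $\rho_A\otimes\rho_D$ is rank one; since $\rank(\rho_A\otimes\rho_D)=\rank(\rho_A)\,\rank(\rho_D)$, both $\rho_A$ and $\rho_D$ must be rank one, i.e.\ pure. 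Running the identical argument on $\rho_{BC}$---which equals $\rho_B\otimes\rho_C$ from the first factorization and is pure from the second---shows $\rho_B$ and $\rho_C$ are pure as well.

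It remains to bootstrap from ``all four single-party marginals are pure'' to a genuine four-fold product. Purity of $\rho_A$ means $A$ is unentangled with the rest, so $\ket{ABCD}=\ket{A}\ket{\chi}_{BCD}$; the marginal $\rho_B$ computed from $\ket{\chi}_{BCD}$ is unchanged and still pure, forcing $\ket{\chi}_{BCD}=\ket{B}\ket{\eta}_{CD}$, and finally purity of $\rho_C$ (equivalently $\rho_D$) splits $\ket{\eta}_{CD}=\ket{C}\ket{D}$. Hence $\ket{ABCD}=\ket{A}\ket{B}\ket{C}\ket{D}$, as claimed.

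The argument is short, and the only point that deserves care is the rank step: the whole lemma turns on the elementary fact that the rank of a tensor product is the product of the ranks, together with the observation that $AD$ is split by one bipartition but kept together by the other. I do not anticipate a genuine obstacle; the mild subtlety is simply making sure the two computations of $\rho_{AD}$ (and of $\rho_{BC}$) are carried out against the correct bipartition, so that one yields a product form and the other a pure (rank-one) operator.
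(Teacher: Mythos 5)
Your proof is correct and follows essentially the same route as the paper's: both use the $(AB,CD)$ factorization to get $\rho_{AD}=\rho_A\otimes\rho_D$, the $(AD,BC)$ factorization to conclude $\rho_{AD}$ is rank one, deduce purity of $\rho_A$ and $\rho_D$ (and by the same argument $\rho_B$ and $\rho_C$), and conclude the four-fold product. Your version merely spells out two steps the paper leaves implicit --- the multiplicativity of rank under tensor products and the final bootstrapping from pure marginals to a product state --- which is fine but not a different argument.
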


\begin{fact}\label{fact:prob_permutation}
Let $p_a$ and $q_a$ be two probability distributions on $1 \leq a \leq N$.  Then the sum
\begin{align}
    \sum_{a=1}^N p_a q_a = 1
\end{align}
if and only if the distribtions $p,q$ are identical with a single nonzero entry.
That is, there exist $u$ such that $p_u = q_u = 1$ and $p_a = q_a = 0$ for all $a \neq u$.

\begin{proof}
Because $0 \leq p_a, q_a \leq 1$, the probabilities obey the inequalities
\begin{align}
	\left(\sum_{a=1}^N p_a q_a\right)^2 \leq \left(\sum_{a=1}^N p_a^2\right) \left(\sum_{a=1}^N q_a^2\right)
		\leq \left(\sum_{a=1}^N p_a\right) \left(\sum_{a=1}^N q_a\right) = 1^2 = 1 .
	\label{eq:prob_CauchySchwarz}
\end{align}
The first inequality is the Cauchy-Schwarz inequality, where equality holds iff $p_a \propto q_a$.
The second inequality follows from the fact that $p_a^2 \leq p_a$, where equality holds iff $p_a = 0$ or $1$.

The first term is unity iff both inequalities are satified equality, i.e., all the terms of Eq.~\eqref{eq:prob_CauchySchwarz} are equal.
From the normalization constraint $\sum_a p_a = \sum_a q_a = 1$, this is equivalent to $p_a = q_a$ for all $a$, and that exactly one of $p_a$ is unity, while the remaining $p$'s vanishes.
\end{proof}
\end{fact}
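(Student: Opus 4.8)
The forward (``if'') direction is immediate and I would dispose of it in a single line: if $p_u = q_u = 1$ with all other entries vanishing, then $\sum_a p_a q_a = p_u q_u = 1$. All the content lies in the converse, and my plan is to bound the overlap $\sum_a p_a q_a$ from above by $1$ in a way that exposes sharp equality conditions.

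First I would bound the bilinear form by a linear one. Since every $q_a \le \max_b q_b$ and every $p_a \ge 0$,
$$\sum_a p_a q_a \;\le\; \Big(\max_b q_b\Big)\sum_a p_a \;=\; \max_b q_b \;\le\; 1,$$
where the middle equality uses $\sum_a p_a = 1$ and the final bound uses $0 \le q_b \le 1$. (Equivalently one may run the Cauchy--Schwarz chain of Eq.~\eqref{eq:prob_CauchySchwarz}, pairing it with the step $q_b^2 \le q_b$; the two routes yield the same equality data.)

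Then I would read off the equality conditions. Assuming $\sum_a p_a q_a = 1$, both inequalities above saturate. Saturation of the last forces $\max_b q_b = 1$; together with $q_b \ge 0$ and $\sum_b q_b = 1$ this pins $q$ to a point mass, so there is a unique index $u$ with $q_u = 1$ and $q_b = 0$ for $b \ne u$. Substituting back gives $1 = \sum_a p_a q_a = p_u$, and since $\sum_a p_a = 1$ with nonnegative entries, $p_u = 1$ forces $p_a = 0$ for every $a \ne u$. Hence $p$ and $q$ are the same point mass at $u$, which is exactly the claim.

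No step here presents a genuine obstacle---each is an elementary one-line estimate---so the only thing requiring care is the bookkeeping of the equality cases. The subtle point is ordering: I would deliberately fix $q$ as a point mass first and only then deduce $p$, rather than attempt to conclude $p \propto q$ together with each entry lying in $\{0,1\}$ simultaneously, which is the route that most easily introduces a spurious circularity.
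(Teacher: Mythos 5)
Your proof is correct, but it takes a genuinely different route from the paper's. The paper runs the Cauchy--Schwarz chain of Eq.~\eqref{eq:prob_CauchySchwarz}, bounding $\bigl(\sum_a p_a q_a\bigr)^2 \leq \bigl(\sum_a p_a^2\bigr)\bigl(\sum_a q_a^2\bigr) \leq 1$ via $p_a^2 \leq p_a$, and then extracts the conclusion by combining two simultaneous equality conditions: proportionality $p_a \propto q_a$ from Cauchy--Schwarz, and $p_a \in \{0,1\}$ from the second step. You instead use the elementary bound $\sum_a p_a q_a \leq \max_b q_b \leq 1$, deduce from saturation that $q$ is a point mass at some $u$, and only then substitute back to get $p_u = \sum_a p_a q_a = 1$, pinning $p$ by normalization. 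Your sequential structure is a real advantage: the paper's simultaneous combination of equality conditions is the slightly delicate step (one must check that proportionality plus $0/1$-valued entries plus normalization jointly force a common point mass, a bookkeeping the paper passes over tersely), whereas in your argument each deduction is a self-contained one-liner and no proportionality claim is ever needed. Note also that once $q$ is known to be a point mass, saturation of your first inequality comes for free from the substitution, so you never actually use its equality condition --- the logic is airtight. What the paper's route buys in exchange is a template that generalizes to weighted or $\ell^2$-type statements where a max bound is unavailable; for this particular fact, your argument is the leaner of the two.
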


\begin{proposition}\label{prop:4body_cyclicMarkov}
Let $\ket{\psi}_{1234}$ be a pure 4-party state in $\calH_1 \otimes \calH_2 \otimes \calH_3 \otimes \calH_4$.  If $I\big( \calH_{i-1}:\calH_{i+1} \big| \calH_i \big) = 0$ for all $i\in\{1,\dots,4\}$, then $\ket{\psi}_{1234}$ is a SOPS with respect to $(\calH_1,\calH_2,\calH_3,\calH_4)$.

\begin{proof}
We first consider $\calH_1$ and the condition $I(\calH_1:\calH_3|\calH_2)=0$ and use Theorem~\ref{thm:quantum_markov_chain} to observe that there exists a decomposition
\begin{align}
    \calH_2 = \bigoplus_k \calH_{2L}^k \otimes \calH_{2R}^k
    \label{eq:H2_decomposition}
\end{align}
such that the reduced density matrix $\rho_{123}$ on $\calH_1\otimes\calH_2\otimes\calH_3$ may be written as
\begin{align}
    \rho_{123} = \sum_k q^{(2)}_k \, \rho_{1(2L)}^k \otimes \rho_{(2R)3}^k
    \label{eq:rho123}
\end{align}
where $\big\{q^{(2)}_k\big\}$ form a probability distribution.
For convenience, we take $q^{(2)}_k > 0$ for all $k$ (truncating the sum if necessary).
We consider the canonical purification of the reduced density matrix $\rho_{123}$ to $\calHconj_1\otimes\calHconj_2\otimes\calHconj_3$. From Eq.~\eqref{eq:rho123}, the canonical purification is
\begin{align}
    \operatorname{CanonPur}\!\big[\rho_{123}\big]
    = \sum_k \sqrt{q^{(2)}_k} \, \bigket{k}_{1(2L)\overline{1(2L)}} \otimes \bigket{k}_{(2R)3\overline{(2R)3}} \,,
    \label{eq:rho123_purification}
\end{align}
where $\bigket{j}_{1(2L)\overline{1(2L)}} \in \calH_1 \otimes \calH_2 \otimes \calHconj_1 \otimes \calHconj_2$ is the canonical purification of $\rho_{1(2L)}^j$, etc.
Note that the canonical purification may be obtained by isometry acting on $\calH_4$ of the original pure state $\ket{\psi}_{1234}$.
This may alternatively be viewed as identification of a particular basis in $\calH_4$, and this viewpoint is just the difference between active and passive transformation.
As a result, the isometry does not change any entanglement properties among the four parties.
%We may therefore slightly abuse notation and use $\ket{\calH_{1}\calH_{2}\calH_{3}\calH_{4}}$ to denote both the original four-party state and the canonical purification of $\rho_{123}$ to $\calH_4$.
The state $\ket{\psi}_{1234}$ must take the form
\begin{align}
    \ket{\psi}_{1234} = \sum_k \sqrt{q^{(2)}_k} \, \bigket{k}_{(4R)1(2L)} \bigket{k}_{(2R)3(4L)} \,,
    \label{eq:H24_decomp}
\end{align}
%In this purification, just as there is a decomposition for $\calH_1$, we can see that there is also a decomposition of $\calH_3$:
where the decomposition of $\calH_4$ 
\begin{align}
    \calH_4 = \bigoplus_m \calH_{4R}^m \otimes \calH_{4L}^m
    \label{eq:H4_decomposition}
\end{align}
is induced by the reflection of Eq.~\eqref{eq:H2_decomposition} in the canonical purification.
We have interchanged the labeling of $L$ and $R$ for later convenience.

Similarly, from $I(\calH_4:\calH_2|\calH_1)=0$, there exists a splitting on $\calH_1$ and $\calH_3$ such that
\begin{align}
    \ket{\psi}_{1234} = \sum_j \sqrt{q^{(1)}_j} \, \bigket{j}_{(1R)2(3L)} \bigket{j}_{(3R)4(1L)} .
    \label{eq:H13_decomp}
\end{align}

Up to now, we have introduced a splitting on the four Hilbert spaces and the Schmidt decompositions between two different cuts.
Comparing Eqs.~\eqref{eq:H13_decomp} and \eqref{eq:H24_decomp}, it is not clear that there is a term-by-term equivalence.
We now use projectors onto the orthogonal subspaces of each Hilbert space, such as in Eq.~\eqref{eq:H2_decomposition}, to find that there is indeed such an equivalence and that that each term in the sum is precisely a term in Eq.~\eqref{eq:SOPS_definition}.

Let $P^{(i)}_{k}$ be a projector on to the $k$\textsuperscript{th} subspace in the decomposition of $\calH_i$, i.e., $\calH_{iL}^k \otimes \calH_{iR}^k$. As they are projectors onto the Schmidt bases,
$\sum_k P^{(i)}_{k} \ket{\psi}_{1234} = \ket{\psi}_{1234}$.
In addition, from Eqs.~\eqref{eq:H2_decomposition}, \eqref{eq:H24_decomp}, and~\eqref{eq:H4_decomposition}, we have
\begin{subequations} \label{eq:kronecker_1234} \begin{align}
    P^{(2)}_j P^{(4)}_k \ket{\psi}_{1234} = \delta_{jk} P^{(2)}_k \ket{\psi}_{1234} .
    \label{eq:kronecker_24}
\end{align}
%In other words, a projection onto a Schmidt vector may be obtained by projection on either Hilbert space
Similarly, Eq.~\eqref{eq:H13_decomp} implies that
\begin{align}
    P^{(1)}_\ell P^{(3)}_m \ket{\psi}_{1234} = \delta_{\ell m} P^{(1)}_\ell \ket{\psi}_{1234} .
    \label{eq:kronecker_13}
\end{align} \end{subequations}

We iet $\bigket{k,j}_{(4R)1(2L)}$ be the normalized state from the projection $P^{(1)}_j \bigket{k}_{(4R)1(2L)}$;
we rewrite part of Eq.~\eqref{eq:H24_decomp} as
\begin{subequations} \begin{alignat}{2}
    P^{(1)}_j \bigket{k}_{(4R)1(2L)} &= M^{(1)}_{k \to j} \bigket{k,j}_{(4R)1(2L)}
        && \in \calH_{4R}^k \otimes \calH_{1L}^j \otimes \calH_{1R}^j \otimes \calH_{2L}^k \,,
\end{alignat}
where $\sum_j \big|M^{(1)}_{k\to j}\big|^2 = 1$ as the normalization condition.
(If one of the projections $P_j \ket{k}$ vanishes, i.e., $M_{k \to j} = 0$, we may simply assign $\ket{k,j}$ to be an arbitrary normalized state in the respective Hilbert space.)

Likewise, the projection on the Schmidt states in Eqs.~\eqref{eq:H24_decomp} and~\eqref{eq:H13_decomp} are denoted as
\begin{alignat}{2}
    P^{(2)}_k \bigket{\ell}_{(1R)2(3L)} &= M^{(2)}_{\ell \to k} \bigket{\ell,k}_{(1R)2(3L)}
        && \in \calH_{1R}^\ell \otimes \calH_{2L}^k \otimes \calH_{2R}^k \otimes \calH_{3L}^\ell \,,
\\  P^{(3)}_\ell \bigket{m}_{(2R)3(4L)} &= M^{(3)}_{m \to \ell} \bigket{m,\ell}_{(2R)3(4L)}
        && \in \calH_{2R}^m \otimes \calH_{3L}^\ell \otimes \calH_{3R}^\ell \otimes \calH_{4L}^m \,,
\\  P^{(4)}_m \bigket{j}_{(3R)4(1L)} &= M^{(4)}_{j \to m} \bigket{j,m}_{(3R)4(1L)}
        && \in \calH_{3R}^j \otimes \calH_{4L}^m \otimes \calH_{4R}^m \otimes \calH_{1L}^j \,,
\end{alignat} \end{subequations}
with $\sum_k \big|M^{(2)}_{\ell \to k}\big|^2 = \sum_\ell \big|M^{(3)}_{m \to \ell}\big|^2 = \sum_m \big|M^{(4)}_{j \to m}\big|^2 = 1$.

Let $P_{jk\ell m} \equiv P^{(1)}_j P^{(2)}_k P^{(3)}_\ell P^{(4)}_m$ be a product of (commuting) projectors.
We look at its action on $\ket{\psi}_{1234}$ written in two different ways [Eqs.~\eqref{eq:H24_decomp} and~\eqref{eq:H13_decomp}].
By Eqs.~\eqref{eq:kronecker_1234}, any projector for which $j \neq \ell$ or $k \neq m$ will annihilate the state $\ket{\psi}_{1234}$; it therefore suffices to consider only projectors of the form $P_{jkjk}$.
\begin{subequations} \label{eq:Pjkjk_on_psi} \begin{align}
    P_{jkjk} \ket{\psi}_{1234} &= \sqrt{q^{(2)}_k} \, M^{(1)}_{k \to j} M^{(3)}_{k \to j} \, \bigket{k,j}_{(4R)1(2L)} \, \bigket{k,j}_{(2R)3(4L)}
        && \text{[from Eq.~\eqref{eq:H24_decomp}]}, \label{eq:Pjkjk_on_psi13}\\
    P_{jkjk} \ket{\psi}_{1234} &= \sqrt{q^{(1)}_j} \, M^{(2)}_{j \to k} M^{(4)}_{j \to k} \, \bigket{j,k}_{(1R)2(3L)} \, \bigket{j,k}_{(3R)4(1L)}
        && \text{[from Eq.~\eqref{eq:H13_decomp}]}. \label{eq:Pjkjk_on_psi24}
\end{align} \end{subequations}

Consider the application of $P^{(1)}_j P^{(3)}_j$ on $\ket{k}_{(4R)1(2L)} \ket{k}_{(2R)3(4L)}$, individual terms in the RHS of Eq.~\eqref{eq:H24_decomp}.
\begin{align}
	\bigket{k}_{(4R)1(2L)} \bigket{k}_{(2R)3(4L)} =
	\sum_j P^{(1)}_j P^{(3)}_j \bigket{k}_{(4R)1(2L)} \bigket{k}_{(2R)3(4L)} &= \sum_j M^{(1)}_{k \to j} M^{(3)}_{k \to j} \, \bigket{k,j}_{(4R)1(2L)} \, \bigket{k,j}_{(2R)3(4L)} \,.
\end{align}
The normalization condition for the projection is
\begin{align}
	\forall k , \quad \sum_j \big| M^{(1)}_{k \to j} M^{(3)}_{k \to j} \big|^2 = 1
\end{align}
Applying Fact~\ref{fact:prob_permutation} to the probability distributions $\big|M^{(1)}_{k \to j}\big|^2$ and $\big|M^{(3)}_{k \to j}\big|^2$, we conclude that there exist a map $j(k)$, such that $\big|M^{(1)}_{k \to j(k)}\big| = \big|M^{(3)}_{k \to j(k)}\big| = 1$, and that the other terms vanishes.
Likewise, the same argument applied to the individual terms in the RHS of Eq.~\eqref{eq:H13_decomp} implies that there exists a single nonzero entry $M^{(2,4)}_{j\to k(j)}$.
In equation form,
\begin{align}
	\big|M^{(1)}_{k \to j}\big| &= \big|M^{(3)}_{k \to j}\big| = \begin{cases} 1 & j = j(k), \\ 0 & j \neq j(k). \end{cases}
&	\big|M^{(2)}_{j \to k}\big| &= \big|M^{(4)}_{j \to k}\big| = \begin{cases} 1 & k = k(j), \\ 0 & k \neq k(j). \end{cases}
\end{align}
Next, the Eqs.~\eqref{eq:Pjkjk_on_psi13} and~\eqref{eq:Pjkjk_on_psi24} implies that the coefficients have equal absolute values.
This is possible only if $j(k)$ and $k(j)$ are inverse functions: i.e., $j(k(j)) = j$ and $k(j(k)) = k$.  In addition we have $q^{(1)}_j = q^{(2)}_{k(j)}$.

Finally, applying Lemma~\ref{prop:4body_cyclicMarkov} to Eqs.~\eqref{eq:Pjkjk_on_psi}, we can write each $P_{jkjk} \ket{\psi}_{1234}$ as a product of states.
\begin{align}
	\ket{\psi}_{1234}
	&= \sum_{j,k} P_{jkjk} \ket{\psi}_{1234}  \notag
\\	&= \sum_j P_{j,k(j),j,k(j)} \sqrt{q^{(2)}_{k(j)}} \, M^{(1)}_{k(j) \to j} M^{(3)}_{k(j) \to j}
	\, \bigket{k,j}_{(4R)(1L)} \bigket{j,k}_{(1R)(2L)} \bigket{k,j}_{(2R)(3L)} \bigket{j,k}_{(3R)(4L)} \,,
\end{align}
where $\ket{x,y}_{(AR)(BL)} \in \calH_{AR}^x \otimes \calH_{BL}^y$.
This is precisely the form of a SOPS (Def.~\ref{def:SOPS}).
\end{proof}
\end{proposition}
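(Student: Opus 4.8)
The plan is to exploit the two ``crossing'' conditional--mutual--information conditions supplied by the hypothesis, each of which is a quantum Markov condition whose exact structure is dictated by Theorem~\ref{thm:quantum_markov_chain}, and then to reconcile the two resulting decompositions term by term. First I would apply Theorem~\ref{thm:quantum_markov_chain} to the $i=2$ condition $I(\calH_1:\calH_3|\calH_2)=0$ to obtain a splitting $\calH_2=\bigoplus_k \calH_{2L}^k\otimes\calH_{2R}^k$ and the block form $\rho_{123}=\sum_k q^{(2)}_k\,\rho^k_{1(2L)}\otimes\rho^k_{(2R)3}$. Since $\ket{\psi}_{1234}$ is itself a purification of $\rho_{123}$ on $\calH_4$, it is related by an isometry on $\calH_4$ to the canonical purification of this block form, which preserves all entanglement quantities; this yields a first expression $\ket{\psi}_{1234}=\sum_k \sqrt{q^{(2)}_k}\,\bigket{k}_{(4R)1(2L)}\bigket{k}_{(2R)3(4L)}$ together with an induced splitting of $\calH_4$. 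Applying the same reasoning to the $i=1$ condition $I(\calH_2:\calH_4|\calH_1)=0$ gives a second expression $\ket{\psi}_{1234}=\sum_j \sqrt{q^{(1)}_j}\,\bigket{j}_{(1R)2(3L)}\bigket{j}_{(3R)4(1L)}$ and splittings of $\calH_1,\calH_3$. The remaining two cyclic conditions ($i=3,4$) are guaranteed by the full hypothesis and may be invoked symmetrically, but the essential content is already carried by these two crossing decompositions.

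The core of the argument is to match these two a priori unrelated direct--sum decompositions. I would introduce block projectors $P^{(i)}_k$ onto the $k$th summand of the splitting of $\calH_i$; since these project onto Schmidt bases one has $\sum_k P^{(i)}_k\ket{\psi}=\ket{\psi}$, and the two forms above force Kronecker constraints relating labels on opposite parties, namely $P^{(2)}_j P^{(4)}_k\ket{\psi}=\delta_{jk}P^{(2)}_k\ket{\psi}$ and $P^{(1)}_\ell P^{(3)}_m\ket{\psi}=\delta_{\ell m}P^{(1)}_\ell\ket{\psi}$. Acting on $\ket{\psi}$ with the commuting product $P^{(1)}_j P^{(2)}_k P^{(3)}_\ell P^{(4)}_m$ and using these constraints annihilates every term except those of the diagonal form $P_{jkjk}$, so it suffices to analyze $P_{jkjk}\ket{\psi}$ as evaluated in each of the two expressions.

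Next I would collapse the ``mixing'' that each projection introduces. Writing the projections of the Schmidt vectors through amplitude matrices $M^{(i)}$, e.g. $P^{(1)}_j\bigket{k}_{(4R)1(2L)}=M^{(1)}_{k\to j}\bigket{k,j}_{(4R)1(2L)}$, normalization gives $\sum_j|M^{(i)}_{k\to j}|^2=1$. The key step is that demanding the two expressions for $P_{jkjk}\ket{\psi}$ agree forces a product of two such probability distributions to sum to unity; Fact~\ref{fact:prob_permutation} then applies and shows each $M^{(i)}$ has a single nonzero entry of unit modulus. This pins down a map $j(k)$ from the $\calH_2$--labels to the $\calH_1,\calH_3$--labels and, symmetrically, a map $k(j)$; matching the surviving coefficients forces $j(k)$ and $k(j)$ to be mutually inverse bijections with $q^{(1)}_j=q^{(2)}_{k(j)}$.

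Finally, on each surviving diagonal term $P_{j,k(j),j,k(j)}\ket{\psi}$ the two expressions present the \emph{same} state as a product across two crossing bipartitions of the four parties; Lemma~\ref{lemma:product_state} then factorizes it into four bipartite pieces, one per edge of the square, and summing over the single remaining label reproduces exactly the form of Def.~\ref{def:SOPS}. The main obstacle I anticipate is precisely this term--by--term reconciliation in the middle: the two Markov decompositions arrive with independent, unrelated labelings and splittings, and without the projector bookkeeping together with the rigidity supplied by Fact~\ref{fact:prob_permutation} there is no reason for them to align. Forcing the amplitude matrices to collapse to permutations is the crux; once that is achieved, the crossing--product factorization via Lemma~\ref{lemma:product_state} is routine.
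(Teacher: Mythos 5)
Your proposal is correct and follows essentially the same route as the paper's proof: the same two crossing Markov decompositions via Theorem~\ref{thm:quantum_markov_chain} with the canonical purification inducing the splitting of $\calH_4$, the same block projectors $P^{(i)}_k$ and Kronecker constraints reducing to the diagonal $P_{jkjk}$, the rigidity argument via Fact~\ref{fact:prob_permutation} collapsing the amplitude matrices $M^{(i)}$ to mutually inverse permutations with $q^{(1)}_j = q^{(2)}_{k(j)}$, and the final factorization of each surviving term. You even correctly identify Lemma~\ref{lemma:product_state} as the tool for the last crossing-bipartition factorization, where the paper's text contains a self-referential citation typo.
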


To summarize, we used quantum Markov property [Theorem~\ref{thm:quantum_markov_chain}] to find a decomposition of each local Hilbert space $\calH_i = \bigoplus_\mu \calH_{iL}^\mu \otimes \calH_{iR}^\mu$.
The decomposition gives us orthogonal projectors onto a basis in which the Schmidt vectors along each cut factorized.
Aided by Lemmas~\ref{lemma:product_state} and~\ref{fact:prob_permutation}, we then found that there is a bijection between the Schmidt vectors and values along different cuts.
This enabled us to find orthogonal projectors $P_{jkjk}$ which when applied to $\ket{\psi}_{1234}$, return a polygon state.

\vspace{3ex}
With the bulk of the technical work behind us, we now complete the proof of the structure theorem for $h$.

\begin{proof}[Proof of Theorem~\ref{thm:zero_h_app}]
Let $\ket{\psi}_{ABC}$ be a 3-party state on $A$, $B$, and $C$.
We denote the canonical purification of $\rho_{AB}$ as
\begin{align} \begin{aligned}
	\ket{\Psi}_{AB\bar{B}\bar{A}} &\equiv \operatorname{CanonPur}\!\big[ \rho_{AB} \big] && \in \calH_A \otimes \calH_B \otimes \calHconj_B \otimes \calHconj_A .
\end{aligned} \end{align}
%The canonical purification of $\rho_{AB}$, which we denote by $\ket{\Psi}_{AB\bar{B}\bar{A}}$, lives in the Hilbert space $\calH_A \otimes \calH_B \otimes \calHconj_B \otimes \calHconj_A$.
\noindent We recast $h(A:B)$ as a conditional mutual information:
\begin{align} \begin{aligned}
	h(A:B) &= S_{A\bar{A}} + (S_{AB} - S_A - S_B)
	\\	&= S_{\bar{A}A}+S_{AB}-S_A-S_{\bar{A}AB}
	\\ &= I\big(\bar{A}:B\big|A\big) . \label{eq:h_equals_cmi}
\end{aligned} \end{align}
Because of the symmetry $A \leftrightarrow B$ and $A,B \leftrightarrow \bar{A},\bar{B}$, the following quantities are equal.
\begin{align}
	h(A:B) = I\big(\bar{A}:B\big|A\big) = I\big(A:\bar{B}\big|B\big) = I\big(B:\bar{A}\big|A\big) = I\big(\bar{B}:A\big|\bar{A}\big) .
	\label{eq:h_equals_cmi4}
\end{align}

If $\ket{\psi}_{ABC}$ is a SOTS (SOPS with $N=3$), then by Lemma~\ref{lemma:SOPS_canonpurif} the canonical purification $\ket{\Psi}_{AB\bar{B}\bar{A}}$ is a SOPS with respect to $(\calH_A,\calH_B,\calHconj_B,\calHconj_A)$.
Via Porism~\ref{lemma:SOPS_cyclicMarkov}, Eq.~\eqref{eq:h_equals_cmi4} is identically zero. Thus, if $\ket{\psi}_{ABC}$ is a SOTS, $h(A:B)=0$.

The proof for the converse statement is as follows.
If $h(A:B) = 0$ for $\ket{\psi}_{ABC}$, then Eq.~\eqref{eq:h_equals_cmi4} vanishes for $\ket{\Psi}_{AB\bar{B}\bar{A}}$.
It follows from Proposition~\ref{prop:4body_cyclicMarkov} that $\ket{\Psi}_{AB\bar{B}\bar{A}}$ is a 4-party SOPS.
Finally observe that $\ket{\psi}_{ABC}$ is isometric to $\ket{\Psi}_{AB\bar{B}\bar{A}}$ after coarse-graining $\bar{B}\bar{A} \to C$, which by Lemma~\ref{lemma:SOP_CG} makes $\ket{\psi}_{ABC}$ a 3-party SOPS.

This complete the proof that $h(A:B) = 0$ if and only if $\ket{\psi}_{ABC}$ is a SOTS.
\end{proof}

\noindent \textbf{Remark.}  The converse of Porism~\ref{lemma:SOPS_cyclicMarkov} is the statement
\begin{align}
	\forall i\quad I(\calH_{i-1}:\calH_{i+1}|\calH_i)=0 \quad\overset{?}{\Longrightarrow}\quad \ket{\psi}_{1\dots N} \in \mathrm{SOPS}(\calH_1,\dots,\calH_N) .
	\label{eq:Lemma_SOPS_cyclicMarkov_converse}
\end{align}
Proposition~\ref{prop:4body_cyclicMarkov} proves that this is indeed true for $N=4$.
The statement is trivially true for $N=3$, since for three parties the left-hand-side implies that $\ket{\psi}_{123}$ is a product state.
Evidently this statement is false for $N\geq6$, because of the existence of a 6-party perfect tensor. The case for $N=5$ remains an open problem.
%\RM{\bf Karthik will tell us if Eq.~\eqref{eq:Lemma_SOPS_cyclicMarkov_converse} is true for $N=5$. (Or not)}
%(At the end of App.~\ref{app:gapped_systems}, we provide a more stringent criteria which should be sufficient to show that a state is an $N$-party SOPS.)

\section{Coarse-graining of matrix product states}\label{app:coarse_graining}
Here we present the details of the matrix product state techniques that are used to compute $E_P(A:B)$ and $S_R(A:B)$ for a critical quantum spin chain. As mentioned in the main text, we coarse-grain spins in regions $A,B,C$ and truncate the Hilbert spaces $\calH_A,\calH_B,\calH_C$ into $\calH_{\tilde{A}},\calH_{\tilde{B}},\calH_{\tilde{C}}$. The starting point is a periodic uniform matrix product state (puMPS) that represents the ground state. A puMPS is composed of $N$ identical rank-3 tensors $M$,
\begin{equation}
\label{eq:original_MPS}
    \ket{\psi(M)} =\sum_{s_1 s_2\cdots s_n}\Tr(M_{s_1}M_{s_2}\cdots M_{s_N})\ket{s_1s_2\cdots s_N},
\end{equation}
where $M_{s_i}$ is a $ D\times D$ matrix, $s_i=1,2,\dots,d$ is the index for the Hilbert space on one site, and $D$ is the bond dimension. The bond dimension $D$ restricts the amount of entanglement in the ansatz. Specifically, the reduced density matrix of a puMPS on any contiguous region has rank at most $D^2$. In order to represent the ground state faithfully, $D$ grows polynomially with $N$ for a critical spin chain, but stays constant for a gapped spin chain \cite{Verstraete2006}. We employ methods in Ref.~\onlinecite{Zou2018} to minimize the energy with respect to the Hamiltonian $H$ and obtain the optimized puMPS $\ket{\psi(M)}$ as an approximation to the ground state.

\begin{figure}[h]
    \centering
    \includegraphics[width=0.5\linewidth]{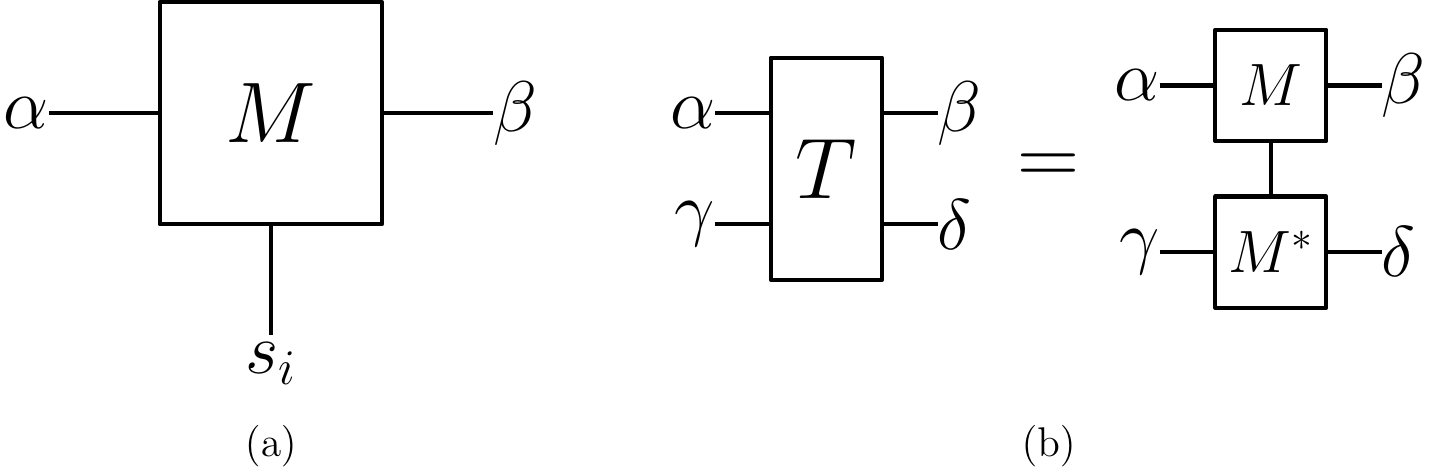}
    \caption{(a) Unit cell tensor $M$. Virtual indices are labeled $\alpha,\beta$ and physical index is labeled $s_i$. (b) $T_{\alpha\gamma, \beta \delta}$ is formed by the contraction of the unit cell tensors along the physical index.}
    \label{fig:tensor_definitions}
\end{figure}

Next, we block spins to obtain a series of tensors $M^{(n)}$ that represents the coarse-grained tensor for $n$ contiguous sites. Consider the transfer matrix
\begin{equation}
\label{eq:TM_MPS}
    T_{\alpha\beta\gamma\delta}=\sum_{s=1}^d M_{s\alpha\beta}M^{*}_{s\gamma\delta}.
\end{equation}
For clarity, the indices and multiplication are shown using graphical tensor notation as well in Fig.~\ref{fig:tensor_definitions}. Grouping the indices $(\alpha,\gamma)$ and $(\beta,\delta)$, this tensor can be viewed as a $D^2\times D^2$ matrix. We can take its $n$-th power $T^n$, e.g.
\begin{equation}
    (T^2)_{\alpha\gamma,\beta\delta}=\sum_{\beta_1,\delta_1} T_{\alpha \beta_1 \gamma \delta_1} T_{\beta_1 \beta \delta_1\delta}. 
\end{equation}
Now we regroup the indices $(\alpha,\beta)$ and $(\gamma,\delta)$ for each tensor $(T^n)_{\alpha\beta\gamma\delta}$ to obtain a $D^2\times D^2$ matrix (different from above) that is Hermitian and positive. These matrices admit an eigenvalue decomposition,
\begin{equation}
\label{eq:TMn_MPS}
    (T^n)_{\alpha\beta,\gamma\delta}=\sum_{S=1}^{d^{(n)}} \lambda^{(n)}_{S} U^{(n)}_{S\alpha\beta}U^{(n)*}_{S\gamma\delta},
\end{equation}
where $d^{(n)}$ is the number of positive eigenvalues. It can be easily shown that $d^{(n)}\leq \min{(D^2,d^n)}$. Let
\begin{equation}
\label{eq:coarse_grain_eigv}
    M^{(n)}_{S\alpha\beta}=\sum_{S=1}^{d^{(n)}}\sqrt{\lambda^{(n)}_{S}} U^{(n)}_{S\alpha\beta},
\end{equation}
then the $n$th power of the transfer matrix can be recovered by $M^{(n)}_{S\alpha\beta}$,
\begin{equation}
   (T^n)_{\alpha\beta\gamma\delta}=\sum_{S=1}^{d^{(n)}} M^{(n)}_{S\alpha\beta}M^{(n)*}_{S\gamma\delta}.
   \label{eq:coarse_M}
\end{equation}
So far we have obtained a series of tensors $M^{(n)}$ as a result of coarse-graining $n$ spins.

Now we can compute the mutual information $I(A:B)=S_A+S_B-S_{AB}$. First, we would like to compute $S_A$, where $A$ contains $N_A$ contiguous sites, and the complement $\bar{A}$ contains $N-N_A$ sites. We can use the tensors $M^{(N_A)}$ and $M^{(N_{\bar{A}})}$ to construct a coarse-grained state
\begin{equation}
\label{eq:bipartiteMPS}
    \ket{\tilde{\psi}}_{A\bar{A}}= \sum_{S_{A},S_{\bar{A}}} \sum_{\alpha,\beta=1}^{D} M^{(N_A)}_{S_A \alpha\beta} M^{(N_{\bar{A}})}_{S_{\bar{A}}\beta\alpha}\ket{S_A S_{\bar{A}}}
\end{equation}
A Schmidt decomposition on $\ket{\psi}_{A\bar{A}}$ gives the entanglement spectrum $\{\lambda^2_j\}$ between $A$ and $\bar{A}$, which amounts to a singular value decompostion,
\begin{equation}
\label{eq:coarse_grained_MPS_SVD}
    M^{(N_A)}_{S_A \alpha\beta} M^{(N_{\bar{A}})}_{S_{\bar{A}}\beta\alpha}=\sum_{j} U_{S_A j} \lambda_j V^{*}_{S_{\bar{A}}j},
\end{equation}
where
\begin{equation}
    \sum_{S_A} U_{S_A j} U^{*}_{S_A k}=\delta_{jk},~~ \sum_{S_{\bar{A}}}V^{*}_{S_{\bar{A}}j} V_{S_{\bar{A}}k}=\delta_{jk},
\end{equation}
and $\lambda_j\geq 0$. Note that $U$ depends on $N_A$, and we do not explicitly show the dependence in the notation. Also note that the $U$ here is not to be confused with the $U^{(n)}$ in Eq.~\eqref{eq:coarse_grain_eigv}. The entanglement entropy between $A$ and $\bar{A}$ is
\begin{equation}
    S_{A}=-\sum_{j} \lambda^{2}_j\log(\lambda^2_j).
\end{equation}
Repeating the same procedure with $N_A$ substituted by $N_B$ or $N_{A}+N_{B}$ gives the entanglement entropy $S_B$ or $S_{AB}$. We thus obtain the mutual information $I(A:B)=S_A+S_B-S_{AB}$. For later convenience, we further truncate the physical dimension of $M^{(N_A)}$ using the Schmidt vectors $U_{S_A j}$,
\begin{equation}
\label{eq:MPS_truncation}
    \tilde{M}^{(N_A)}_{\tilde{S}_A\alpha\beta}=\sum_{S_A} U^{*}_{S_A \tilde{S}_A} M^{(N_A)}_{S_A \alpha\beta},
\end{equation}
where we have restricted the index $\tilde{S}_A$ such that
\begin{equation}
    \lambda_{\tilde{S}_A}>\epsilon,~~\forall \tilde{S}_A,
\end{equation}
given an error threshold $\epsilon>0$. The physical dimension is truncated to $\tilde{d}^{(N_A)}$, the number of Schmidt coefficients $\{\lambda_j\}$ larger than a threshold $\epsilon$. By virtue of the singular value decomposition, the truncation only keeps the Schmidt vectors with Schmidt values larger than $\epsilon$. Again, the truncation can be done for any $M^{(n)}$ with $1\leq n \leq N$, by substituting the $N_A$ above with $n$.

Given the coarse-grained tensor $\tilde{M}^{(n)}$ and $N_A,N_B,N_C$, we are ready to construct the coarse-grained tripartite state as a MPS with three sites,
\begin{equation}
\label{eq:truncated_MPS}
\ket{\tilde{\psi}}_{\tilde{A}\tilde{B}\tilde{C}}=\sum_{\tilde{S}_A,\tilde{S}_B,\tilde{S}_C}\sum_{\alpha\beta\gamma} \tilde{M}^{(N_A)}_{\tilde{S}_A\alpha\beta}\tilde{M}^{(N_B)}_{\tilde{S}_B\beta\gamma}\tilde{M}^{(N_C)}_{\tilde{S}_C\gamma\alpha}\ket{\tilde{S}_A \tilde{S}_B \tilde{S}_C}.
\end{equation}
This is the state that is used to compute $E_P(A:B)$ and $S_R(A:B)$ in the main text. The whole renormalization procedure is summarized in Fig.~\ref{fig:MPS_RG}. 

\begin{figure}
    \centering
    \begin{minipage}{0.98\linewidth}
        \begin{minipage}[t]{15mm}\raisebox{22mm}{\subfigure[]{\label{fig:pumps_a}}}\end{minipage}
        \includegraphics[width=0.5\linewidth]{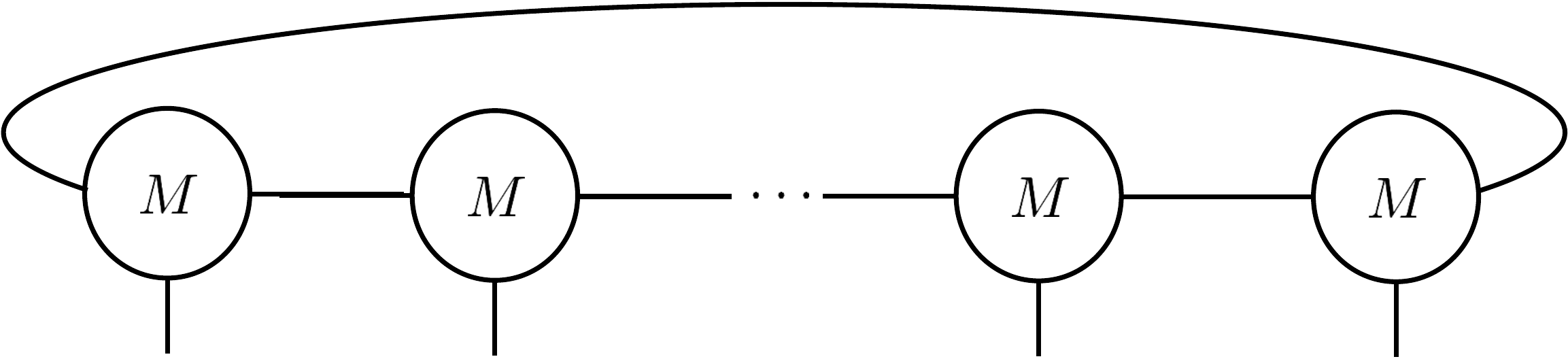}
    \end{minipage}
\\[1ex]
    \begin{minipage}{0.98\linewidth}
        \begin{minipage}[t]{0mm}\hspace{-15mm}\raisebox{36mm}{\subfigure[]{\label{fig:pumps_b}}}\end{minipage}
        \includegraphics[width=0.45\linewidth]{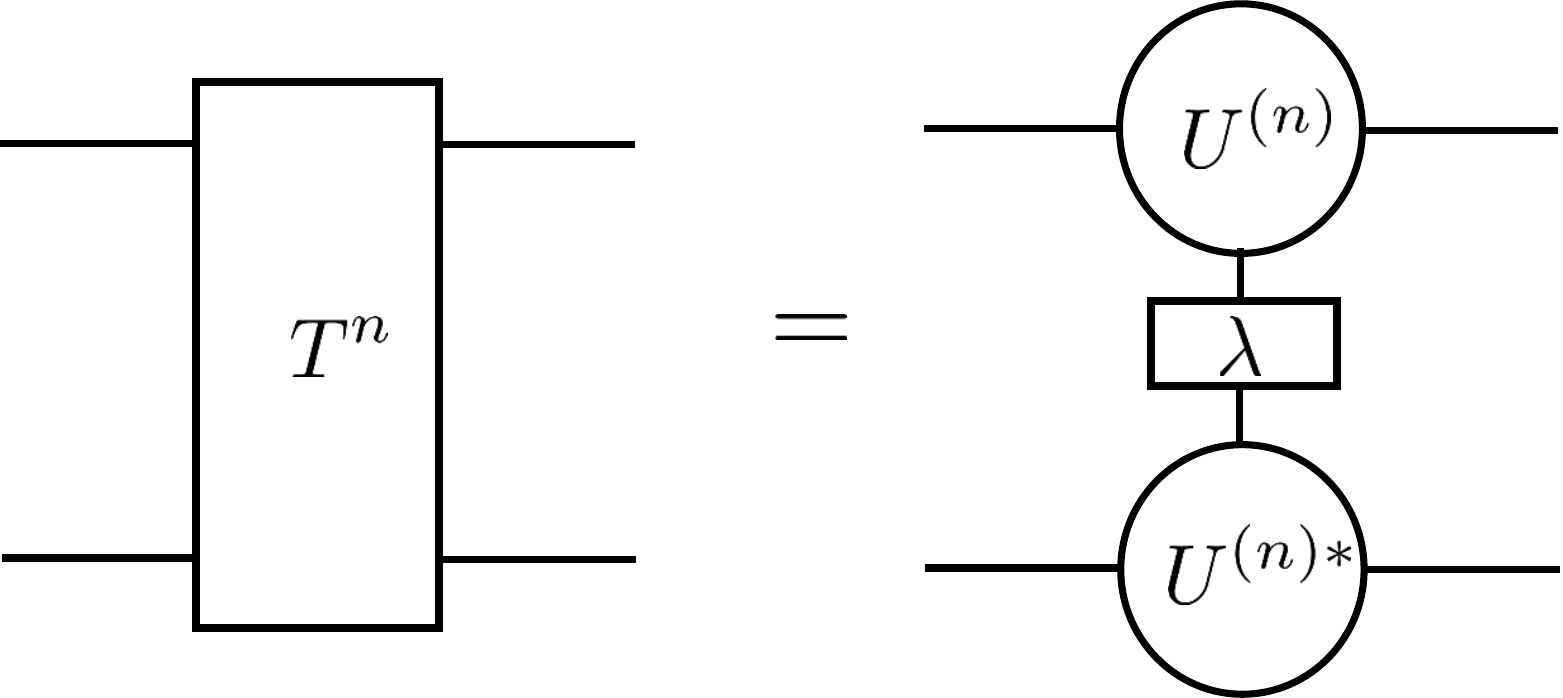}
    \end{minipage}
\\[1ex]
    \begin{minipage}{0.98\linewidth}
        \begin{minipage}[t]{0mm}\hspace{-15mm}\raisebox{25mm}{\subfigure[]{\label{fig:pumps_c}}}\end{minipage}
        \includegraphics[width=0.45\linewidth]{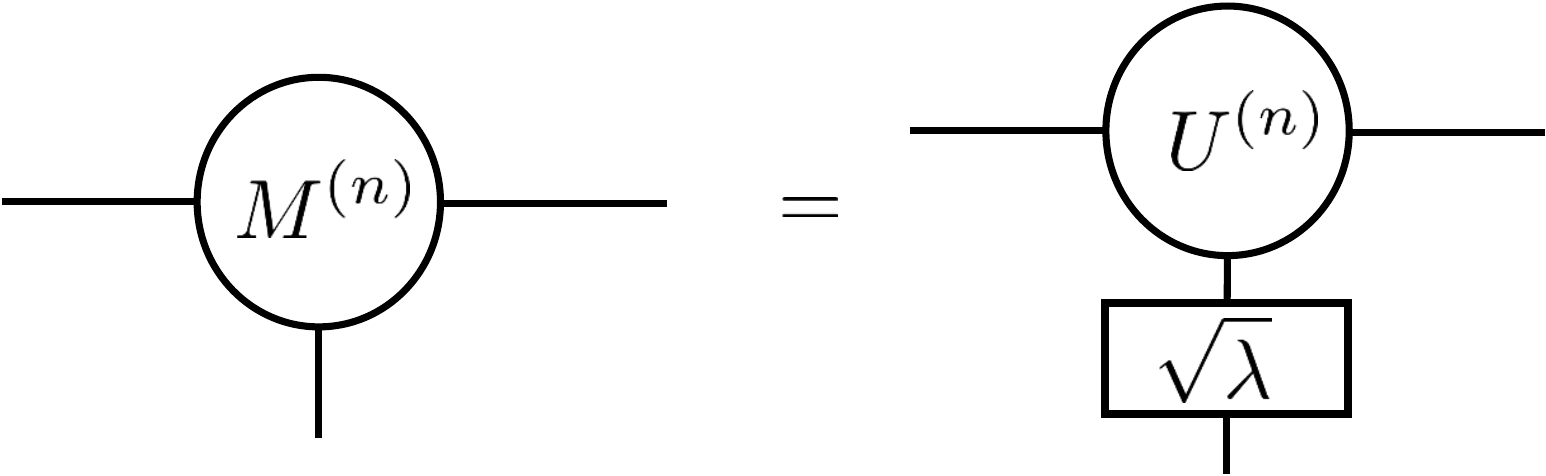}
    \end{minipage}    
\\[1ex]
    \begin{minipage}{0.98\linewidth}
        \begin{minipage}[t]{0mm}\hspace{-15mm}\raisebox{27mm}{\subfigure[]{\label{fig:pumps_d}}}\end{minipage}
        \includegraphics[width=0.6\linewidth]{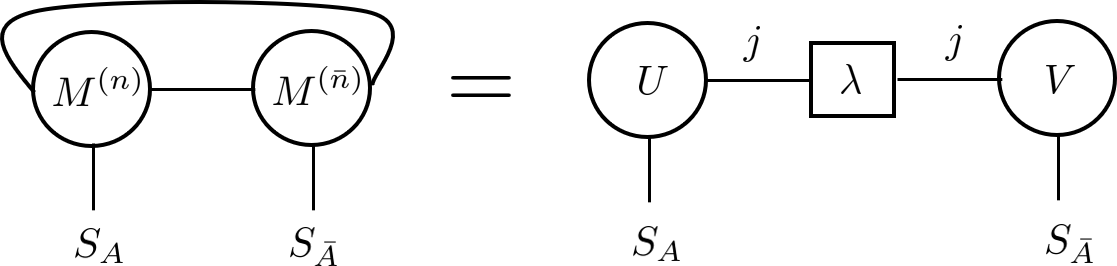}
    \end{minipage}
\\[1ex]
    \begin{minipage}{0.98\linewidth}
        \begin{minipage}[t]{0mm}\hspace{-15mm}\raisebox{30mm}{\subfigure[]{\label{fig:pumps_e}}}\end{minipage}
        \includegraphics[width=0.5\linewidth]{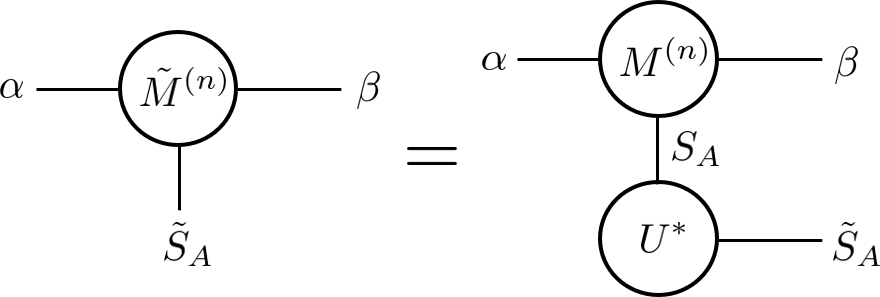}
    \end{minipage}
\\[1ex]
    \begin{minipage}{0.98\linewidth}
        \begin{minipage}[t]{0mm}\hspace{-15mm}\raisebox{30mm}{\subfigure[]{\label{fig:truncated_MPS}}}\end{minipage}
        \includegraphics[width=0.4\linewidth]{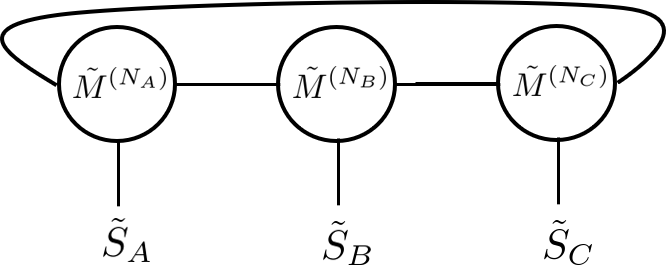}
    \end{minipage}
%    \begin{subfigure}{0.98\linewidth}
%        \includegraphics[width=0.5\linewidth]{RG0crop.pdf}
%        \subcaption{}
%    \end{subfigure}
    %\begin{subfigure}{0.45\linewidth}
    %    \includegraphics[width=0.98\linewidth]{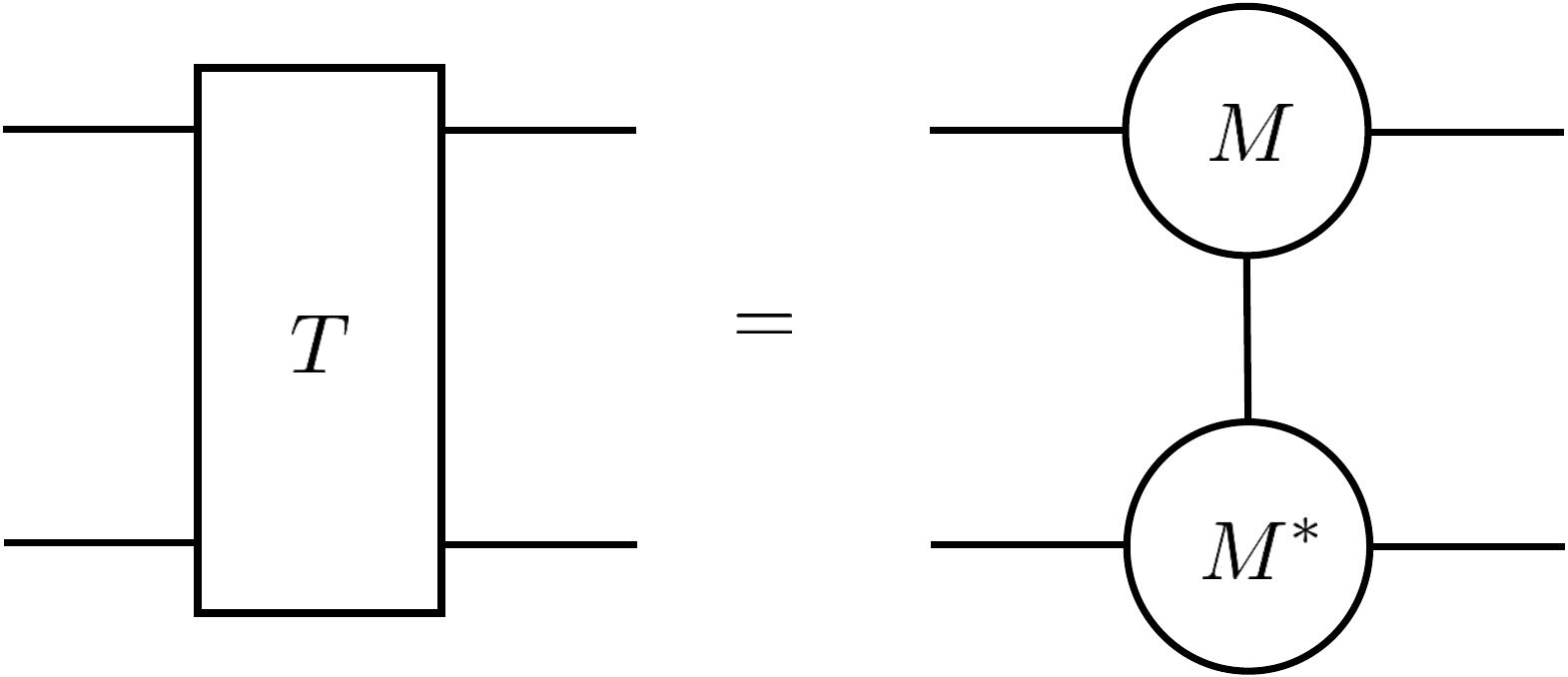}
    %   \subcaption{}
    %\end{subfigure}
%    \begin{subfigure}{0.98\linewidth}
%        \includegraphics[width=0.45\linewidth]{RG2pcrop.pdf}
%        \subcaption{}
%    \end{subfigure}
%    \begin{subfigure}{0.98\linewidth}
%        \includegraphics[width=0.45\linewidth]{RG3pcrop.pdf}
%        \subcaption{}
%    \end{subfigure}
%    \begin{subfigure}{0.98\linewidth}
%        \includegraphics[width=0.6\linewidth]{Mn_truncate.png}
%        \subcaption{}
%    \end{subfigure}
%    \begin{subfigure}{0.98\linewidth}
%        \includegraphics[width=0.5\linewidth]{Mn_truncate2.png}
%        \subcaption{}
%    \end{subfigure}
%    \begin{subfigure}{0.98\linewidth}
%        \includegraphics[width=0.4\linewidth]{psi_tilde.png}
%        \subcaption{}. 
%        \label{fig:truncated_MPS}
%    \end{subfigure}
    \caption{Coarse-graining of the periodic uniform matrix product state. (a) The original MPS as the ground state of the spin chain Hamiltonian, Eq.~\eqref{eq:original_MPS}. %(b) The transfer matrix $T$ of the MPS, Eq.~\eqref{eq:TM_MPS}. 
    (b) The eigenvalue decomposition of the $n$-th power of the transfer matrix $T^n$, Eq.~\eqref{eq:TMn_MPS}. (c) The coarse-grained tensor $M^{(n)}$ as a tensor for $n$ sites, Eq.~\eqref{eq:coarse_grain_eigv}. (d) The Schmidt decomposition of the coarse-grained MPS with respect to $A$ and its complement $\bar{A}$ with $n$ and $\bar{n}=N-n$ sites, respectively, Eq.~\eqref{eq:coarse_grained_MPS_SVD}. (e) Truncation of physical dimensions of the coarse-grained MPS tensors, Eq.~\eqref{eq:MPS_truncation}. (f) The final state $\ket{\tilde{\psi}}$ in Eq.~\eqref{eq:truncated_MPS}.}
    \label{fig:MPS_RG}
\end{figure}

\section{Gradient optimization for \texorpdfstring{$E_P$}{Ep}}\label{app:gradient_descent_ep}
We consider a tripartite state $\ket{\psi}_{ABC}$ where $C$ is further split into $C_L$ and $C_R$, $\calH_C=\calH_{C_L}\otimes \calH_{C_R}$.
Since $\ket{\psi}_{ABC_LC_R}$ is a purification of the reduced density matrix $\rho_{AB}$, all states of the form
\begin{equation}
    \ket{\psi(U_{C_LC_R})}=U_{C_LC_R}\ket{\psi}_{ABC_LC_R}
\end{equation}
gives purifications of $\rho_{AB}$, where $U_{C_LC_R}$ is a unitary operator on $C_LC_R$. Assume that the optimal purification can be achieved with the prescribed Hilbert space $H_{C_L}$ and $H_{C_R}$, then 
\begin{equation}
    E_P(A:B)=\min_{U_{C_LC_R}} S_{AC_L:BC_R}(\ket{\psi(U_{C_LC_R})}).
\end{equation}
We will find the minimum by a gradient optimization. The gradient optimization requires the gradient of the objective function over the argument. To compute the gradient, we first express the reduced density matrix $\rho_{AC_L}$ as
\begin{equation}
\label{eq:rho_ACL}
\rho_{AC_L}=\Tr_{BC_R} \rho,
\end{equation}
where
\begin{equation}
\label{eq:rho}
\rho=U_{C_LC_R}\ket{\psi}\bra{\psi}U^{\dagger}_{C_LC_R}.
\end{equation}
The entanglement entropy is
\begin{equation}
\label{eq:SACL}
S_{AC_L:BC_R}=-\Tr_{AC_L} (\rho_{AC_L} \log \rho_{AC_L}),
\end{equation}
where $\rho_{AC_L}$ depends on $U_{C_LC_R}$ by Eqs.~\eqref{eq:rho_ACL},~\eqref{eq:rho}.

Let $s$ be the label for a step of the gradient optimization. Initially at $s=0$ we have
\begin{equation}
    U_{C_LC_R}(s=0)=\mathbf{1}_{C_LC_R},
\end{equation}
which amounts to picking the original state $\ket{\psi}_{ABC_LC_R}$ as the purification. At each step of gradient optimization, We perform an update of $U_{C_LC_R}$ of the form
\begin{equation}
\label{eq:UpdateU}
    U_{C_LC_R}(s+1)= e^{i\Theta_{C_LC_R}\delta t} U_{C_LC_R}(s),
\end{equation}
where
$\Theta_{C_LC_R}$ is an Hermitian operator on $\calH_{C_L}\otimes \calH_{C_R}$ and $\delta t||\Theta_{C_LC_R}||\ll 1$. Up to higher order terms in $\delta t$, the change in the entanglement entropy $S_{AC_L:BC_R}$ is
\begin{align}
\delta S_{AC_L:BC_R} &= -\Tr_{AC_L} (\delta \rho_{AC_L} \log \rho_{AC_L}) \nonumber \\
&= -i\delta t \Tr_{AC_L} (\Tr_{BC_R}([\Theta_{C_LC_R}, \rho]) \log \rho_{AC_L}) \nonumber \\
&= -i\delta t\Tr([\Theta_{C_LC_R},\rho] \log \rho_{AC_L}\otimes \mathbf{1}_{BC_R}) \nonumber \\
&= -i\delta t \Tr(\Theta_{C_LC_R} [\rho, \log \rho_{AC_L}\otimes \mathbf{1}_{BC_R}]) \nonumber \\
&= \delta t \Tr_{C_LC_R}(\Theta_{C_LC_R}E_{C_LC_R}),
\end{align}
where 
\begin{equation}
\label{eq:envHC}
    E_{C_LC_R}=-i\Tr_{AB}([\rho, \log \rho_{AC_L}\otimes \mathbf{1}_{BC_R}]).
\end{equation}
In the first line we have differentiated Eq.~\eqref{eq:SACL} and used $\Tr_{AC_L}(\delta \rho_{AC_L})=0$ since $\Tr_{AC_L}\rho_{AC_L}\equiv 1$, in the second line we have used the Heisenberg evolution of density matrix $\rho$ and traced out $BC_R$, in the third line we have rearranged the tracings into an overall tracing on the full Hilbert space, in the fourth line we have used the cyclic property of trace, and in the last line we have rearranged the tracing.

If we use the gradient descent algorithm, we choose $\Theta_{C_LC_R}$ to be
\begin{equation}
\label{eq:g_grad}
\Theta_{C_LC_R}= -E_{C_LC_R}.
\end{equation}
In order to determine $\delta t$, we perform a line search to find the $\delta t$ that minimizes $S_{AC_L:BC_R}$, given the update rule Eq.~\eqref{eq:UpdateU} and the gradient direction Eq.~\eqref{eq:g_grad}. We then obtain $U_{C_LC_R}(s+1)$ which can be substituted into Eq.~\eqref{eq:rho} and Eq.~\eqref{eq:envHC} to compute the gradient direction $E_{C_LC_R}$ for the next step of the update. The gradient optimization goes so on and so forth, until the norm of gradient $||E_{C_LC_R}||$ is smaller than some tolerance $\eta$. In a typical gradient descent optimization, the error is quadratic in the norm of gradient. In this work we choose $\eta=10^{-4}$ such that the error in $E_P(A:B)$ is small compared to the finite-size corrections. In practice, we use the nonlinear conjugate gradient (NLCG) method instead of the simple gradient descent. The search direction $\Theta_{C_LC_R}$ in NLCG is a suitable linear combination of the gradient and the search direction in the previous step of iteration.

The computation of Eq.~\eqref{eq:envHC} is the most expensive step in the optimization. Given the tripartite state $\ket{\psi}_{\tilde{A}\tilde{B}\tilde{C}}$ in Fig.~\ref{fig:MPS_RG}, we follow the steps below to compute Eq.~\eqref{eq:envHC}. First, we contract the tensor network in Fig.~\ref{fig:truncated_MPS}, resulting in a three-leg tensor in Fig.~\ref{fig:reshape_psi}, where we have omitted the tilde to simplify the notation. Then we split the leg $C$ into $C_L$ and $C_R$ as prescribed by the decomposition of the Hilbert space. In order to find $\log \rho_{AC_L}$, we first do the Schmidt decomposition with respect to $AC_L$ and $BC_R$, as shown in Fig.~\ref{fig:SVD_ACL}. Then $\log \rho_{AC_L}$ can be represented by Fig.~\ref{fig:logrhoACL}. The density matrix $\rho$ is shown in Fig.~\ref{fig:rho}. We can then compute $\Tr_{AB}(\rho (\log \rho_{AC_L}\otimes \mathbf{1}_{BC_R}))$ by contracting the tensor network in fig,~\ref{fig:gradient_Ep_MPS}. Finally, Eq.~\eqref{eq:envHC} can be computed by
\begin{equation}
\label{eq:envHC_expand}
    E_{C_LC_R}=-i[\Tr_{AB}(\rho (\log \rho_{AC_L}\otimes \mathbf{1}_{BC_R}))-h.c.],
\end{equation}
where $h.c.$ denotes the Hermitian conjugate of $\Tr_{AB}(\rho (\log \rho_{AC_L}\otimes \mathbf{1}_{BC_R}))$.
\begin{figure}
    \centering
    \begin{minipage}{0.98\linewidth}
        \begin{minipage}[t]{0mm}\hspace{-20mm}\raisebox{24mm}{\subfigure[]{\label{fig:reshape_psi}}}\end{minipage}
        \includegraphics[width=0.4\linewidth]{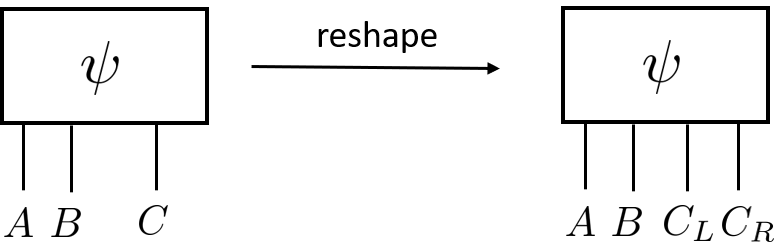}
    \end{minipage}
\\[1ex]
    \begin{minipage}{0.98\linewidth}
        \begin{minipage}[t]{0mm}\hspace{-20mm}\raisebox{20mm}{\subfigure[]{\label{fig:SVD_ACL}}}\end{minipage}
        \includegraphics[width=0.5\linewidth]{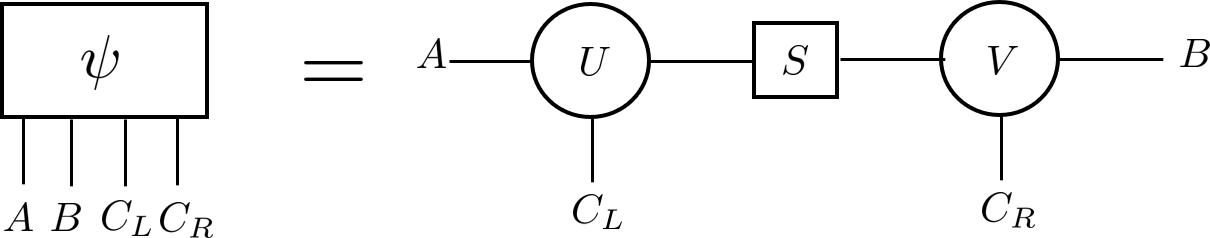}
    \end{minipage}
\\[1ex]
    \begin{minipage}{0.98\linewidth}
        \begin{minipage}[t]{0mm}\hspace{-20mm}\raisebox{34mm}{\subfigure[]{\label{fig:rho}}}\end{minipage}
        \includegraphics[width=0.5\linewidth]{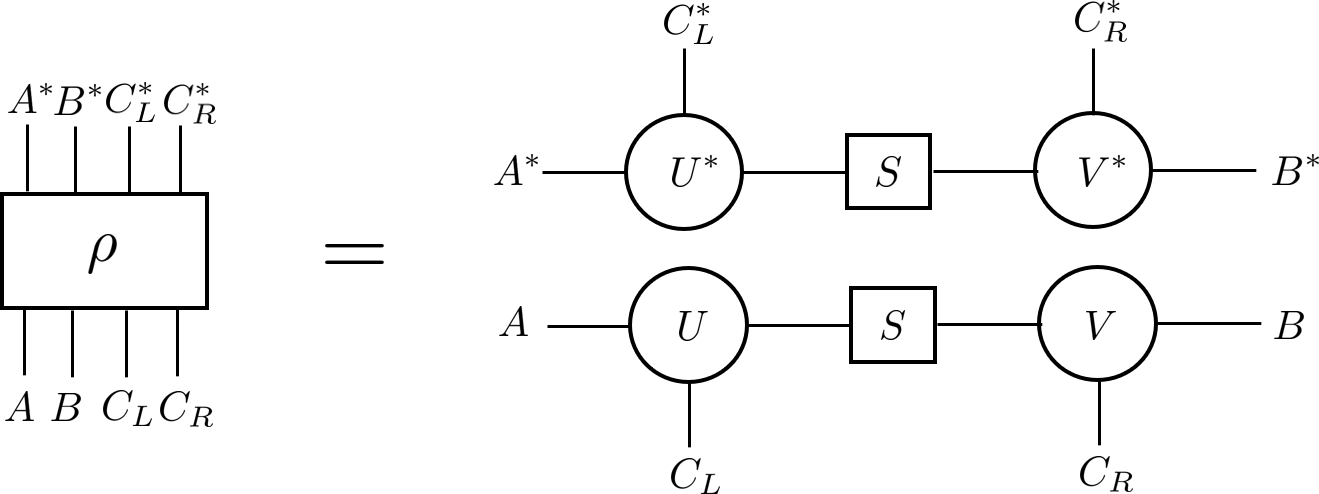}
    \end{minipage}
\\[1ex]
    \begin{minipage}{0.98\linewidth}
        \begin{minipage}[t]{0mm}\hspace{-20mm}\raisebox{37mm}{\subfigure[]{\label{fig:logrhoACL}}}\end{minipage}
        \includegraphics[width=0.6\linewidth]{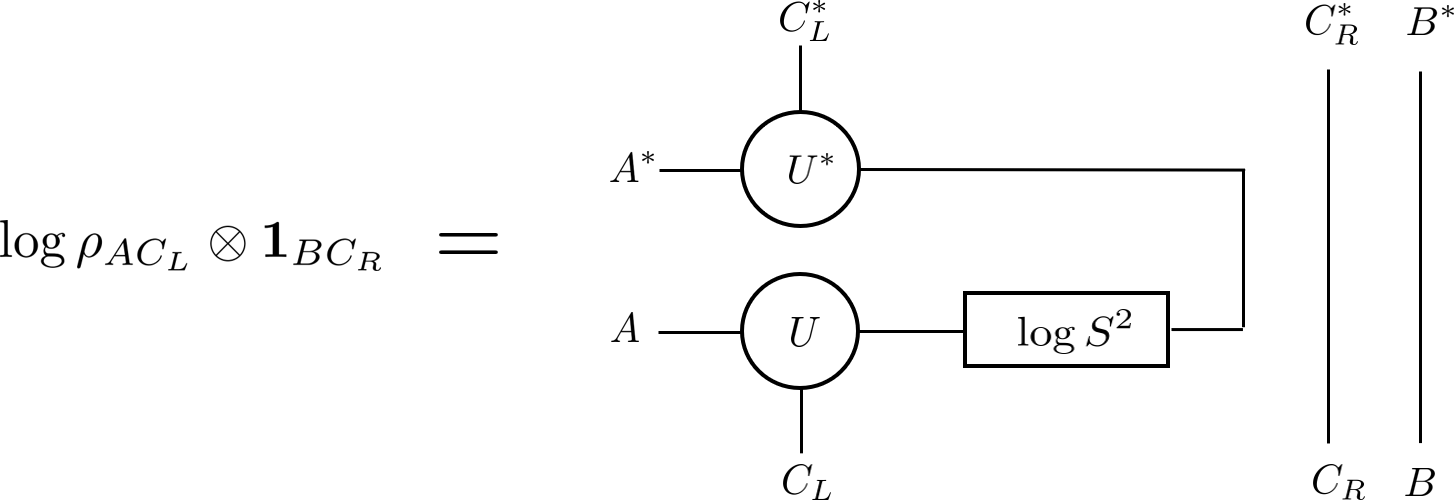}
    \end{minipage}
\\[1ex]
    \begin{minipage}{0.98\linewidth}
        \begin{minipage}[t]{0mm}\hspace{-20mm}\raisebox{33mm}{\subfigure[]{\label{fig:gradient_Ep_MPS}}}\end{minipage}
        \includegraphics[width=0.6\linewidth]{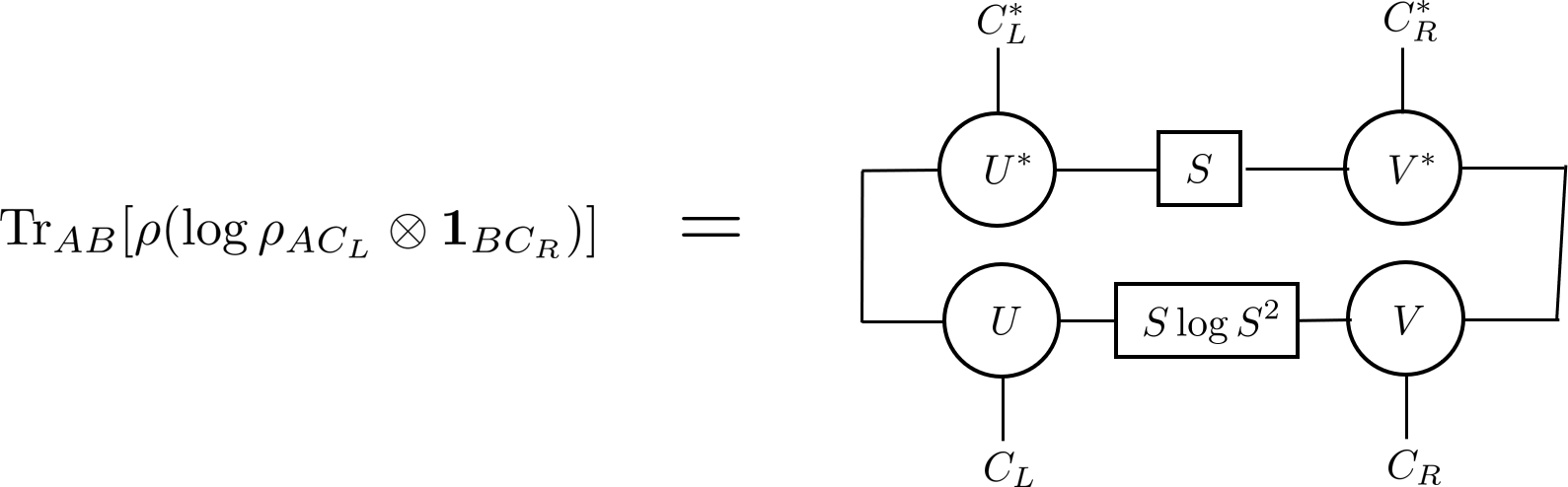}
    \end{minipage}
    \caption{Computation of the gradient Eq.~\eqref{eq:envHC}. (a) Given a bipartition $\calH_{C}=\calH_{C_L}\otimes \calH_{C_R}$, reshape the state $\ket{\psi}$ into a rank 4 tensor. (b) The Schmidt decomposition of the state $\ket{\psi}$ bipartite into $AC_L$ and $BC_R$. (c) The density matrix $\rho$ of the state $\ket{\psi}$.(d) The logarithm of the reduced density matrix $\rho_{AC_L}$. (e) The first term in the square bracket of Eq.~\eqref{eq:envHC_expand}. }
\end{figure}

\section{Different subregion sizes}
First, we argue that $g(A:B)$ and $h(A:B)$ will be independent of the sizes of $A$ and $B$ in the thermodynamic limit. In the thermodynamic limit, the quantum spin chain is described by a 1+1D conformal field theory (CFT). As shown in Refs.~\cite{Nguyen2018,Dutta2019}, the UV divergences in $2E_P(A:B)$, $S_R(A:B)$, and $I(A:B)$ are of the same form -- they scale with the UV cutoff $\Lambda$ as
\begin{equation}
    2E_P(A:B), S_R(A:B), I(A:B) \sim \frac{c^{\CFT}}{3}\log \Lambda.
\end{equation}
The UV divergences in the quantities $g(A:B)\equiv 2E_P(A:B)-I(A:B)$ and $h(A:B)\equiv S_R(A:B)-I(A:B)$ should therefore cancel, making them scale-invariant. In a conformal field theory a scale-invariant quantity is also conformally invariant. In 1+1D, a change in the length of the regions can be implemented by conformal transformations, which includes rescaling the space with arbitrary local weights. Therefore, in a CFT, $g(A:B)$ and $h(A:B)$ do not depend on the sizes of $A$ and $B$. We then expect that on the lattice, $g(A:B)$ and $h(A:B)$ also do not depend on the sizes of $A$ and $B$, once the thermodynamic limit is taken.

We study how $g(A:B)$ and $h(A:B)$ depend on subregion sizes using the O'Brien-Fendley model at $\lambda=0.3$. It is in the Ising universality class but has larger finite-size effect than the Ising model, and we can see the finite-size corrections in $g(A:B)$ and $h(A:B)$ more easily.  We fix ratios $(r_A,r_B,r_C)=(N_A/N, N_B/N, N_C/N)$ that determine the relative sizes and then take the thermodynamic limit $N\rightarrow\infty$. Results with different ratios are shown in Fig.~\ref{fig:OBF}.
\begin{figure}[t!]
    \begin{minipage}{0.49\linewidth}
        \subfigure[]{\label{fig:OBFgs}}\\[-1ex]
        \includegraphics[width=\linewidth]{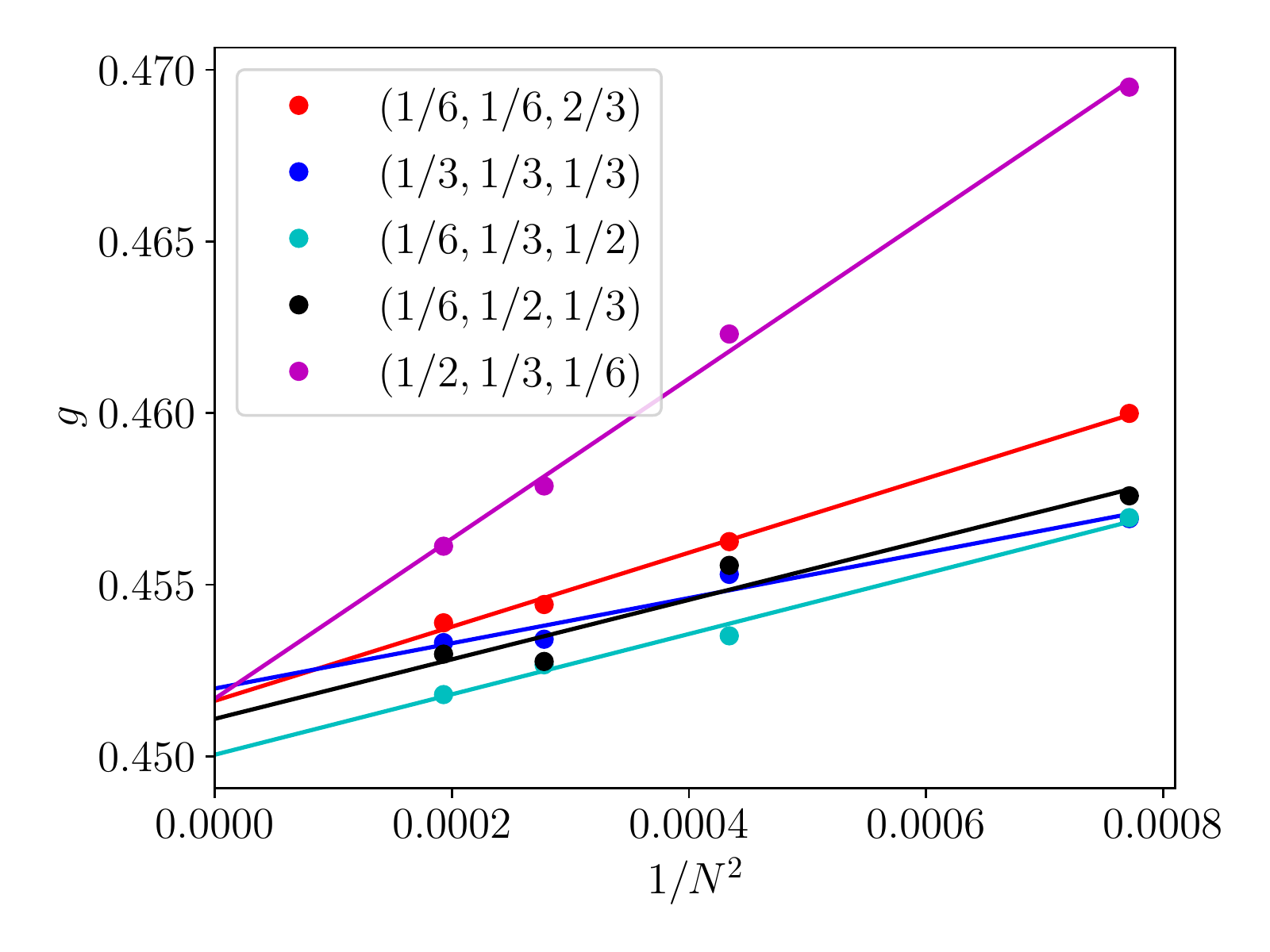}
    \end{minipage}
    \begin{minipage}{0.49\linewidth}
        \subfigure[]{\label{fig:OBFhs}}\\[-1ex]
        \includegraphics[width=\linewidth]{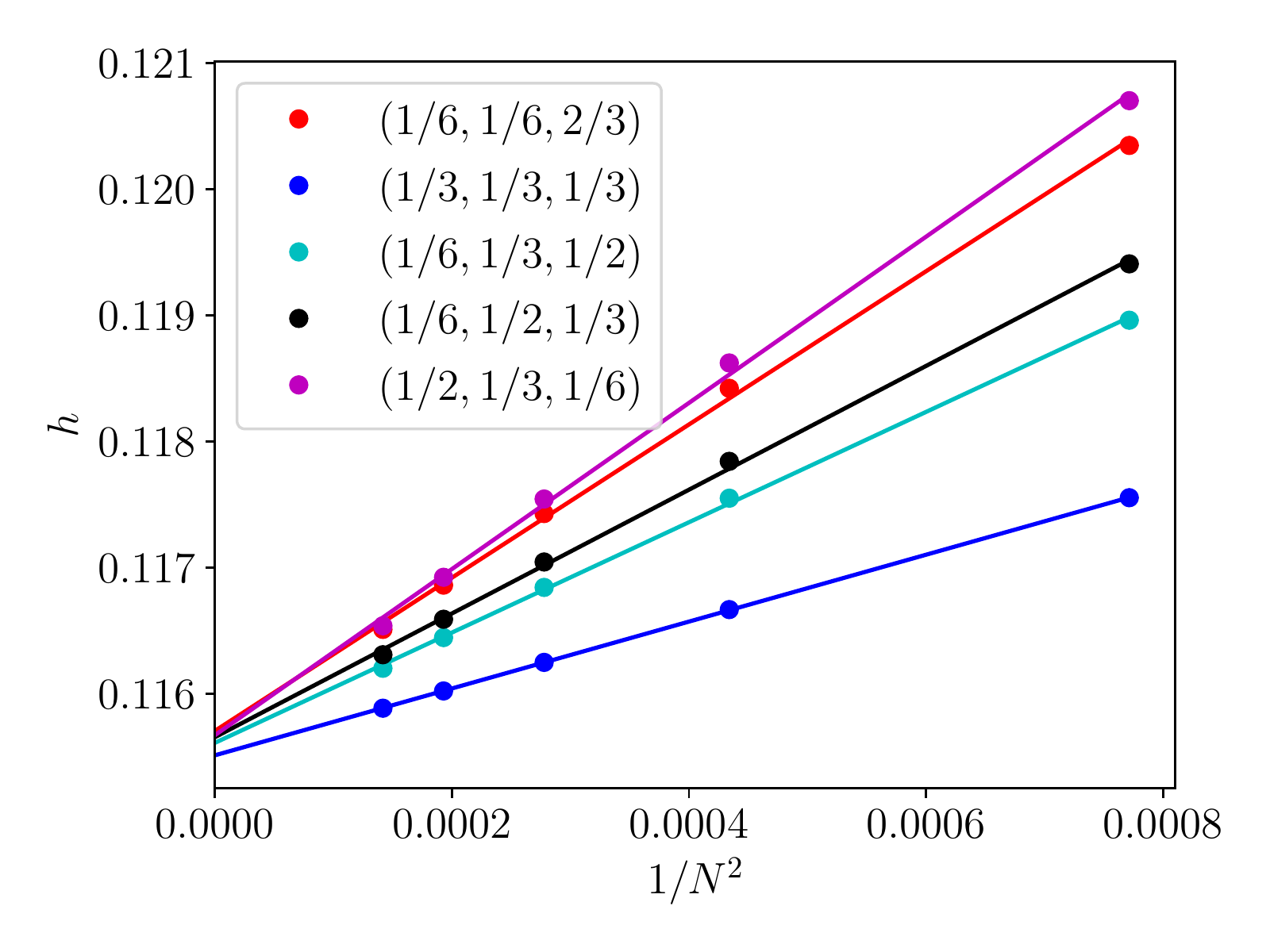}
    \end{minipage}
    \caption{$g(A:B)$ and $h(A:B)$ for the O'Brien-Fendley model~\cite{Obrien_2018} with $\lambda=0.3$ In brackets we show the relative size ratios $(r_A,r_B,r_c)$ of regions $A,B,C$ and we use a sequence of increasing system sizes $36\leq N \leq 84$. We use bond dimensions $18\leq D\leq 26$ and truncated MPS physical dimensions $\tilde{d}_A=\tilde{d}_B=64$ and $\tilde{d}_{C_L}=\tilde{d}_{C_R}=12$. Lines show linear extrapolations of the data points.}
    \label{fig:OBF}
\end{figure}

In Fig.~\ref{fig:OBF}, the intercept of the lines with the vertical axis shows the extrapolation of $g(A:B)$ and $h(A:B)$ to the thermodynamic limit. We see that both $g(A:B)$ and $h(A:B)$ converge to independent values regardless of subregion sizes. The slight differences in the values of extrapolation is caused by linearly fitting the data points with $1/N^2$, whereas the finite-size corrections scale in a more complicated way.

\section{Multipartite entanglement for gapped systems}\label{app:gapped_systems}
In this section, we study $g$ and $h$ for 1D gapped systems in more detail. First and foremost. we assume that the ground state in the thermodynamic limit can be represented as a MPS with finite bond dimension $D$, and that the multipartite entanglement quantities of the ground state can be extracted from the MPS. This assumption is, however, not rigorous proven. Despite tremendous success of infinite MPS algorthms which have been widely used to study general 1D gapped systems, it has only been rigorously proven that local properties can be captured by a MPS with finite bond dimension \cite{Dalzell2019}. Therefore, the argument below should be taken as rigorous only for the gapped systems whose ground state can be exactly represented as a MPS. For a general gapped system, the argument below should be taken as heuristic rather than exact.

\subsection{Fixed-point MPS}
We begin with translation invariant MPS in the thermodynamic limit. Such a state flows toward fixed-point MPS under coarse-graining~\cite{Chen2011, Perez-Garcia2007, Verstraete2005}.  We will show that a fixed-point MPS is a SOTS for any contiguous tripartition. In particular, the MPS is a triangle state if it is injective, which corresponds to no long-range order. Therefore, in the thermodynamic limit the MPS has $h(A:B)=0$, and if the MPS is injective then also $g(A:B)=0$.

We consider a periodic uniform MPS (puMPS) with $N$ sites. Each site has a $d$-dimensional degree of freedom with associated rank-3 tensor $M$ with shape $d\times D \times D$, where $D$ is the bond dimension. We denote by $M_{s_i}$, where $s_i$ indexes the $d$ physical basis states on the $i$-th site, a $D \times D$ matrix. 

In puMPS representation, the many-body ground state may be written in terms of $N$ identical rank-3 tensors $M$:
\begin{align}
    \bigket{\psi(M)} = \sum_{s_1 s_2\cdots s_n}\Tr\big(M_{s_1}M_{s_2}\cdots M_{s_n}\big)\ket{s_1s_2\cdots s_n},
    \label{eq:TI_MPS}
\end{align}
The puMPS representation is invariant under a local similarity transformation $A_i \rightarrow SM_{s_i}S^{-1}$ for all $s_i$ and an invertible $S$. As before, we define a transfer matrix derived from the matrices above
\begin{align}
    T_{\alpha \gamma, \beta \delta} = \sum_s M_{s\alpha \beta} (M_{s\gamma \delta})^*
    \label{eq:T_tensor}
\end{align}
shown graphically in Fig.~\ref{fig:tensor_definitions}. The grouping of the indices indicates that we will treat the four-index tensor formed by the contraction instead as a $D^2 \times D^2$ matrix with legs grouped as  $\alpha\gamma$ and $\beta\delta$. We then denote the product of $n$ adjacent transfer matrices $T^{n}_{\alpha \gamma, \beta \delta}$.

For ground states of gapped spin systems, as $n\rightarrow \infty$, $T^{n}_{\alpha \gamma, \beta \delta}$ approaches a fixed-point transfer matrix $T^{\text{fp}}_{\alpha \gamma, \beta \delta}$. In other words, by coarse-graining more sites, the corresponding transfer matrix converges to a single tensor which represents the renormalization fixed point. Such fixed-point tensors exhibit interesting properties shown in Refs.~\onlinecite{Chen2011} and~\onlinecite{Perez-Garcia2007}. We briefly review those properties and then use them to show that $h(A:B)=0$ for any contiguous tripartition of a MPS in the thermodynamic limit.

Suppose $\ket{\psi}$ is short-range correlated. Correlation functions of observables on two sites separated by $L$ sites must decay to zero as $\exp(-L/\xi)$ where $\xi$ is the correlation length. As observed in Ref.~\onlinecite{Chen2011}, by considering the Jordan normal form of $T_{\alpha \gamma, \beta \delta}$, it can be seen that short-range correlation requires that $T_{\alpha \gamma, \beta \delta}$ must have a non-degenerate largest eigenvalue. By using the similarity transformation, the canonical form introduced in Ref.~\onlinecite{Perez-Garcia2007} can be imposed so that the corresponding right eigenvector of $T_{\alpha \gamma, \beta \delta}$ is $\ket{\Phi_R} = \sum_\alpha \ket{\alpha\alpha}$ and the corresponding left eigenvector is $\ket{\Phi_L} = \sum_\beta \lambda_\beta \ket{\beta\beta}$ where $\sum_\beta \lambda_\beta = 1$. In that case, the fixed-point tensor is given by 
\begin{align}
(T^{\text{fp}})_{\alpha\gamma,\beta\delta} = \ket{\Phi_R}\bra{\Phi_L} = \sum_{\alpha\beta\delta\gamma} \lambda_\beta \delta_{\alpha,\gamma}\delta_{\beta, \delta}
\label{eq:e_infty_zero_g}
\end{align}

In this canonical form, it is easy to read off what the coarse-grained matrices $M^{\text{(fp)}}_{s\alpha\beta}$ defined in Eq.~\eqref{eq:coarse_M} could be by using two physical indices $jL$ and $jR$ instead of just one ($s$):

\begin{align}
M^{\text{(fp)}}_{(jL)(jR)\alpha \beta} &= \delta_{(jL)\alpha} \delta_{(jR)\beta} \sqrt{\lambda_\alpha}
\label{eq:coarse_grained_fp_matrix_SR}
\end{align}

This matrix is shown in graphical tensor notation in Fig.~\ref{fig:fixed_point_M} (a). We may interpret this matrix as follows. The indices $jL$ and $jR$ label basis vectors of the coarse-grained physical Hilbert space $\calH_\text{cg}$ composed of two degrees of freedom $\calH_\text{cg} = \calH_L \otimes \calH_R$ such that $\dim(\calH_\text{cg}) \leq D^2$. Consider a fixed-point MPS composed of matrices $M^{\text{(fp)}}_{(jL)(jR)\alpha \beta}$. By connecting Kronecker deltas of adjacent sites, say sites $k$ and $k^+$, it can be seen that this state is a tensor product of bipartite states shared by $\calH_R$ of site $k$ and $\calH_L$ of site $k^+$ with Schmidt coefficients $\{\sqrt{\lambda_\alpha}\}$. With any contiguous tripartition the fixed-point MPS is clearly a triangle state. Thus $g(A:B)=0$ for short-range correlated MPS in the thermodynamic limit.
\begin{figure}
    \centering
    \includegraphics[width=0.5\linewidth]{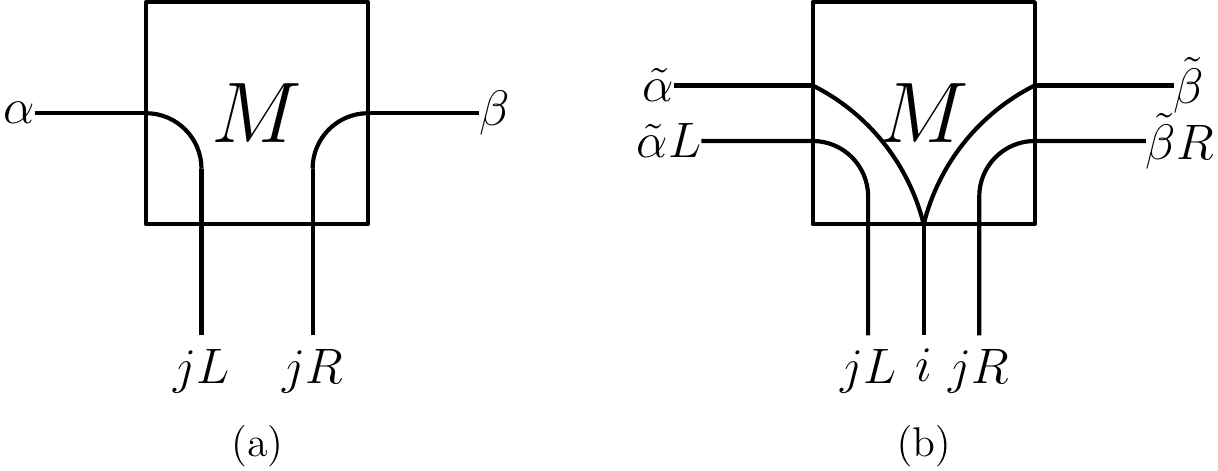}
    \caption{Fixed-point MPS matrices. Solid lines connect indices which are related by a Kronecker delta. (a) $M^{\text{(fp)}}_{(jL)(jR)\alpha \beta}$ for short-range correlated fixed-point states (given in Eq.~\eqref{eq:coarse_grained_fp_matrix_SR}). (b) $M^{\text{(fp)}}_{i(jL)(jR)\tilde{\alpha}(\tilde{\alpha}L)\tilde{\beta}(\tilde{\beta}R)}$ for long-range correlated fixed-point states (given in Eq.~\eqref{eq:coarse_grained_fp_matrix_LR}).}
    \label{fig:fixed_point_M}
\end{figure}

More generally, a long-range correlated state, such as a superposition of macroscopically different ground states, may be described by a transfer matrix with degenerate largest eigenvalue, allowing for some nonzero correlation even at infinite separation. It is shown in Ref.~\onlinecite{Perez-Garcia2007} that by similarity transformations, the $D\times D$ matrices $\{M_{s\alpha \beta}\}$ may be put in block-diagonal form so that each eigenvalue corresponds to an orthogonal subspace of the $D$-dimensional virtual space. Each block may then be put in canonical form as in Eq.~\eqref{eq:e_infty_zero_g}. Suppose the largest eigenvalue is $m$-fold degenerate. The fixed-point transfer matrix then has $m$ blocks satisfying Eq.~\eqref{eq:e_infty_zero_g}. The corresponding generalization of Eq.~\eqref{eq:coarse_grained_fp_matrix_SR} can be seen most easily by introducing a physical index $i$ which indexes the orthogonal subspace and using a pair of labels instead of just one for each virtual index $\alpha \rightarrow (\tilde{\alpha}, \tilde{\alpha}L)$ and $\beta \rightarrow (\tilde{\beta}, \tilde{\beta}R)$:

\begin{align}
M^{\text{(fp)}}_{i(jL)(jR)\tilde{\alpha}(\tilde{\alpha}L)\tilde{\beta}(\tilde{\beta}R)} &= \delta_{i\tilde{\alpha}} \delta_{\tilde{\alpha}\tilde{\beta}} \delta_{(jL)(\tilde{\alpha}L)} \delta_{(jR) (\tilde{\beta} R)} \sqrt{\lambda_{\tilde{\alpha} (\tilde{\alpha}L)}}
\label{eq:coarse_grained_fp_matrix_LR}
\end{align}

For a fixed value of $i=\alpha=\beta$, then, the matrix reduces to the form of Eq.~\eqref{eq:coarse_grained_fp_matrix_SR} and the interpretation as a product of bipartite states holds. We may then interpret the overall matrix in Eq.~\eqref{eq:coarse_grained_fp_matrix_LR} to mean that the many-body state is sum of $m$ states, each described by Eq.~\eqref{eq:coarse_grained_fp_matrix_SR}. A many-body state defined by such matrices is therefore a SOPS. With any contiguous tripartition the fixed-point state is thus a SOTS and $h(A:B)=0$.

In conclusion, we have shown that a translation invariant MPS flows to a fixed-point MPS which is a SOPS in the thermodynamic limit. By Lemma~\ref{lemma:SOP_CG}, such a state has $h(A:B)=0$ for all contiguous tripartitions. If the MPS is short-range correlated, then the fixed-point MPS is further simplified to a triangle state, which has $g(A:B)=h(A:B)=0$ for all contiguous tripartitions.

\subsection{Translation invariant MPS at finite sizes}
One can go further to bound $g(A:B)$ and $h(A:B)$ where the sizes of $A$ and $B$ are taken to be finite. The aim of this section is to show that they are exponentially close (in terms of the lengths of $A$ and $B$) to the fixed-point values. This essentially follows from the continuity of $E_P(A:B)$, $S_R(A:B)$ and $I(A:B)$ with respect to the density matrices. Let $\rho_{AB}$ and $\sigma_{AB}$ be two density matrices on the Hilbert space $\calH_A\otimes \calH_B$, where the dimensions of $\calH_{A/B}$ is $d_{A/B}$. Let $\Delta(\rho_{AB}, \sigma_{AB}) = (1/2)|\rho_{AB}-\sigma_{AB}|$ denote the trace distance between $\rho_{AB}$ and $\sigma_{AB}$ where $|A| = \Tr\sqrt{A^\dagger A}$. Then if $\Delta(\rho_{AB}, \sigma_{AB})  \leq \epsilon$ and for $\epsilon$ sufficiently small we have the three theorems below. The continuity of $g(A:B)=2E_P(A:B)-I(A:B)$ and $h(A:B)=S_R(A:B)-I(A:B)$ is then implied by the theorems. 
\begin{theorem}[Continuity of $E_P$ \cite{Terhal2002}]\label{thm:continuity_ep}
 $|E_P(\rho_{AB})-E_P(\sigma_{AB})|\leq 40 \sqrt{\epsilon} \log d - 4 \sqrt{\epsilon} \log (4 \sqrt{\epsilon}) $, where $d=d_A d_B.$
\end{theorem}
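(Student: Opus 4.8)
The plan is to reproduce the continuity argument of Terhal et al.~\cite{Terhal2002}, whose two essential ingredients are Uhlmann's theorem, which lets us transfer an optimal purification of one density matrix to a nearby purification of the other, and the Fannes--Audenaert inequality, which controls how much the von Neumann entropy can change under a small trace-distance perturbation. The overall strategy is to take a purification that is optimal for $E_P$ of one argument, deform it into a purification of the other argument that is close in Hilbert-space norm, and then bound the difference of the entanglement entropies $S_{AC_L}$ across the two.

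Concretely, suppose without loss of generality that $E_P(\rho_{AB})\le E_P(\sigma_{AB})$; I would bound the difference $E_P(\sigma_{AB})-E_P(\rho_{AB})$ and obtain the reverse inequality by exchanging the roles of $\rho$ and $\sigma$. First I would fix an optimal purification $\ket{\phi_\rho}_{ABC_LC_R}$ of $\rho_{AB}$ realizing $E_P(\rho_{AB})=S_{AC_L}(\ket{\phi_\rho})$, where by the result quoted in the main text~\cite{Ibinson2008} the factors satisfy $\dim\calH_{C_L},\dim\calH_{C_R}\le\rank(\rho_{AB})\le d$. The Fuchs--van de Graaf inequalities turn the hypothesis $\Delta(\rho_{AB},\sigma_{AB})\le\epsilon$ into the fidelity bound $F(\rho_{AB},\sigma_{AB})\ge 1-\epsilon$. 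Uhlmann's theorem then supplies a purification $\ket{\phi_\sigma}_{ABC_LC_R}$ of $\sigma_{AB}$ \emph{on the same purifying space} $\calH_{C_L}\otimes\calH_{C_R}$ with overlap $|\braket{\phi_\rho|\phi_\sigma}|=F\ge 1-\epsilon$.

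Because both states are pure, their trace distance is $\sqrt{1-|\braket{\phi_\rho|\phi_\sigma}|^2}\le\sqrt{2\epsilon}$, and since the partial trace over $BC_R$ cannot increase trace distance, the reduced states on $AC_L$ also obey $\Delta\le\sqrt{2\epsilon}$. Now $\ket{\phi_\sigma}$ is merely \emph{a} purification of $\sigma_{AB}$, so it only upper bounds the minimum, $E_P(\sigma_{AB})\le S_{AC_L}(\ket{\phi_\sigma})$. Combining,
\[
E_P(\sigma_{AB})-E_P(\rho_{AB})\;\le\;S_{AC_L}(\ket{\phi_\sigma})-S_{AC_L}(\ket{\phi_\rho}),
\]
and the right-hand side is exactly the entropy difference the Fannes--Audenaert inequality controls: with $T=\sqrt{2\epsilon}$ and $\dim\calH_{AC_L}\le d_A\,\rank(\rho_{AB})\le d_A d$, one gets a bound of the form $T\log(\dim\calH_{AC_L}-1)+H_2(T)$, where $H_2$ is the binary entropy. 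Substituting $T=\sqrt{2\epsilon}$, bounding the dimension factor by a power of $d$, simplifying the term $H_2(\sqrt{2\epsilon})$ (which supplies the $-\sqrt{\epsilon}\log(\sqrt{\epsilon})$-type contribution), and collecting numerical constants with slack yields the stated inequality $40\sqrt{\epsilon}\log d-4\sqrt{\epsilon}\log(4\sqrt{\epsilon})$.

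The main obstacle I anticipate is not the entropy estimate but the bookkeeping that makes Uhlmann's theorem directly applicable: I must ensure that $\ket{\phi_\rho}$ and $\ket{\phi_\sigma}$ live in a \emph{common} purifying space carrying a \emph{common} $C_L\,|\,C_R$ split, so that $S_{AC_L}$ is well defined and comparable for both. Since $\rank(\sigma_{AB})$ need not equal $\rank(\rho_{AB})$, I would first embed both purifications into a fixed space of dimension at least $\max(\rank\rho_{AB},\rank\sigma_{AB})$ per factor before invoking Uhlmann, checking that this enlargement leaves $E_P$ and the entropies unchanged. The only other point requiring care is tracking precisely which dimension enters the Fannes--Audenaert inequality and how the hypothesis ``$\epsilon$ sufficiently small'' keeps $T$ in the regime where the crude constant $40$ holds.
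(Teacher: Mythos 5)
Your proposal is correct and follows essentially the same route as the source: the paper itself does not prove this theorem but imports it verbatim from Terhal et al.~\cite{Terhal2002}, whose argument is exactly yours---Uhlmann's theorem to transport the optimal purification of one state to a nearby purification of the other on a common $C_LC_R$ space (with the dimension bookkeeping via \cite{Ibinson2008} that you flag), contractivity of trace distance under partial trace, and a Fannes-type entropy bound on $S_{AC_L}$. The stated constants $40$ and $4$ simply absorb, with slack, the dimension factor $\dim(\calH_A\otimes\calH_{C_L})\leq d_A d\leq d^2$ and the $-T\log T$ term in the regime of sufficiently small $\epsilon$, just as your bookkeeping indicates.
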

\begin{theorem}[Continuity of $S_R$ \cite{Akers2020}]\label{thm:continuity_sr}
 $|S_R(\rho_{AB})-S_R(\sigma_{AB})|\leq 4 \sqrt{2\epsilon} \log(\min\{d_A.d_B\})-2\sqrt{2\epsilon}\log \epsilon $.
\end{theorem}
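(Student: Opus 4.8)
The plan is to reduce the continuity of $S_R$ to the continuity of the ordinary von Neumann entropy, and to absorb the one genuinely new feature---the operator square root hidden inside the canonical purification---into a single application of the Powers--St\o{}rmer inequality. First I would write $S_R(\rho_{AB})=S(\rho_{AA^*})$ and $S_R(\sigma_{AB})=S(\sigma_{AA^*})$, where $\rho_{AA^*}=\Tr_{BB^*}\ket{\sqrt{\rho_{AB}}}\bra{\sqrt{\rho_{AB}}}$ is the reduced density matrix of the canonical purification and similarly for $\sigma$. The Fannes--Audenaert inequality then bounds $|S(\rho_{AA^*})-S(\sigma_{AA^*})|$ by $T\log(D-1)+H_2(T)$, where $T=\Delta(\rho_{AA^*},\sigma_{AA^*})$ is the trace distance, $H_2$ is the binary entropy, and $D$ is the dimension of the support. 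So the whole task reduces to (i) bounding $T$ in terms of $\epsilon$ and (ii) identifying $D$.

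The step that makes the trace-distance estimate clean is that the two canonical purifications are honest pure states living in the same fixed Hilbert space $\calH_A\otimes\calH_B\otimes\calHconj_A\otimes\calHconj_B$, and their overlap needs no Uhlmann optimization: $\langle\sqrt{\rho_{AB}}|\sqrt{\sigma_{AB}}\rangle=\Tr\big(\sqrt{\rho_{AB}}\sqrt{\sigma_{AB}}\big)$, using Hermiticity of the roots. Since the trace distance of pure states is $\sqrt{1-|\langle\cdot|\cdot\rangle|^2}$ and trace distance is monotone under the partial trace $\Tr_{BB^*}$, I get
\begin{equation}
\Delta(\rho_{AA^*},\sigma_{AA^*})\le\sqrt{1-\big(\Tr\sqrt{\rho_{AB}}\sqrt{\sigma_{AB}}\big)^2}.
\end{equation}
Expanding $\|\sqrt{\rho_{AB}}-\sqrt{\sigma_{AB}}\|_2^2=2-2\Tr(\sqrt{\rho_{AB}}\sqrt{\sigma_{AB}})$ converts the overlap into a Hilbert--Schmidt distance of the roots, which is exactly what Powers--St\o{}rmer controls: $\|\sqrt{\rho_{AB}}-\sqrt{\sigma_{AB}}\|_2^2\le\|\rho_{AB}-\sigma_{AB}\|_1\le 2\epsilon$. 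Hence $\Tr\sqrt{\rho_{AB}}\sqrt{\sigma_{AB}}\ge 1-\epsilon$ and $\Delta(\rho_{AA^*},\sigma_{AA^*})\le\sqrt{2\epsilon}$, up to the numerical constants quoted in the statement.

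The main obstacle is precisely this square-root step, and it is structural rather than technical: because $x\mapsto\sqrt{x}$ is only H\"older-$\tfrac12$ continuous, no bound linear in $\epsilon$ is possible, and Powers--St\o{}rmer is what extracts the optimal $\sqrt{\epsilon}$ scaling with the right exponent. The remaining subtlety is the dimension $D$: although $\rho_{AA^*}$ a priori lives in $\calH_A\otimes\calHconj_A$ of dimension $d_A^2$, its rank is bounded by the Schmidt rank of the canonical purification across the cut $AA^*|BB^*$, namely $\min\{d_A,d_B\}^2$. Taking $D$ to be (at most twice) the common support dimension of $\rho_{AA^*}$ and $\sigma_{AA^*}$ then produces the $\log\min\{d_A,d_B\}$ in the bound, while $H_2(\sqrt{2\epsilon})$ supplies the $-\sqrt{2\epsilon}\log\epsilon$ term. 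Assembling (i) and (ii) in Fannes--Audenaert yields the claim; any mismatch between my estimate and the stated prefactors is attributable to the looser dimension counting and to the slack in the pure-state-to-reduced-state trace-distance chain.
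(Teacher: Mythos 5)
Your proof is correct and is essentially the argument of the cited source: the chain---overlap of canonical purifications $\langle\sqrt{\rho_{AB}}|\sqrt{\sigma_{AB}}\rangle=\Tr\big(\sqrt{\rho_{AB}}\sqrt{\sigma_{AB}}\big)$, Powers--St\o rmer giving $\Tr\big(\sqrt{\rho_{AB}}\sqrt{\sigma_{AB}}\big)\geq 1-\epsilon$, monotonicity of trace distance under $\Tr_{B\bar{B}}$ to get $\Delta(\rho_{A\bar{A}},\sigma_{A\bar{A}})\leq\sqrt{2\epsilon}$, then Fannes--Audenaert with $\rank\rho_{A\bar{A}}\leq(\min\{d_A,d_B\})^2$---yields the stated bound for sufficiently small $\epsilon$, indeed with slightly sharper constants than quoted. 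Note that the present paper does not prove this theorem but imports it from Ref.~\cite{Akers2020}, whose proof follows the same route, so the only discrepancy is in the prefactors, which you correctly identify as inessential slack.
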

\begin{theorem}[Continuity of Mutual Information \cite{Terhal2002}]\label{thm:continuity_I} $|I(\rho_{AB})-I(\sigma_{AB})|\leq 3 \epsilon \log d  - 3 \epsilon \log\epsilon$, where $d=d_A d_B$.
\end{theorem}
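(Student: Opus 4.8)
The statement reduces to three applications of a Fannes-type continuity bound for the von Neumann entropy, so the plan is to reduce mutual information to single-party entropies and control each. Writing $I(\rho_{AB}) = S(\rho_A) + S(\rho_B) - S(\rho_{AB})$ and using the triangle inequality,
\[
|I(\rho_{AB}) - I(\sigma_{AB})| \leq |S(\rho_A) - S(\sigma_A)| + |S(\rho_B) - S(\sigma_B)| + |S(\rho_{AB}) - S(\sigma_{AB})|,
\]
so it suffices to bound each entropy difference separately.

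First I would invoke monotonicity of the trace distance under the partial trace, which is a completely positive trace-preserving map: since $\Delta(\rho_{AB},\sigma_{AB}) \leq \epsilon$, tracing out $B$ (respectively $A$) gives $\Delta(\rho_A,\sigma_A) \leq \epsilon$ and $\Delta(\rho_B,\sigma_B) \leq \epsilon$. Hence all three pairs of reduced states are $\epsilon$-close in trace distance, and the single-body problem inherits the same smallness hypothesis as the joint one.

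Next, to each pair I would apply the Fannes inequality in the form $|S(\rho) - S(\sigma)| \leq \epsilon \log D - \epsilon \log \epsilon$, valid for trace distance $\epsilon$ sufficiently small, where $D$ is the dimension of the Hilbert space on which the states live. The joint entropy lives in dimension $d = d_A d_B$, while the marginals live in $d_A, d_B \leq d$. Upper-bounding $\log d_A, \log d_B \leq \log d$ and noting that the $-\epsilon\log\epsilon$ term is dimension-independent, each of the three terms is at most $\epsilon\log d - \epsilon\log\epsilon$; summing the three gives the claimed $3\epsilon\log d - 3\epsilon\log\epsilon$.

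The main obstacle is simply pinning down the precise form and regime of validity of the Fannes bound invoked. A sharper application using the individual marginal dimensions would yield $\log d_A + \log d_B + \log(d_A d_B) = 2\log d$, i.e.\ the tighter constant $2\epsilon\log d$; reproducing exactly the stated inequality requires the deliberately looser step of replacing every dimension by $d = d_A d_B$. One should also confirm that the ``$\epsilon$ sufficiently small'' caveat in the theorem is precisely the hypothesis under which the Fannes estimate $|S(\rho)-S(\sigma)| \leq \epsilon\log D - \epsilon\log\epsilon$ holds. Beyond this bookkeeping no deeper difficulty arises, since each ingredient (triangle inequality, data-processing for trace distance, and Fannes continuity of entropy) is standard.
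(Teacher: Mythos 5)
Your proof is correct and is essentially the argument behind the quoted result: the paper does not prove this theorem itself (it cites Terhal et al.), and the proof there is exactly your route---split $I$ into three entropy differences via the triangle inequality, use monotonicity of trace distance under partial trace, and apply Fannes' inequality with every dimension loosely replaced by $d=d_A d_B$, which is precisely where the constant $3$ (rather than your sharper $2$) comes from. The one bookkeeping point, which you flagged yourself, is the normalization convention: the Fannes form $|S(\rho)-S(\sigma)|\leq \epsilon\log D - \epsilon\log\epsilon$ holds when $\epsilon$ bounds the trace \emph{norm} $\|\rho-\sigma\|_1\leq 1/e$, matching Terhal et al.'s convention, whereas the paper's definition $\Delta=\tfrac{1}{2}|\rho-\sigma|$ would introduce factors of $2$ in the bound.
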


To be more concrete, we consider a puMPS of $N$ sites and bond dimension $D$ and take each of the three regions $A, B, C$ to be of size $N/3$. We use the coarse-graining of the MPS to make a puMPS on three sites (Eq.~\eqref{eq:truncated_MPS}), where each site represents the coarse-grained Hilbert space of $A,B$ and $C$. Note that here  no truncation on the physical Hilbert space is used, so the physical dimension of each site is $d_A = d_B = d_C = D^2$ and the three tensors in Eq.~\eqref{eq:truncated_MPS} are the same. The puMPS on three sites are related to the original state by local isometries, so the coarse-graining itself does not change any entanglement properties, including $E_P(A:B), S_R(A:B)$ and $I(A:B)$. We now show that the coarse-grained MPS is exponentially close in $N$ to the fixed-point MPS on three sites, so by continuity of $g(A:B)$  and $h(A:B)$, they are also exponentially close to the their fixed-point values ($g^{\text{fp}}(A:B)=0$ for injective MPS and $h^{\text{fp}}(A:B)=0$ regardless of injectivity).

First, we assume the MPS is injective and derive a bound on $g(A:B)$. The transfer matrix has a unique eigenvalue $1$. Denote the second-largest eigenvalue as $\lambda_2<1$. The correlation length is then $\xi = -1/\log \lambda_2$. The injectivity of the MPS is then equivalent to finite correlation length. The transfer matrix has an eigenvalue decomposition
\begin{equation}
T_{\alpha\gamma, \beta\delta} =T^{\text{fp}}+ \lambda_2 r_{\alpha\gamma} l_{\beta\delta} + \cdots
\end{equation}
where $r$ and $l$ are the right/left eigenvectors of the eigenvalue $\lambda_2$ and $\cdots$ denotes contributions of smaller eigenvalues. Taking large powers of $T_{\alpha\gamma, \beta\delta}$, the $\cdots$ term vanishes faster than the second term, so we will drop the dots. Then we have
\begin{equation}
T^{N/3}_{\alpha\gamma, \beta\delta} =T^{\text{fp}}+ e^{-N/(3\xi)} r_{\alpha\gamma} l_{\beta\delta}.
\end{equation} 
We will measure the difference in terms of the norm $||A_{abc...}|| = \sqrt{A_{abc...}A^{*}_{abc...}}$, where repeated indices are summed. Then
\begin{equation}
\label{eq:TNexpansion}
||T^{N/3}_{\alpha\gamma, \beta\delta}- T^{\text{fp}}|| = e^{-N/(3\xi)} \sqrt{\Tr(r^{\dagger} r) \Tr(l^{\dagger} l)},
\end{equation}
which decays exponentially with system size $N$. Denote the coarse-grained tensor on $A,B,C$ as $M$, then
\begin{align}
\label{eq:Mexpansion}
    T^{N/3}_{\alpha \gamma, \beta \delta} = \sum_s M_{s\alpha \beta} (M_{s\gamma \delta})^*.
\end{align}
Recall that $M$ can be obtained by an eigenvalue decomposition Eq.~\eqref{eq:TMn_MPS} and \eqref{eq:coarse_grain_eigv}. One can use the Rayleigh-Schrodinger perturbation theory to derive the difference between $M$ and $M^{\text{fp}}$. Notice that the differences in the eigenvalues and eigenvectors are of order $e^{-N/(3\xi)}$, and the combination Eq.~\eqref{eq:coarse_grain_eigv} at most change on the order of $e^{-N/(6\xi)}$ because of the square root in the eigenvalues.
%The difference can be bounded using Eqs.~\eqref{eq:TNexpansion}, and the final expression is
%\begin{equation}
%||M - M^{\text{fp}}|| \leq D \sqrt{Be^{-N/(3\xi)}\left(1+\frac{B e^{-N/(3\xi)}}{\Delta^2}\right)},
%\end{equation}
%where $\Delta$ is the smallest gap of the Schmidt spectrum $\{\sqrt{\lambda_\alpha}\}$ (excluding degeneracy). At large sizes, the second term in the bracket is negligible, then
Then at large sizes
\begin{equation}
||M - M^{\text{fp}}|| \leq O(1) \cdot D e^{-N/(6\xi)},
\end{equation}
Let $\ket{\psi_N}$ be a puMPS with tensor $M$ on 3 sites, and $\ket{\psi^{\text{fp}}}$ be a puMPS with tensor $M^{\text{fp}}$ on 3 sites, then
\begin{equation}
\Delta(\ket{\psi_N}\bra{\psi_N}, \ket{\psi^{\text{fp}}}\bra{\psi^{\text{fp}}}) \leq  O(1) \cdot D e^{-N/(6\xi)}.
\end{equation}
Finally, let $\rho_{AB}=\Tr_C |\psi_N\rangle\langle \psi_N|$ and $\sigma_{AB}=\Tr_C|\psi^{\text{fp}}\rangle\langle \psi^{\text{fp}}|$. Since the trace distance is monotonic under tracing out a subsystem, 
\begin{equation}
\Delta(\rho_{AB},\sigma_{AB}) \leq O(1)\cdot D e^{-N/(6\xi)}. 
\end{equation}
Finally we can derive a bound on $g(A:B)$ and $h(A:B)$,
\begin{equation}
g(A:B)\leq 2|E_P(\rho_{AB})-E_P(\sigma_{AB})|+|I(\rho_{AB})-I(\sigma_{AB})|
\end{equation}
and 
\begin{equation}
h(A:B)\leq |S_R(\rho_{AB})-S_R(\sigma_{AB})|+|I(\rho_{AB})-I(\sigma_{AB})|.
\end{equation}
Upon using the continuity theorems \ref{thm:continuity_ep}, \ref{thm:continuity_sr}, and \ref{thm:continuity_I}, we see that both $g(A:B)$ and $h(A:B)$ are upper bounded by an exponentially decaying quantity.

For a general MPS we can again decompose it into a sum of superselection sectors which are locally orthogonal. The coarse-graining transformation acts on each of the superselection sectors separately. At finite sizes the state is coarse-grained into a SOTS with an expoentially small correction and therefore $h(A:B)$ is upper bounded by an exponentially small quantity.
\subsection{MPS without translation invariance}
The requirement of translation invariance above is not essential. As noted in Refs. \onlinecite{Chen2011, Perez-Garcia2007}, the coarse graining can be done in a similar way for MPS without translation invariance. Here we briefly review how this is done. Denote the tensor on site $k$ as $M^{(k)}$ and the corresponding transfer matrix as $T^{(k)}$. It has been shown in Ref. \onlinecite{Perez-Garcia2007} that an injective MPS can be put into the central canonical form, where the (unique) dominant eigenvalue of $T^{(k)}$ is $1$ and the corresponding left/right eigenvectors are $\Lambda^{(k)}_{\alpha\gamma} = \sqrt{\lambda^{(k)}_{\alpha}}\delta_{\alpha\gamma}$ and $\Lambda^{(k+1)}_{\beta\delta} = \sqrt{\lambda^{(k+1)}_{\beta}}\delta_{\beta\delta}$, where $\sum_{\alpha}\lambda^{(k)}_{\alpha}=1$. Note that the left dominant eigenvector of $T^{(k+1)}$ is the same as the right dominant eigenvector of $T^{(k)}$. The transfer matrices have the eigenvalue decomposition
\begin{equation}
T^{(k)}_{\alpha\gamma, \beta\delta} = \Lambda^{(k+1)}_{\beta\delta} \Lambda^{(k)}_{\alpha\gamma} + \lambda^{(k)}_2 r^{(k)}_{\beta\delta} l^{(k)}_{\alpha\gamma} + \cdots.
\end{equation}
We assume that the transfer matrix has a finite gap, $\lambda^{(2)}_k<1-\epsilon', \forall k$ for some $\epsilon'>0$. This is equivalent to exponentially decaying correlation functions typical in gapped systems. The coarse-graining amounts to multiplying the transfer matrices in an interval together. If the interval is long enough, then the only remaining part is the multiplication of the first term in the expansion. This gives the coarse-grained tensor on the left of Fig.~\ref{fig:fixed_point_M}. The resulting state is then a triangle state. Similarly, in the case of a generic MPS one can decompose it into a sum of injective MPS and then apply the coarse-graining to give a SOTS.
\end{document}